\documentclass[11pt]{article}

\usepackage{geometry,showlabels}
\geometry{margin=1in}
\usepackage{xr-hyper}
\usepackage{times,enumitem}
\usepackage[colorlinks = true,linkcolor =blue,citecolor = blue, urlcolor = blue]{hyperref}
\usepackage{url}

\usepackage{bigints}
\usepackage{amsfonts,amsmath,amssymb,amsthm}
\usepackage{wrapfig}
\usepackage{verbatim,float,url}
\usepackage{graphicx,psfrag}
\usepackage{subcaption}
\usepackage[round]{natbib}   
\bibliographystyle{plainnat}
\setlength{\bibsep}{0.0pt}
\usepackage{cancel}
\usepackage{color}

\usepackage{mathtools}

\newtheorem{theorem}{Theorem}
\newtheorem{lemma}[theorem]{Lemma}

\usepackage{mdwlist}	
\theoremstyle{remark}

\theoremstyle{definition}
\newtheorem{definition}{Definition}
\newtheorem{example}{Example}

\usepackage{tikz}  
\usetikzlibrary{trees}

\graphicspath{{figs/}}


\usepackage{mathtools}
\DeclarePairedDelimiter\ceil{\lceil}{\rceil}
\DeclarePairedDelimiter\floor{\lfloor}{\rfloor}

\theoremstyle{plain}
\newtheorem{assumption}{Assumption}

\begin{document}

\title{Adaptive Non-Parametric Regression \\ With the $K$-NN Fused Lasso}

\author{
	Oscar Hernan Madrid Padilla$^{1}$\\
	{\tt omadrid@berkeley.edu}\\
	\and
	James Sharpnack$^{2}$ \\
	{\tt  jsharpna@ucdavis.edu} \\
	\and
	Yanzhen Chen $^{5}$ \\
	{\tt  imyanzhen@ust.hk} \\
	\and
	Daniela Witten$^{3}$ $^{4}$ \\
	{\tt  dwitten@uw.edu} \\
	\begin{tabular}{c}
		$^{1}$ Department of Statistics, UCLA\\
		$^{2}$ Department of Statistics, UC Davis\\
		$^{3}$ Department of Statistics, University of Washington\\
		$^{4}$ Department of Biostatistics, University of Washington\\ 
		$^{5}$  Department of  ISOM, Hong Kong University of Science and Technology
	\end{tabular}
}

\date{\today}

\maketitle

\begin{abstract}
The fused lasso, also known as total-variation denoising, is a locally-adaptive function estimator over a regular grid of design points.
In this paper, we extend the fused lasso to settings in which the points do not occur on a regular grid, leading to an approach for non-parametric regression. This approach, which we call the \emph{$K$-nearest neighbors ($K$-NN) fused lasso}, involves (i) computing the $K$-NN graph of the design points; and (ii) performing the fused lasso over this $K$-NN graph. We show that this procedure has a number of theoretical advantages over competing approaches: specifically, it inherits \emph{local adaptivity} from its connection to the fused lasso, and  it inherits \emph{manifold adaptivity} from its connection to the $K$-NN approach.  We show that excellent results are obtained in a simulation study  and on an application to flu data.  
For completeness, we also study an estimator that makes use of an $\epsilon$-graph rather than a $K$-NN graph, and contrast this with the $K$-NN fused lasso. 
\end{abstract}
\quad\quad\;\;\textbf{Keywords: non-parametric regression, local adaptivity, manifold adaptivity, total variation, fused lasso}


\section{Introduction}
\label{sec:introduction}

In this paper, we consider the non-parametric regression setting in which we have $n$ observations, $(x_1, y_1), \ldots,$ $(x_n, y_n)$, of the pair of random variables $(X,Y) \in \mathcal{X}\times \mathbb{R}$,  where $\mathcal{X}$ is a metric space with metric $d_{\mathcal{X}}$. 
We suppose that the model 
\begin{equation}
\label{eq:model}
y_i    =    f_0(x_i) \,\,+\,\,\varepsilon_i \,\,\,\,\,\,\,(i = 1,\ldots,n)
\end{equation}
holds, where  $f_0 \,:\,  \mathcal{X} \,\rightarrow \,\mathbb{R} $   is an unknown function  that we wish to estimate.  This problem  arises in many settings, including demographic applications \citep{petersen2016convex,sadhanala2017additive}, environmental data analysis \citep{hengl2007regression},  image processing \citep{rudin1992nonlinear}, and causal inference \citep{wager2017estimation}.   

 A substantial body of work has considered estimating the function $f_0$ in \eqref{eq:model} at the observations $X=x_1,\ldots,x_n$ (i.e.~denoising) as well as at other values of the random variable $X$ (i.e.~prediction).  
This includes  seminal papers by \cite{breiman1984classification},  \cite{duchon1977splines}, and  \cite{friedman1991multivariate}, as well as   more recent work by \cite{petersen2016fused}, \cite{petersen2016convex}, and \cite{sadhanala2017additive}. A number of previous papers  have focused in particular on manifold adaptivity, i.e, adapting to the dimensionality of the data; these include   work on local polynomial regression by \cite{bickel2007local} and \cite{cheng2013local}, $K$-NN  regression by \cite{kpotufe2011k},  Gaussian processes by \cite{yang2015minimax} and \cite{yang2016bayesian}, and  tree--based  estimators   such as those in \cite{kpotufe2009escaping} and \cite{kpotufe2012tree}. 
We refer the reader to \cite{gyorfi2006distribution} for a very detailed survey of other classical non-parametric regression methods. The vast  majority of this work performs well in  function classes with variation controlled uniformly throughout the domain, such as Lipschitz and $L_2$ Sobolev classes. \cite{donoho1998minimax} and \cite{hardle2012wavelets} generalize this setting by considering  functions of bounded  variation and  Besov classes. In this work, we focus on piecewise Lipschitz  and bounded variation functions, as these classes can have functions  with  non-smooth regions as well as smooth regions \citep{wang2016trend}.

Recently, interest has focused on  so-called \emph{trend filtering} \citep{kim2009ell_1}, which seeks to estimate 
$f_0(\cdot)$  under the assumption that its discrete derivatives are sparse, in a setting in which we have access to an unweighted graph that quantifies the pairwise 
relationships between the $n$  observations.  In particular, the fused lasso, also known as zeroth-order trend filtering or total variation denoising \citep{mammen1997locally,rudin1992nonlinear,tibshirani2005sparsity,wang2016trend}, solves the optimization problem
\begin{equation}
\text{minimize}_{\theta \in \mathbb{R}^n} \left\{  \frac{1}{2}  \sum_{i=1}^n (y_i \,-\, \theta_i)^2\,\,+\,\,\lambda\, \sum_{(i,j) \in E}  | \theta_i - \theta_j | \right\},
\label{eq:fusedlasso}
\end{equation}
where $\lambda$ is a non-negative tuning parameter, and where $(i,j) \in E$ if  and only if there is an edge between the $i$th and $j$th observations in the underlying graph. Then, $ \hat{f}(x_i) = \hat\theta_i $.  
Computational aspects of the fused lasso have been studied extensively in the case of chain graphs \citep{johnson2013dynamic,barbero2014modular,davies2001local} as well as general graphs \citep{chambolle2009total,chambolle2011first,landrieu2016cut,hoefling2010path,tibshirani2011solution,chambolle2009total}. 
Furthermore, the fused lasso is known to have excellent theoretical properties. In one dimension, \citet{mammen1997locally} and \cite{tibshirani2014adaptive} showed that the fused lasso attains  nearly minimax  rates in mean squared error (MSE) for estimating functions of bounded variation. More recently, also in one dimension, \citet{lin2017sharp} and \citet{guntuboyina2017spatial} independently  proved  that the fused lasso is nearly minimax under the assumption that $f_0$ is piecewise constant. In grid  graphs, \cite{hutter2016optimal}, \cite{sadhanala2016total}, and   \cite{sadhanala2017higher}   proved minimax  results  for the fused lasso when estimating signals of interest in applications of image denoising. In more general graph structures, \cite{padilla2016dfs} showed that the fused lasso  is  consistent  for  denoising problems, provided that the underlying signal has total variation along the graph that divided by $n$ goes to  zero. Other graph  models that have been studied in the literature include tree graphs in \cite{padilla2016dfs} and \cite{ortelli2018total}, and star and  Erd\H{o}s-R\'enyi graphs in \cite{hutter2016optimal}. 

In this paper, we extend the utility of the fused lasso approach by combining it with the \emph{$K$-nearest neighbors} ($K$-NN) procedure. 
$K$-NN has been well-studied from a theoretical   \citep{stone1977consistent,chaudhuri2014rates,von2010hitting,alamgir2014density}, methodological 
\citep{dasgupta2012consistency, kontorovich2016active,singh2016analysis,dasgupta2014optimal}, and algorithmic \citep{friedman1977algorithm,dasgupta2013randomized,zhang2012efficient} perspective.  One key feature of $K$-NN methods is that they automatically have a finer resolution in regions with a higher density of design points; this is particularly consequential when the underlying density is highly  non-uniform.  We study the extreme case in which the data are supported over multiple manifolds of mixed intrinsic dimension. An estimator that adapts to this setting is said to achieve {\em manifold adaptivity}. 


In this paper,  we exploit recent theoretical developments in the fused lasso and the $K$-NN procedure in order to obtain a single approach that inherits the advantages of both methods. In greater detail,  we extend the fused lasso to the general non-parametric setting of \eqref{eq:model}, by performing a two-step procedure.
\begin{list}{}
	\item{\emph{Step 1}.} We construct a $K$-nearest-neighbor ($K$-NN) graph, by placing an edge between each observation and the $K$ observations to which it is closest, in terms of  the metric $d_\mathcal{X}$. 
	\vspace{-0.01in}
	\item \emph{Step 2.} We apply the fused lasso to this $K$-NN graph. 
\end{list}
\noindent The resulting \emph{$K$-NN fused lasso} ($K$-NN-FL) estimator appeared  in the context of image processing in \cite{elmoataz2008nonlocal} and \cite{ferradans2014regularized}, and more recently in an application of graph trend filtering in \cite{wang2016trend}. We are the first to study its theoretical properties. 
We also consider a variant obtained by replacing the $K$-NN graph in Step 1 with 
an $\epsilon$-nearest-neighbor ($\epsilon$-NN) graph, which contains an edge between $x_i$ and $x_j$ only if $d_\mathcal{X}(x_i,x_j)<\epsilon$.

The main contributions of this paper are as follows:\\

\vspace{-0.1in}

	 \emph{Local adaptivity.} We show that provided that $f_0$ has bounded variation, along with
	an additional condition that generalizes piecewise  Lipschitz  continuity (Assumption \ref{as:sup}), then 
	the mean squared errors  of both the $K$-NN-FL estimator
	and the $\epsilon$-NN-FL estimator scale like $n^{-1/d}$, ignoring logarithmic factors; here, $d>1$  is the dimension of $\mathcal{X}$.
	In fact, this matches the minimax rate for estimating a two-dimensional Lipschitz function \citep{gyorfi2006distribution}, but over a much wider 
	function class.\\
	
	\vspace{-0.1in}
	
	\emph{Manifold adaptivity.}  
	Suppose that the covariates are i.i.d. samples from a mixture model $\sum_{l=1}^{\ell} \pi^*_l p_l $, where  $p_1,\ldots,p_{\ell}$ are unknown bounded densities, and the weights $\pi^*_l \in [0,1]$  satisfy $\sum_{l=1}^{\ell} \pi^*_l$ = 1. Suppose further that for $l=1,\ldots,\ell$,  the support $\mathcal{X}_l$ of $p_l$  is 
		homeomorphic (see Assumption \ref{as3}) to  $[0,1]^{d_l} \,=\,  [0,1] \times [0,1] \times \ldots \times [0,1]$, where $d_l > 1$ is the intrinsic
		dimension of $\mathcal{X}_l$. We show that under mild conditions, if the restriction of $f_0$ to $\mathcal{X}_l$  is  a function of bounded
		variation, then the  $K$-NN-FL estimator  attains the rate  \smash{ $ \sum_{l=1}^{\ell}  \pi_l^* (\pi_l^* n )^{-1/d_l} $}. We can obtain intuition for this rate by noticing that $\pi_l^* n$ is the expected number of samples from the $l$th component, and hence $(\pi_l^* n)^{-1/d_l}$ is the expected rate for the  $l$th component. Therefore, our rate is the weighted average of the expected rates for the different components.  
	



\section{Methodology}
\label{sec:method}

\subsection{The $K$-Nearest-Neighbor and $\epsilon$-Nearest-Neighbor Fused Lasso}
\label{sec:Construction_of_approach}
Both the $K$-NN-FL and $\epsilon$-NN-FL approaches are simple two-step procedures. The first step
 involves constructing a graph on the $n$ observations. 
The $K$-NN graph, $G_K=(V, E_K)$, has vertex set  $V \,=\,\{1,\ldots,n\}$, and its edge set $E_{K}$ 
 contains the pair $(i,j)$  if and only if $x_i$ is among  the $K$-nearest neighbors (with respect to the metric $d_{\mathcal{X}}$) of $x_j$, or vice versa.  
By contrast, the $\epsilon$-graph, $G_\epsilon=(V, E_\epsilon)$, contains the edge
 $(i,j)$ in $E_\epsilon$ if and only if $d_{\mathcal X}(x_i,x_j) < \epsilon$.

After constructing the graph, we apply the fused lasso to $y=\begin{pmatrix} y_1 \ldots y_n \end{pmatrix}^T$ over the graph $G$ (either $G_K$ or $G_\epsilon$). We can re-write the fused lasso optimization problem  \eqref{eq:fusedlasso} as
\begin{equation}
\label{eqn:fl}
\hat{\theta}  \,\,\,=\,\,\underset{\theta \in \mathbb{R}^n}{\arg \min} \left\{ \,\, \frac 12   \sum_{i=1}^n (y_i - \theta_i)^2  \,\,+\,\, \lambda\,\| \nabla_{G} \theta  \|_1 \right\},    
\end{equation}
where $\lambda >0 $  is a tuning parameter, and $\nabla_{G}$ is an oriented incidence matrix of $G$; each row of $\nabla_G$ corresponds to an edge in $G$. For instance, if the $k$th edge in $G$ connects the $i$th and $j$th observations, then 
\begin{equation*}
(\nabla_G)_{k,l} = \begin{cases}  1 &  \text{ if } l=i \\
 -1 &  \text{ if } l=j \\
0 & \text{ otherwise }
\end{cases},
\end{equation*} 
and so $(\nabla_G \theta)_k \,=\,  \theta_i - \theta_j$. This definition of $\nabla_G$ implicitly assumes an ordering of the nodes and  edges, which may be chosen arbitrarily without loss of generality. 
In this paper, we mostly focus on the setting where  $G = G_K$ is the $K$-NN graph. 
We also include an analysis of the $\epsilon$-graph, which results from using $G=G_\epsilon$, as a point of contrast.

Given the estimator 
$\hat{\theta}$ defined in \eqref{eqn:fl}, we  predict the response at a new observation
 $x  \in \mathcal{X}  \backslash  \{x_1,\ldots,x_n\}$ according to 
\begin{equation}
\label{eqn:interpolation_rule}
\hat{f}(x)\,\,=\,\,   \frac{1}{\sum_{j=1}^n k(x_j,x)}\sum_{i=1}^n \hat \theta_i \, k(x_i,x).
\end{equation}
In the case of $K$-NN-FL, we take $k(x_i,x)=\mathbf 1_{\{ x_i \in \mathcal N_K(x)\}}$, where $\mathcal N_K(x)$ is the set of $K$ nearest neighbors of $x$ in the 
training data. In the case of $\epsilon$-NN-FL, we take $k(x_i,x)=\mathbf 1_{\{ d_{\mathcal X} (x_i,x) < \epsilon\}}$.
(Given a set $A$,  $\boldsymbol{1}_{A}(x)$ is the indicator function that takes on a value of $1$ if $x \in A$, and $0$ otherwise.) Note that for the $\epsilon$-NN-FL estimator, the prediction rule in (\ref{eqn:interpolation_rule})  might not be well-defined if all the training points are farther than $\epsilon$ from $x$. When that is the case, we set $\hat{f}(x)$ to equal the  fitted value of  the nearest training point.

\begin{figure}[t!]
	\begin{center}
		\captionsetup{width=\linewidth}		
		(a) \hspace{80mm} (b) \\
		\includegraphics[width=2.6in,height= 2.6in]{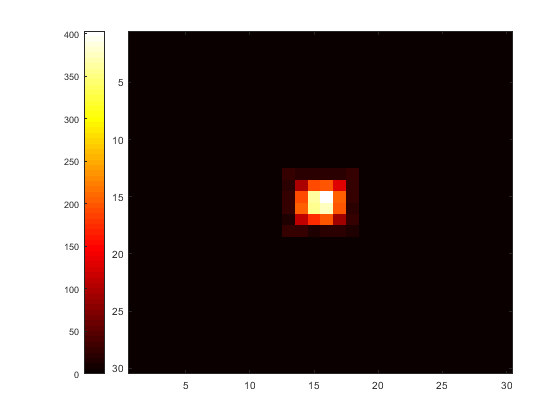}
		\includegraphics[width=2.6in,height= 2.6in]{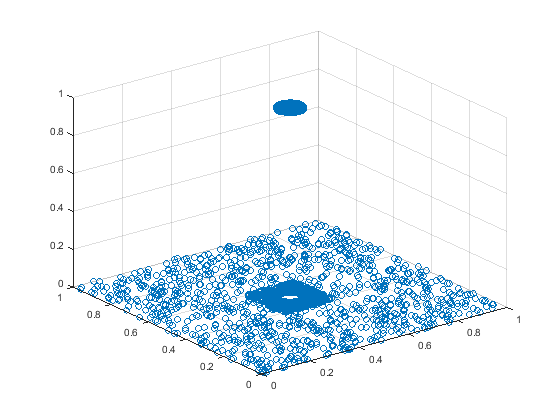}
		\caption{\label{fig:density} \emph{(a):} A heatmap of $n=5000$ draws from \eqref{eqn:form}.  \emph{(b):} $n = 5000$  samples  generated as in
			\eqref{eq:model}, with $\varepsilon_i \overset{\text{ind}}{\sim}  N(0,0.5)$,  $X$ has probability density function as in (\ref{eqn:form}), and $f_0$ is given in  (\ref{eq:f0_example}). The vertical axis  corresponds to $f_0(x_i)$,
			and the other two axes display the two covariates.}
	\end{center}
\end{figure}

We construct  the $K$-NN  and $\epsilon$-NN    graphs  using standard  \verb=Matlab= functions such 
as \verb=knnsearch= and \verb=bsxfun=; this results in a   computational complexity of  $O(n^2)$. 
We solve the fused lasso  with the parametric max-flow  algorithm from \cite{chambolle2009total}, for which  software
 is available from the authors' website, \url{http://www.cmap.polytechnique.fr/~antonin/software/}; it  is in practice much faster than
its  worst-case complexity  of  $O(m\,n^2)$,  where $m$  is the number of edges  in the graph  \citep{boykov2004experimental,chambolle2009total}. 

In $\epsilon$-NN and $K$-NN, the values of $\epsilon$ and $K$ directly affect the sparsity of the graphs, 
and hence the computational performance of the $\epsilon$-NN-FL and $K$-NN-FL estimators. 
Corollary 3.23 in \cite{miller1997separators}  provides  an upper bound
on the maximum degree of arbitrary $K$-NN  graphs in $\mathbb{R}^d$.

\subsection{Example}
\label{sec:example}
To illustrate the main advantages of $K$-NN-FL, we construct a simple example.
We refer to the ability to adapt to the local smoothness of the regression function as local adaptivity, and the ability to adapt to the density of the design points as manifold adaptivity.
The performance gains of $K$-NN-FL are most pronounced when these two effects happen in concert, namely, when the regression function is less smooth where design points are denser.
These properties are manifested in the following example.

We generate  $X \in \mathbb{R}^2$ according to the probability density function
\begin{equation}
	\label{eqn:form}
	 p(x)\,\,=\,\,  \frac{1}{5}\,  \boldsymbol{1}_{\left\{ [0,1]^2   \backslash   [0.4,0.6]^2 \right\}}(x) \,\,+\,\, \frac{16}{25}\, 
 \boldsymbol{1}_{\left\{ [0.45,0.55]^2 \right\} }(x) \,\,+\,\, \frac{4}{25}\,  \boldsymbol{1}_{\left\{ [0.4,0.6]^2 \backslash [0.45,0.55]^2 \right\}}(x).
\end{equation}
Thus, $p$ concentrates 64$\%$ of its mass in the small interval $[0.45,0.55]^2$, and 80$\%$ in $[0.4,0.6]^2$. 
The left-hand panel of  Figure \ref{fig:density} displays a heatmap of  $n =  5000$ observations drawn from \eqref{eqn:form}. 

We define $f_0:\mathbb{R}^2 \rightarrow \mathbb{R}$ in \eqref{eq:model} to be the piecewise constant function 
\begin{equation}
\label{eq:f0_example}
  f_0(x)\,\,=\,\,\     \boldsymbol{1}_{\left\{   \left\| x - \frac{1}{2}(1,1)^T\right \|_2^2   \,\leq \, \frac{2}{1000}      \right \}    }(x).
\end{equation}
We then generate  $\{(x_i,y_i)\}_{i=1}^{n}$ with $n=5000$  from  (\ref{eq:model}); 
 the regression function is displayed in the right-hand panel of Figure~\ref{fig:density}. 
This simulation study has the following characteristics: 
(a) the function $f_0$ in \eqref{eq:f0_example} is not Lipschitz, but does have low total variation, and 
(b) the probability density function $p$ is non-uniform with higher density in the region where $f_0$ is less smooth.

\begin{figure}[t!]
	\begin{center}
		\captionsetup{width=\linewidth}		
		\includegraphics[width=3.0in,height= 2.55in]{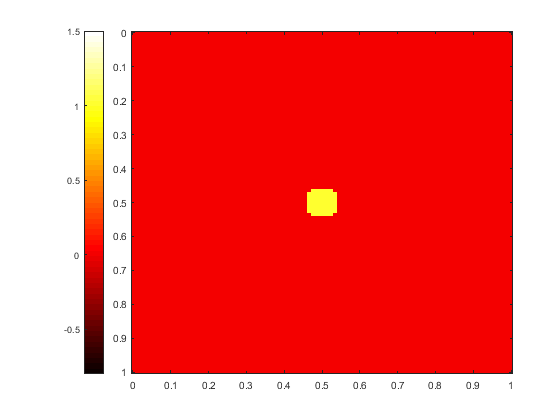}
		\includegraphics[width=2.59in,height= 2.55in]{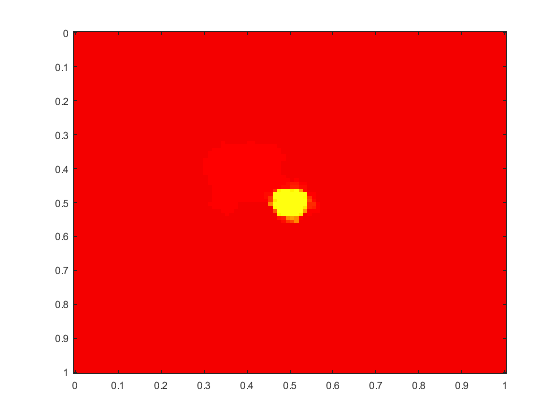}\\
		\hspace{2.3em} 
		\includegraphics[width=2.59in,height= 2.55in]{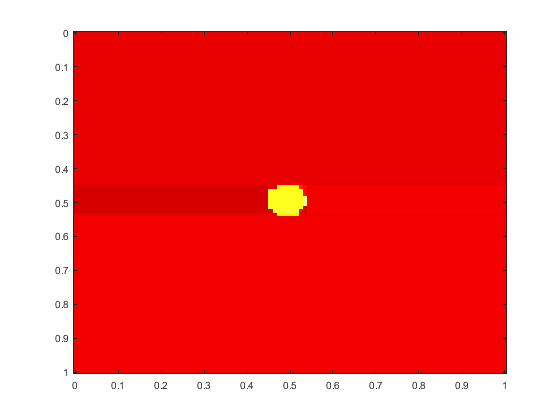}		     	
		\includegraphics[width=2.59in,height= 2.55in]{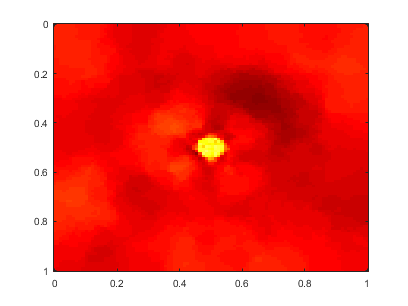}\\
		\caption{\label{fig:scenarios} 
			\emph{Top Left:} The function $f_0$ in \eqref{eq:f0_example}, evaluated at an evenly-spaced  grid of size $100\times100$ in $[0,1]^2$. 
			\emph{Top Right:} The estimate of $f_0$ obtained via $K$-NN-FL. \emph{Bottom Left:} The estimate of $f_0$ obtained via CART. \emph{Bottom Right:} The estimate of $f_0$ obtained via $K$-NN regression. 
		}
	\end{center}
\end{figure}


We compared the following methods in this example:

\begin{enumerate}
\item 
$K$-NN-FL, with the number of neighbors set to $K=5$, and the tuning parameter $\lambda$ chosen to minimize the average MSE over 100 Monte Carlo replicates.
\item 

CART  \citep{breiman1984classification}, with the complexity parameter chosen to minimize the average MSE over 100 Monte Carlo replicates.

\item

 $K$-NN regression   \citep[see e.g.][]{stone1977consistent}, with the number of neighbors $K$ set to minimize the average MSE 
  over 100 Monte Carlo replicates.
\end{enumerate}
\vspace{-0.05in}
The estimated regression functions resulting from these three approaches are displayed in Figure~\ref{fig:scenarios}. 
We see that $K$-NN-FL  can  adapt to low-density and high-density regions of the distribution of covariates, as well as
  to the local structure of the regression function. By contrast, CART displays some artifacts due to the binary splits that make up the decision tree, and $K$-NN regression  undersmoothes in large areas of the domain.

	In practice, we anticipate that $K$-NN-FL will outperform its competitors when the data are highly-concentrated around a low-dimensional manifold and the regression function is non-smooth in that region, as in the above example. In our theoretical analysis, we will consider the special case in which the data lie precisely on a low-dimensional manifold, or a mixture of low-dimensional manifolds.

\section{Local Adaptivity of $K$-NN-FL and $\epsilon$-NN-FL}
\label{sec:local-adaptivity}

\subsection{Assumptions}
\label{sec:compact_support}

We  assume that,  in \eqref{eq:model}, the elements of $\varepsilon \,=\, (\varepsilon_1,\ldots,\varepsilon_n)^T$ are 
independent and identically-distributed  mean-zero sub-Gaussian random variables, 
 \begin{equation}
 \label{eqn:sub_gaussian_errors}
 E(\varepsilon_i) \,=\, 0,\,\,\,\,\,\,\,\,\text{pr}\left( \vert \varepsilon_i\vert \,>\, t  \right)\,\leq \, C\,\exp\left\{-t^2/(2\sigma^2) \right\}, \,\,\,\,i = 1,\ldots, n,  \,\,\,\text{for all}\,\,  t > 0,
 \end{equation}
 for some positive constants $\sigma$  and $C$. Furthermore, we assume that $\varepsilon$ is independent of $X$.
 
 In addition,  for a set $A \subset \mathcal{A}$ with $(\mathcal{A}, d_{\mathcal{A}} )$  a metric space, we write $B_{\epsilon}(A) = \{a \,:\,  \text{exists  }\,\, a^{\prime}  \in A ,\,\,\text{with }\, d_{\mathcal{A}}(a,a^{\prime}) \leq \epsilon  \}$. We let  $\partial A$ denote  
the boundary of the set $A$. Moreover, the MSE of $\hat{\theta}$ is defined as  
 $\|\hat{\theta} - \theta^* \|_n^2 = n^{-1}\sum_{i=1}^n (\hat{\theta}_i - \theta^*_i)^2$. The Euclidean norm of a vector  $x \in \mathbb{R}^d$ is denoted by $\|x\|_2 \,=\,  (x_1^2 + \cdots + x_d^2 )^{1/2}$.  For  $s \in \mathbb{N}$, we set $\boldsymbol{1}_s \,=\,  (1,\ldots,1)^T  \in \mathbb{R}^{s}$.
 In the  covariate  space $\mathcal{X}$, we consider the Borel sigma algebra, $\mathcal{B}(\mathcal{X})$, induced by the metric $d_{\mathcal{X}}$. 
We let $\mu$ be a measure on $\mathcal{B}(\mathcal{X})$. We complement  the model in   \eqref{eq:model}  by  assuming
  that  the  covariates satisfy $x_i  \overset{\text{ind}}{\sim} p(x)$. Thus,  $p$ is the probability  density function  associated  with the distribution of $x_i$, with respect to the measure space $(\mathcal{X},\mathcal{B}(\mathcal{X}),\mu)$.
Note that $\mathcal{X}$ can be a manifold of dimension $d$ in a space of much higher dimension.

We begin by stating assumptions on the distribution of the covariates $p(\cdot)$, and on the metric space $(\mathcal X, d_{\mathcal X})$.  
 In the theoretical results in Section 3  of \cite{gyorfi2006distribution}, 
 it is assumed that $p$  is the probability density function 
of the uniform distribution in $[0,1]^d$. In  this section,  we will require only that $p$ is bounded
   above and below.  This condition appeared in the framework for  studying    $K$-NN graphs  in \cite{von2010hitting}, 
and in the work on density quantization by \cite{alamgir2014density}.

\begin{assumption}
	\label{as1}
	The density $p$  satisfies    $0 \, <\,  p_{\min}   \,<\,  p(x)\,   <\,   p_{\max}$,  for all $x \in \mathcal{X}$, where $p_{\min}, p_{
		\max}  \in \mathbb{R}$.
\end{assumption}

Although we do not  require that $\mathcal{X}$ be a Euclidean space,  we do require that  balls in $\mathcal{X}$  have 
volume (with respect to $\mu$) that behaves similarly to the Lebesgue measure of balls in $\mathbb{R}^d$.  This is  expressed in the next assumption,
 which appeared as part of the definition  of a valid region  (Definition 2) in \cite{von2010hitting}.

\begin{assumption}
	\label{as2}
	The base measure $\mu$  in $\mathcal{X}$ satisfies
	\[
c_{1,d} 	r^{d} \,\,\leq   \mu\left\{ B_{r}(x) \right\} \,\,\leq c_{2,d} r^d,  \,\,\,\,\,\,\,\,\text{for all}\,\,  x  \in   \mathcal{X},       
	\] 
	for all $0 < r < r_0$, where $r_0$, $c_{1,d}$, and $c_{2,d}$  are positive constants, and $d \in \mathbb{N} \backslash \{0,1\} $ is the intrinsic dimension of $\mathcal{X}$.
\end{assumption}

Next, we make an assumption about the topology of the space $\mathcal{X}$. We require that the space have no holes, 
and  is topologically equivalent to  $[0,1]^d$, in the sense that there exists a  continuous bijection
  between $\mathcal{X}$ and $[0,1]^d$.

\begin{assumption}
	\label{as3}
	There exists a homeomorphism (a continuous bijection with a continuous inverse)  $h  \,:\,  \mathcal{X}   \rightarrow  [0,1]^d $, such that
	\[
	L_{\min}\,d_{\mathcal{X}}(x,x^{\prime})  \,\leq \, \| h(x) - h(x^{\prime}) \|_2  \,\leq \, L_{\max}d_{\mathcal{X}}(x,x^{\prime}),  \,\,\,\text{for all}\,\,x, x^{\prime} \in \mathcal{X},
	\]
	for some positive constants $ L_{\min},  L_{\max}$, where  $d \in \mathbb{N} \backslash \{0,1\} $ is the intrinsic dimension of $\mathcal{X}$.
\end{assumption}

Note that Assumptions \ref{as2} and \ref{as3}  immediately hold if  $\mathcal{X}  \,=\, [0,1]^d$,  with  $d_{\mathcal{X}}$   the Euclidean distance, $h$  the identity mapping in $[0,1]^d$, and $\mu$ the Lebesgue measure in $[0,1]^d$. 
A metric space $(\mathcal X, d_{\mathcal X})$ that satisfies Assumption \ref{as3} is a special case of a differential manifold;  
the intuition is that the space $\mathcal X$ is 
 a chart of the atlas for said differential manifold. 
 
In Assumptions \ref{as2} and \ref{as3},  we assume $d>1$, since local adaptivity in  non-parametric regression is well understood in one dimension. For instance, see \cite{tibshirani2014adaptive},\cite{wang2016trend}, \cite{guntuboyina2017spatial}, and references therein.

We now proceed to state conditions on the regression function $f_0$ defined in (\ref{eq:model}). The first assumption  simply requires   bounded variation  of the composition of the regression function with the homeomorphism $h$  from Assumption \ref{as3}.

\begin{assumption}
	\label{as:_rf}
	The function  $g_0 \,\,=\,\,f_0\circ h^{-1}$  has bounded variation, i.e. $g_0  \in \text{BV}\{(0,1)^d\}$, and $g_0$ is also  bounded.
  Here  $(0,1)^d$  is the interior of  $[0,1]^d$,  and $\text{BV}\{(0,1)^d\}$   is the class of 
functions in $(0,1)^d$ with bounded variation. We refer the reader to Section S2  in the Supplementary Material for the explicit construction of the $\text{BV}\{(0,1)^d\}$ class.  The function $h$ was defined in Assumption \ref{as3}.
\end{assumption} 
It is worth mentioning that if $\mathcal{X}  \,=\,  [0,1]^d$  and $h(\cdot)$ is 
the identity function in $[0,1]^d$, then  Assumption \ref{as:_rf} simply  states that $f_0$  has bounded variation. However,
  in order to allow for more general scenarios, the condition is stated in terms of the  function $g_0$  which has domain 
in the unit box, whereas the domain of $f_0$  is the more general set $\mathcal{X}$.

 We now  recall the definition of a  piecewise Lipschitz function,  which induces  a  much larger   class than the set of Lipschitz functions, as it  allows  for  discontinuities.

\begin{definition}
	\label{def:pie_lip}	
Let  \smash{$\Omega_{\epsilon} :=  [0,1]^d \backslash B_{\epsilon}(\partial [0,1]^d) $}. 
	We say that a bounded function $g  : [0,1]^d  \rightarrow \mathbb{R}$ is \emph{piecewise Lipschitz} if there exists a  set $\mathcal{S}   \,\subset\,   (0,1)^d $ that has the following properties:
\begin{enumerate}
 \item  The set $\mathcal{S}$  has Lebesgue measure zero.
\item For some constants $C_{\mathcal{S}}, \epsilon_0 >0$, we have that	$\mu(h^{-1}\{  B_{\epsilon}(\mathcal{S}) \cup ([0,1]^d\backslash \Omega_{\epsilon}) \} ) \,\,\leq \,\,C_{\mathcal{S} }\,\epsilon$   	  for all  $0  < \epsilon < \epsilon_0$.
 
\item There exists a positive constant $L_0$  such that  if  $z$ and $z^{\prime}  $  belong to the same connected component of 
	$\Omega_{\epsilon}  \backslash B_{\epsilon}(\mathcal{S})$,  then $\vert  g(z) \,-\, g(z^{\prime})   \vert   \,\leq \,   L_0 \,\|  z - z^{\prime} \|_2$.
\end{enumerate}
\end{definition}
\noindent Roughly speaking, Definition~\ref{def:pie_lip} says that $g$ is piecewise Lipschitz if there exists a small
 set $\mathcal{S}$ that partitions $[0,1]^d$ in 
such a way that $g$ is Lipschitz within each connected component of the partition. 
Theorem 2.2.1  in \cite{ziemer2012weakly} implies that if $g$ is piecewise Lipschitz,  then $g$ has bounded variation on any open set within a connected
component. 

Theorem~\ref{thm:upper_bound} will require Assumption~\ref{as:sup}, which is a milder assumption 
on $g_0$ than piecewise Lipschitz continuity (Definition~\ref{def:pie_lip}). We now present some notation that is needed in order to introduce  
 Assumption~\ref{as:sup}.

 For $\epsilon > 0 $ small enough, we denote by $\mathcal{P}_{\epsilon}$  a rectangular partition of $(0,1)^{d}$ induced by  $0,\epsilon,2\epsilon,\ldots, \epsilon( \floor{1/\epsilon}   -1),1$, so that all the elements of $\mathcal{P}_{\epsilon}$  have volume of order $\epsilon^d$. 
We define    \smash{$\Omega_{2\epsilon} :=  [0,1]^d \backslash B_{2\epsilon}(\partial [0,1]^d) $}.
Then, for a set $\mathcal{S}   \subset (0,1)^d$,  we define
 $$ \mathcal{P}_{\epsilon,  \mathcal{S} }  \,\,:=\,\, \{ A \cap  \Omega_{2\epsilon}\backslash B_{2\,\epsilon}(\mathcal{S} )\,\,:\,\,\,A \in \mathcal{P}_{\epsilon},\,\,\,\,A \cap  \Omega_{2\epsilon}\backslash B_{2\epsilon}(\mathcal{S} )   \,\neq \, \emptyset  \}; $$   
 this is the partition induced in $\Omega_{2\epsilon}\backslash B_{2 \epsilon}(\mathcal{S} )$  by the grid $\mathcal{P}_{\epsilon}$. 
 
 For a 
 function $g$ with domain $[0,1]^d$, we define 
 \begin{equation}
 \label{summation1}
 S_1(g,\mathcal{P}_{\epsilon,  \mathcal{S} })      := \underset{A \in \mathcal{P}_{\epsilon,  \mathcal{S} } }{\sum} \,\,
 \underset{  z_{A }  \in A   }{\sup} 
 \,\, \frac{   1  }{\epsilon\,}\, \underset{ B_{\epsilon}(z_{A }  ) }{\int}\,\vert g(z_{A})\,-\,  g( z)\vert\,dz
 . \,\,
 \end{equation}
 If $g$ is piecewise Lipschitz, then  $S_1(g,\mathcal{P}_{\epsilon,  \mathcal{S} })$ is  bounded; see Section S3.1 in the Supplementary Material.

 Next, we define
 \begin{equation}
 \label{summation2}
 S_2\left(g,  \mathcal{P}_{\epsilon,  \mathcal{S} }  \right) 
 :=  \underset{A  \in \mathcal{P}_{\epsilon,  \mathcal{S}}}{\sum} \,\, \underset{ z_{A }  \in A  }{\sup}  T(g,z_{A})\,\, \epsilon^{d}, \,\,
 \end{equation}
 where
 \begin{equation}
 \label{summation3}
 T(g,z_{A }) \,\,=\,\,   \underset{ 
 	z \in B_{\epsilon}(z_{A})
 }{\sup}\,\,  \sum_{l=1}^d \,  \left\vert  \underset{ \| z^{\prime}\|_2 \leq \epsilon }{\int}   \frac{\partial\psi(z^{\prime}/\epsilon)}{\partial z_{l}}  \left\{ \frac{ g(z_{A } - z^{\prime})\,-\,  g(z - z^{\prime})}{ \|z -z_{A }\|_2 \,\epsilon^{d} }\right\}\,dz^{\prime}   \right\vert,
 \end{equation}
 and  $\psi$ is  a test function (see Section S1 in the Supplementary Material). 
 Thus, \eqref{summation2} is the summation, over evenly-sized rectangles of volume $\epsilon$ that intersect 
 $\Omega_{2\,\epsilon}\backslash B_{2\,\epsilon}(\mathcal{S} )$, of the supremum  of the function in  \eqref{summation3}.  
 The latter,  for  a function $g$, can be thought as the  average Lipschitz constant  near  $z_A$ ---  see the expression
 inside curly braces in   \eqref{summation3} --- weighted by the derivative of a test function. The scaling factor $\epsilon^d$  in  
 \eqref{summation3} arises because the integral is taken over  a  set of measure proportional to $\epsilon^d$. 
 
 As with   $ S_1\left(g,  \mathcal{P}_{\epsilon,  \mathcal{S} }  \right) $,   one can verify   that  if   $g$  is  a
 piecewise  Lipschitz function,  then   $ S_2\left(g,  \mathcal{P}_{\epsilon,  \mathcal{S} }  \right) $  is bounded.

 We now make use of \eqref{summation1} and \eqref{summation2}
 in order to   state our next condition on $g_0   =   f_0 \circ h^{-1}$.
 This next condition  is milder than assuming that $g_0$ is piecewise Lipschitz, see Definition~\ref{def:pie_lip}.
 \begin{assumption}
 	\label{as:sup}
 	Let  \smash{$\Omega_{\epsilon} :=  [0,1]^d \backslash B_{\epsilon}(\partial [0,1]^d) $}.  
 	There  exists a  set $\mathcal{S}   \,\subset\,   (0,1)^d $ that satisfies the following:
 	\begin{enumerate}
 		\item The set $S$  has Lebesgue measure zero.
 		
 		\item 	For some constants $C_{\mathcal{S}}, \epsilon_0 >0$, we have that 	$\mu \{h^{-1}[  B_{\epsilon}(\mathcal{S}) \cup \{(0,1)^d\backslash \Omega_{\epsilon}\} ] \} \,\,\leq \,\,C_{\mathcal{S} }\,\epsilon$  for all $0  < \epsilon < \epsilon_0$.  
 		
 		\item The summations $S_1(g_0, \mathcal{P}_{\epsilon,\mathcal{S} })$ and $S_2(g_0, \mathcal{P}_{\epsilon,\mathcal{S} })$ are bounded:  
 		$$\underset{ 0  <  \epsilon  < \epsilon_0 }{\sup}\,\, \max\{ S_1(g_0, \mathcal{P}_{\epsilon,\mathcal{S} }), S_2(g_0, \mathcal{P}_{\epsilon,\mathcal{S} })\}  < \infty.$$
 	\end{enumerate}
 \end{assumption}

Finally, we refer the reader to Section S3 in the Supplementary Material for a discussion  on Assumptions \ref{as:_rf}--\ref{as:sup}. In particular, we
	 present an example illustrating that the class of piecewise Lipschitz functions  is, in general, different from the class of functions for which  Assumptions \ref{as:_rf}--\ref{as:sup}  hold. However,  both classes  contain  the class of  Lipschitz functions, where the latter amounts to $\mathcal{S} \,=\,\emptyset$ in Definition  \ref{def:pie_lip}.

\subsection{Results }
\label{sec:upper_bound}

Letting $\theta_i^*=f_0(x_i)$, we express the MSEs of $K$-NN-FL and $\epsilon$-NN-FL in terms of the total variation of $\theta^*$ with respect to the $K$-NN 
and $\epsilon$-NN graphs. 
\begin{theorem}
	\label{thm:upper_bound}
	Let  $K \,\,\asymp\,\,   \log^{1+2r} n $ for some $r > 0$.  Then under  Assumptions \ref{as1}--\ref{as3}, with an 
appropriate choice of the tuning parameter $\lambda$,  the $K$-NN-FL estimator $\hat{\theta}$ satisfies
	\[
	\| \hat{\theta }\,-\, \theta^* \|_n^2  \,\, = \,\, O_{\textup{pr} }\left( \frac{\log^{1+2r} n  }{n}\,+\, \frac{\log^{1.5 + r} n}{n} \|\nabla_{G_K}\,\theta^{*}\|_1 \right).
	\] 
This upper bound also holds for the $\epsilon$-NN-FL estimator  with $\epsilon  \,\asymp\,  \log^{(1+2r)/d} n  / n^{1/d}$,  if we replace  $\|\nabla_{G_K}\,\theta^{*}\|_1 $   with $\|\nabla_{G_{\epsilon}}\,\theta^{*}\|_1 $, 
and  with an appropriate choice of $\lambda$.  
\end{theorem}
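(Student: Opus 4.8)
Note first that Theorem~\ref{thm:upper_bound} is stated only under Assumptions~\ref{as1}--\ref{as3} and invokes no regularity of $f_0$, so we may regard $\theta^*=(f_0(x_1),\dots,f_0(x_n))^T$ as an arbitrary vector; the statement is thus a ``fast rate'' for total variation denoising over the random graph $G_K$ (resp.\ $G_\epsilon$). The plan is to combine a deterministic oracle inequality for the fused lasso on a fixed graph with a high-probability structural analysis of $G_K$. First, by optimality of $\hat\theta$ in \eqref{eqn:fl} and $y=\theta^*+\varepsilon$, writing $\delta:=\hat\theta-\theta^*$, the basic inequality reads $\tfrac12\|\delta\|_2^2\le\varepsilon^T\delta+\lambda(\|\nabla_{G_K}\theta^*\|_1-\|\nabla_{G_K}\hat\theta\|_1)$. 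Letting $\Pi$ be the orthogonal projection onto $\ker(\nabla_{G_K})$ and $\nabla_{G_K}^{+}$ the Moore--Penrose pseudoinverse, the identity $(I-\Pi)\delta=\nabla_{G_K}^{+}\nabla_{G_K}\delta$ gives $\varepsilon^T\delta\le\|\Pi\varepsilon\|_2\,\|\delta\|_2+\|(\nabla_{G_K}^{+})^T\varepsilon\|_\infty\,\|\nabla_{G_K}\delta\|_1$. Choosing $\lambda\ge 2\|(\nabla_{G_K}^{+})^T\varepsilon\|_\infty$ lets one absorb the second term (via $\|\nabla_{G_K}\delta\|_1\le\|\nabla_{G_K}\hat\theta\|_1+\|\nabla_{G_K}\theta^*\|_1$) back into the basic inequality, and Young's inequality on the first term yields $\|\delta\|_2^2\lesssim\|\Pi\varepsilon\|_2^2+\lambda\,\|\nabla_{G_K}\theta^*\|_1$, as in the ``slow-rate'' arguments of \cite{hutter2016optimal,wang2016trend}. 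This reduces everything to controlling, with high probability, $\|\Pi\varepsilon\|_2^2$ and the quantity $\|(\nabla_{G_K}^{+})^T\varepsilon\|_\infty$, which by the sub-Gaussian tail \eqref{eqn:sub_gaussian_errors} and a union bound over $E_K$ is $O_{\mathbb P}\big(\max_{e\in E_K}\|\nabla_{G_K}^{+}\mathbf 1_e\|_2\cdot\sqrt{\log|E_K|}\big)$.

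Next, I would establish three structural facts about $G_K$ that hold with probability tending to one under Assumptions~\ref{as1}--\ref{as3} when $K\asymp\log^{1+2r}n$. (a) $G_K$ is connected; this is precisely why the connectivity threshold $K\gtrsim\log n$ forces the stated scaling of $K$, and it follows from standard connectivity results for $K$-NN graphs of points drawn from a density bounded above and below (in the spirit of \cite{von2010hitting,alamgir2014density}), so that $\ker(\nabla_{G_K})=\mathrm{span}\{\mathbf 1\}$ and $\|\Pi\varepsilon\|_2^2=n\bar\varepsilon^2$ is of lower order. (b) The maximum degree of $G_K$ is $O(K)$, by Corollary~3.23 of \cite{miller1997separators}, so $|E_K|=O(nK)$ and $\log|E_K|\asymp\log n$. (c) The key estimate $\max_{e\in E_K}\|\nabla_{G_K}^{+}\mathbf 1_e\|_2=\widetilde O(\sqrt K)$ — i.e.\ the inverse-scaling/incoherence constant of $G_K$ is polylogarithmic. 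Part (c) is where the geometry enters: I would use the homeomorphism $h$ of Assumption~\ref{as3} together with the ball-volume regularity of Assumption~\ref{as2} to compare $G_K$ with a $d$-dimensional lattice on $\asymp n$ nodes, and transfer bounds on effective resistances and on the spectral profile of the lattice Laplacian to $G_K$ via canonical-path / Poincar\'e-type comparison inequalities. This is exactly where $d>1$ matters, since the corresponding lattice estimates — and hence the final rate — are qualitatively different in dimension one.

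Finally, plugging $\lambda\asymp\max_{e}\|\nabla_{G_K}^{+}\mathbf 1_e\|_2\sqrt{\log|E_K|}\asymp\sqrt K\cdot\log n\asymp\log^{1.5+r}n$ into the bound from the first step, together with $\|\Pi\varepsilon\|_2^2=O_{\mathbb P}(\log^{1+2r}n)$ (absorbing the slack from the choice of $\lambda$ and the null-space component of the noise into a remainder term of this order), and dividing by $n$, gives $\|\hat\theta-\theta^*\|_n^2=O_{\mathbb P}\big(\log^{1+2r}n/n+(\log^{1.5+r}n/n)\,\|\nabla_{G_K}\theta^*\|_1\big)$. The $\epsilon$-NN-FL case is entirely parallel: one replaces $G_K$ by $G_\epsilon$, and the analogues of (a)--(c) hold as soon as $\epsilon$ is chosen so that $G_\epsilon$ is connected with maximum degree of the right order — essentially $\epsilon^d\asymp\log^{1+2r}n/n$ up to constants, via Assumption~\ref{as2} — after which the same computation yields the stated bound with $\|\nabla_{G_\epsilon}\theta^*\|_1$.

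The main obstacle is part (c): controlling $\max_{e}\|\nabla_{G_K}^{+}\mathbf 1_e\|_2$, a functional of the pseudoinverse Laplacian of the \emph{random} geometric graph $G_K$, uniformly over all $n$ points (whose pairwise geometry is itself random) and all edges. A naive ``effective resistance over spectral gap'' bound is far too lossy — it does not even produce a vanishing rate — so one genuinely needs the spectrally refined lattice comparison, carried out uniformly on a high-probability event; establishing connectivity and the degree bound (parts (a)--(b)) is comparatively routine.
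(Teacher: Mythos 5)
Your overall skeleton (basic inequality, split of $\varepsilon^T\delta$ via a projection onto $\ker(\nabla_G)$ plus a H\"older term, choice $\lambda\gtrsim\|(\nabla_G^{+})^T\varepsilon\|_\infty$) is the standard slow-rate argument and is fine as far as it goes, but the proof has a genuine gap exactly at the step you yourself flag as the ``main obstacle'': part (c), the claim $\max_{e\in E_K}\|\nabla_{G_K}^{+}\mathbf 1_e\|_2=\widetilde O(\sqrt K)$ for the \emph{random} $K$-NN graph. Nothing in the proposal establishes this. The tools you invoke (canonical paths, Poincar\'e-type comparison with a lattice, effective resistances) compare Dirichlet forms and hence eigenvalues, but the quantity you need is an $\ell_\infty$-type incoherence constant --- a maximum over columns of the pseudoinverse of the incidence matrix --- which is not controlled by spectral-gap or resistance comparisons (as you note, the naive resistance-over-gap bound is too lossy, but no workable replacement is given). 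The grid-graph bounds of \cite{hutter2016optimal} rest on the explicit lattice structure (essentially Fourier/electrical computations), and there is no argument in your plan transferring them to $G_K$, whose edge set and pseudoinverse are themselves random and geometry-dependent. Without part (c) the choice $\lambda\asymp\log^{1.5+r}n$ is unsupported, so the theorem is not proved.

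For comparison, the paper avoids ever touching $\nabla_{G_K}^{+}$. It builds a deterministic $d$-dimensional mesh $I(N)$ (via the homeomorphism $h$ of Assumption~\ref{as3}) and two quantizations $\theta_I\in\mathbb R^n$, $\theta^I\in\mathbb R^{N^d}$, and proves two deterministic-on-a-good-event comparisons (Lemma~\ref{embedding}): $|e^T(\theta-\theta_I)|\le 2\|e\|_\infty\|\nabla_{G_K}\theta\|_1$ and $\|D\theta^I\|_1\le\|\nabla_{G_K}\theta\|_1$, where $D$ is the incidence matrix of the fixed $N^d$ grid. The empirical process is then decomposed as $\varepsilon^T(\hat\theta-\hat\theta_I)+\varepsilon^T(\hat\theta_I-\theta^*_I)+\varepsilon^T(\theta^*_I-\theta^*)$; the outer terms cost only $\|\varepsilon\|_\infty\asymp\sqrt{\log n}$ times total variation, and the middle term is rewritten, using the cell counts of Lemma~\ref{lem:controlling_counts}, as $\sqrt{\max_u|C(u)|}\,\tilde\varepsilon^T(\hat\theta^I-\theta^{*,I})$ with $\tilde\varepsilon$ sub-Gaussian, after which the known grid bounds on $\|(D^{+})^T\tilde\varepsilon\|_\infty$ and $\|\Pi\tilde\varepsilon\|_2$ from \cite{hutter2016optimal} apply (Lemmas~\ref{lem:empirical_process_2}--\ref{lem:empirical_process_3}). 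The factor $\sqrt{\max_u|C(u)|}\asymp\sqrt K$ is precisely the source of the extra $\log^{0.5+r}n$ in $\lambda\asymp\log^{1.5+r}n$; in the paper it comes from cell occupancy, not from any incoherence property of $G_K$. If you want to salvage your route, you would need to prove a new incoherence bound for pseudoinverses of random geometric graphs, which is a substantial open step; otherwise the mesh-embedding reduction is the way the argument actually closes. Your points (a) and (b), and the $\epsilon$-NN remark, are unobjectionable but peripheral.
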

Clearly, the upper bound in Theorem \ref{thm:upper_bound} is a function of $\| \nabla_{G_K}\,\theta^{*} \|_1$ or $\| \nabla_{G_{\epsilon} }\,\theta^{*} \|_1$, for the $K$-NN or $\epsilon$-NN  graph, respectively. 
For the grid graph considered in  \cite{sadhanala2016total}, $\| \nabla_{G}\,\theta^{*} \|_1 \,  \asymp \, n^{1-1/d}$,
leading to the rate $n^{-1/d}$.  However,  for a general  graph, there is no  a priori reason to expect that
$\| \nabla_{G}\,\theta^{*} \|_1 \,  \asymp \, n^{1-1/d}$. Notably, 
our next  result shows that   $\| \nabla_{G}\,\theta^{*} \|_1 \,  \asymp \, n^{1-1/d}$   for  $G \in \{ G_{K}, G_{\epsilon}\}$, under
 the assumptions discussed in Section \ref{sec:compact_support}.
\begin{theorem}
	\label{thm:penalty}
	Under Assumptions   \ref{as1}--\ref{as:sup}, or   Assumptions   \ref{as1}--\ref{as3} and  piecewise Lipschitz continuity of  $f_0\circ h^{-1}$,
	if $K   \asymp \log^{1+2r} n  $   for some $r>0$,
	then for an appropriate choice of the tuning parameter $\lambda$,  the  $K$-NN-FL estimator defined in (\ref{eqn:fl})    satisfies 
	\begin{equation}
		\label{eqn:bound1}
		    \| \hat{\theta} \,-\, \theta^* \|_{n}^2  \,\,=\,\,  O_{ \textup{pr} } \left(    \frac{\log^{\alpha} n}{n^{1/d}}  \right),		
	\end{equation}
	with $\alpha  \,=\, 3r  + 5/2  + (2r+1)/d$. Moreover, under Assumptions   \ref{as1}--\ref{as3}  and   piecewise Lipschitz continuity of  $f_0\circ h^{-1}$, then   $\hat{f}$ defined  in (\ref{eqn:interpolation_rule})  with the $K$-NN-FL  estimator satisfies
	\begin{equation}
		\label{eqn:bound2}
			E_{X  \sim p}\left\{\left\vert   f_0(X) \,-\,\hat{f}(X)  \right\vert^2\right\}    	\,\,=\,\, O_{ \textup{pr} }\left\{   \frac{ \log^{  \alpha} n   }{n^{1/d}}     \right\}.
	\end{equation}
	Furthermore, under the same assumptions, (\ref{eqn:bound1}) and (\ref{eqn:bound2}) hold for the $\epsilon$-NN-FL 
 estimator with  $\epsilon   \,\asymp\,  \log^{(1+2r)/d} n  / n^{1/d}$.
	\end{theorem}
Theorem~\ref{thm:penalty} indicates  that under Assumptions \ref{as1}--\ref{as:sup} or Assumptions \ref{as1}--\ref{as3} and piecewise Lipschitz continuity of  $f_0\circ h^{-1}$, both the $K$-NN-FL and $\epsilon$-NN-FL estimators
	attain the convergence rate $n^{-1/d}$, ignoring logarithmic terms.   Importantly, Theorem 3.2 from \cite{gyorfi2006distribution}  
	shows that in the two-dimensional setting, this rate is actually minimax for estimation of  Lipschitz continuous functions, 
	when  the design points are uniformly drawn from $[0,1]^2$. Thus, when  $d=2$ both $K$-NN-FL  and $\epsilon$-NN-FL  are  minimax  for  estimating   functions  in the class  implied by  Assumptions \ref{as1}--\ref{as:sup}, and also in the class of piecewise Lipschitz  functions implied by Assumptions \ref{as1}--\ref{as3} and Definition \ref{def:pie_lip}. In  higher dimensions ($d>2$), by the lower bound in Proposition 2 from \cite{willett2006faster}, we can  conclude that $K$-NN-FL  and $\epsilon$-NN-FL  attain nearly minimax rates  for estimating  piecewise Lipschitz functions, whereas it is unknown if the same is true  under Assumptions  \ref{as1}--\ref{as:sup}. Notably, a different method, similar in spirit to CART, was introduced  in Appendix E of \cite{willett2006faster}. \cite{willett2006faster} showed that this approach is also nearly minimax for  estimating elements in the class of  piecewise  Lipschitz  functions, although is unclear whether a computationally  feasible implementation of their algorithm is available. 


We see from Theorem~\ref{thm:penalty} that both $\epsilon$-NN-FL and $K$-NN-FL are locally adaptive,
 in the sense that they can adapt to the form 
of the function $f_0$. Specifically, these estimators  do not require  knowledge  of  the set  $\mathcal{S}$ in Assumption \ref{as:sup} or Definition \ref{def:pie_lip}. This is  similar in spirit to the one-dimensional  fused lasso, which  does not
 require knowledge of the breakpoints when estimating  a piecewise Lipschitz function.  

 However, there is an important difference between the applicability of Theorem~\ref{thm:penalty} for $K$-NN-FL and $\epsilon$-NN-FL.
 In order to attain the rate in Theorem~\ref{thm:penalty}, $\epsilon$-NN-FL  requires  knowledge of the dimension
 $d$, since this quantity appears in the rate of decay of 
$\epsilon$. But in practice, the value of $d$ might not be clear: for instance, suppose that $\mathcal{X}  \,=\,  [0,1]^2  \times  \{0
	\}    $; this is a subset of $[0,1]^3$,  but it is  homeomorphic  to $[0,1]^2$, so that $d=2$. 
If $d$ is unknown, then it can be challenging to choose $\epsilon$ for $\epsilon$-NN-FL.  
 By contrast,    the choice of $K$ in $K$-NN-FL only involves  the sample size $n$.  Consequently,
 local adaptivity of $K$-NN-FL may be much easier to achieve in practice.

\section{Manifold Adaptivity of $K$-NN-FL} \label{sec:manifold-adaptivity}
In this section, we allow the observations $\{ (x_i, y_i) \}_{i=1}^n$ to be 
	drawn from a mixture distribution,
	 in which each mixture component satisfies the assumptions in Section~\ref{sec:local-adaptivity}. Under these assumptions, we show that the $K$-NN-FL estimator can still achieve a desirable rate. 

We assume 
	\begin{equation}
\label{mixture_model}
\begin{array}{lll}
y_i   & =  &  \theta^*_i  \,\,+\,\,\varepsilon_i,\,\,\,\,\,\,i=1,\ldots,n,\\
\theta^{*}_i   &  = & f_{0,z_i}(x_i),  \,\,\,   \\
x_i & \sim &   p_{z_i}(x),  \\
\text{pr}(z_i =l)& \sim& \pi_l^*,  \,\,\,\text{for}  \,\,l = 1,\ldots,\ell. 
\end{array}
\end{equation}
where $\varepsilon$  satisfies  (\ref{eqn:sub_gaussian_errors}), $\pi_l^* \in [0,1]$ with $\sum_{l=1}^{\ell} \pi_l^*=1$,  $p_l$  is a density with support
  $\mathcal{X}_l  \subset   \mathcal{X}$,   $f_{0,l} \,:\,  \mathcal{X}_l  \rightarrow  \mathbb{R} $,  and 
$\{\mathcal{X}_l\}_{l=1,\ldots,\ell}$ is a collection of subsets of $\mathcal{X}$. For simplicity, we will assume
  that $\mathcal{X}   \subset   \mathbb{R}^{d}$ for some   $d >1$,
 and $d_{\mathcal{X}}$ is the Euclidean distance. In  (\ref{mixture_model}), the  observed data is $\{(x_i,y_i)\}_{i=1}^n$. The remaining ingredients in (\ref{mixture_model}) are either latent or unknown.  

We further assume that 
each set $\mathcal{X}_{l}$    is homeomorphic to a Euclidean box  of dimension depending on $l$, as follows:

\begin{assumption}
	\label{as:partition}
			For $l=1,\ldots,\ell$, the set	 $\mathcal{X}_{l}$    satisfies Assumptions \ref{as1}--\ref{as3} with metric given by $d_{\mathcal{X}}$,  with  dimension $d_l \in \mathbb{N} \backslash \{ 0,1\}$, and with  $\mu$ equal to  some measure $\mu_l$. In addition:   \\
				\vspace{-0.2in}
	\begin{enumerate}
		\item	  There exists a positive constant $\tilde{c}_{l}$ such that the set $\partial X_l \,\,:= \,\, \bigcup_{l^{\prime} \neq l }   \mathcal{X}_{l^{\prime}} \cap  \mathcal{X}_l$
		satisfies  
		\begin{equation}
		\label{eqn:measure_condition}
		\begin{array}{lll}
		\mu_l\left\{   B_{\epsilon} \left(  \partial X_l  \right)   \bigcap  X_l  \right\}    \,\,\leq \,\, \tilde{c}_l\epsilon,
		\end{array} 
		\end{equation}
		for any small enough $\epsilon  > 0 $. 
		\item  
		There exists a positive constant $r_l$ such that for any $x \in \mathcal{X}_l$, either 
			\begin{equation}
			\label{eqn:connected}
			 \underset{x^{\prime \prime} \in \partial X_l }{\inf}\, \,d_{\mathcal{X}}(x,x^{\prime \prime}  )   \,\,<\,\,  d_{\mathcal{X}}(x,x^{\prime} ) \;\;\; \text{for all} \;\;\; 
 x^{\prime}  \in \mathcal{X}\backslash \mathcal{X}_l,
			\end{equation}
			or  $B_{\epsilon}(x)  \subset  	\mathcal{X}_l$  for all $\epsilon  < r_l.$

	\end{enumerate}

\end{assumption}


The  constraints   implied by  Assumption  \ref{as:partition}  are very natural. Inequality  (\ref{eqn:measure_condition})  states that the intersections  of the  manifolds  $\mathcal{X}_1,\ldots,\mathcal{X}_{\ell}$  are small.  To put this in perspective, if the  extrinsic space ($\mathcal{X}$) were $[0,1]^d$   with  the Lebesgue  measure, then  balls  of   radius of  $\epsilon$  would have measure  $\epsilon^{d}   $  which is  less than   $\epsilon$  for all $d>1$, and the set  $B_{\epsilon}( \partial [0,1]^d ) \cap  [0,1]^d$  would have  measure  that scales  like $\epsilon$, which is the same  scaling  appearing (\ref{eqn:measure_condition}). Furthermore, 
	\eqref{eqn:connected}
	holds if  $\mathcal{X}_1,\ldots,\mathcal{X}_{\ell}$  are   compact and convex  subsets of $\mathbb{R}^d$   whose interiors are disjoint. 

We are now ready to  extend Theorem \ref{thm:penalty}  to the framework   described in this section.
\begin{theorem}
	\label{thm:adaptivity_dimension}
	Suppose that the data  are generated as in (\ref{mixture_model}),     and Assumption \ref{as:partition}  
holds. Suppose also that the functions   $f_{0,1},\ldots,f_{0,\ell}$  either satisfy   Assumptions \ref{as:_rf}--\ref{as:sup} or are piecewise Lipschitz in the domain $\mathcal{X}_l$. Then for an appropriate choice of the tuning parameter $\lambda$,
 the $K$-NN-FL estimator defined in  (\ref{eqn:fl}) satisfies
	\[
	\|\hat{\theta}   \,-\, \theta^* \|_n^2  \,\,=\,\,  O_{\textup{pr} } \left\{   {\rm poly}(\log n)\,\displaystyle \sum_{l=1}^{\ell} \frac{\pi^*_l }{(\pi^*_l n)^{ 1/d_l   }  } \right\} ,
	\]
	provided that $n \min\{ \pi_l^* \,:\,l \in [\ell]\}  \,\geq  \, c_0 n^{r_0}$, and $K \,\asymp \, \log^{1+2r} n$ for some  constants $c_0,r_0,r > 0 $, where ${\rm poly}(\cdot)$  is a polynomial function. Here,  the  $\pi_l^*$'s are  allowed  to change with $n$.
\end{theorem}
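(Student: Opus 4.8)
The plan is to reduce the mixture problem to $\ell$ separate applications of Theorem~\ref{thm:penalty}, one per component, and then glue the resulting bounds together. The starting point is the general oracle-type bound of Theorem~\ref{thm:upper_bound}, which says that (up to logarithmic factors) $\|\hat\theta-\theta^*\|_n^2$ is controlled by $n^{-1}\log^{1+2r}n$ plus $n^{-1}\log^{1.5+r}n\,\|\nabla_{G_K}\theta^*\|_1$. So the real work is to show that under the mixture model \eqref{mixture_model} and Assumption~\ref{as:partition},
\[
\|\nabla_{G_K}\theta^*\|_1 \;=\; O_{\mathbb P}\!\left({\rm poly}(\log n)\,\sum_{l=1}^\ell n_l^{\,1-1/d_l}\right).
\]
Given this, substituting into Theorem~\ref{thm:upper_bound} and using $\log n\asymp\log n_l$ collapses all logarithmic powers into a single ${\rm poly}(\log n)$ factor and yields the stated rate.

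To bound the total variation of $\theta^*$ over the $K$-NN graph, I would split the edge set $E_K$ into \emph{within-component} edges (both endpoints in the same $A_l$) and \emph{cross-component} edges (endpoints in different $A_l$'s). For the within-component edges of $A_l$: the subgraph they induce is, by construction, a subgraph of the $K$-NN graph one would build using only the $n_l$ points drawn i.i.d.\ from $p_l$ (any $K$-NN edge among points of $A_l$ in the full graph is also a $K$-NN edge in the restricted point cloud, since adding points can only shrink nearest-neighbor radii — actually the cleaner direction is that the restricted $K$-NN graph contains all needed edges to carry the argument of Theorem~\ref{thm:penalty}). Since $f_{0,l}$ restricted to $\mathcal X_l$ satisfies Assumptions~\ref{as:_rf}--\ref{as:sup}, the estimate used inside the proof of Theorem~\ref{thm:penalty} — namely $\|\nabla_{G_K^{(l)}}\theta^*_{A_l}\|_1 = O_{\mathbb P}({\rm poly}(\log n_l)\, n_l^{\,1-1/d_l})$ — applies verbatim with $n$ replaced by $n_l$ and $d$ by $d_l$; summing over $l$ gives the main term. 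For the cross-component edges: here I would invoke part~1 of Assumption~\ref{as:partition}, the measure bound $\mu_l(B_\epsilon(\partial X_l)\cap X_l)\le \tilde c_l\epsilon$, together with part~2 (the dichotomy: each point is either closer to $\partial X_l$ than to any point outside $\mathcal X_l$, or sits in a ball entirely inside $\mathcal X_l$). The second alternative guarantees interior points of a component never participate in a cross-edge, so cross-edges only emanate from points within distance $\sim \epsilon_n = K$-NN-radius $\asymp (\log n/n)^{1/d_l}$ of $\partial X_l$; a standard covering/Poissonization argument (the same device used to count boundary cells in the proof of Theorem~\ref{thm:penalty}) shows the number of such points, and hence the number of cross-edges, is $O_{\mathbb P}({\rm poly}(\log n)\, n_l^{\,1-1/d_l})$, and each such edge contributes at most $\|f_0\|_\infty$ (finite by bounded variation) to the TV. Summing the two contributions gives the claimed bound on $\|\nabla_{G_K}\theta^*\|_1$.

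The main obstacle is the cross-component analysis: one must be careful that the $K$-NN graph built on the \emph{pooled} sample does not create long-range edges between components through sparsely-populated regions, and that the nearest-neighbor radii used to localize boundary effects are simultaneously valid across all $\ell$ components with high probability. Controlling this requires a uniform-over-components concentration statement for the $K$-NN radii — essentially that with $K\asymp\log^{1+2r}n$, every point's $K$-th nearest neighbor lies within distance $O((\log n/n_l)^{1/d_l})$ when the point belongs to $A_l$ — which in turn leans on Assumption~\ref{as1} (density bounded above and below) applied to each $p_l$ and on $\log n\asymp\log n_l$. Once that uniform radius control and the boundary-measure bound \eqref{eqn:measure_condition} are in hand, the rest is bookkeeping: plug the TV bound into Theorem~\ref{thm:upper_bound}, absorb logarithms, and read off the rate $n^{-1}{\rm poly}(\log n)\sum_l n_l^{1-1/d_l}$.
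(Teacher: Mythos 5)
There is a genuine gap at the very first step: you cannot invoke Theorem~\ref{thm:upper_bound} for the pooled sample. That theorem is proved under Assumptions~\ref{as1}--\ref{as3} for a single space $\mathcal{X}$ with one intrinsic dimension $d$: its proof rests on embedding a single $d$-dimensional mesh via the homeomorphism $h:\mathcal{X}\to[0,1]^d$ (Lemmas~\ref{grid_embedding}--\ref{lem:empirical_process_3}), on per-cell counts of order $K$, and on the grid-graph bound for $\|(D^{+})^{T}\tilde{\varepsilon}\|_\infty$. In the mixture model \eqref{mixture_model} the support is a union of sets $\mathcal{X}_1,\dots,\mathcal{X}_\ell$ of possibly different dimensions $d_l$ (cf.\ Example~\ref{ex:non_adaptivity}), so there is no global $d$, no global homeomorphism, and no single base measure for which Assumptions~\ref{as1}--\ref{as3} hold; hence the oracle bound ``$n^{-1}\log^{1+2r}n + n^{-1}\log^{1.5+r}n\,\|\nabla_{G_K}\theta^*\|_1$'' is simply not available for the pooled $K$-NN graph, and bounding $\|\nabla_{G_K}\theta^*\|_1$ alone does not yield the theorem. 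The paper circumvents exactly this: it goes back to the basic inequality and decomposes the empirical process coordinate-wise as $\sum_l \varepsilon_{A_l}^T(\hat\theta_{A_l}-\theta^*_{A_l})$ and the penalty as the sum of per-component penalties $\|\nabla_{G_K^l}\cdot\|_1$ (for the $K$-NN graphs built on each $A_l$ separately) plus a remainder $R(\theta^*)-R(\hat\theta)$ accounting for the mismatch $N_K(x_i)\neq\tilde N_K(x_i)$; each per-component term is then bounded by re-running the mesh-embedding argument of Theorem~\ref{thm:knn} inside $\mathcal{X}_l$ (valid since each $\mathcal{X}_l$ satisfies Assumptions~\ref{as1}--\ref{as3} with dimension $d_l$), with the per-component total variation controlled as in Theorem~\ref{thm:penalty}.

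Your second half (splitting edges into within- and cross-component, using the monotonicity of $K$-NN edges under restriction, the boundary-measure bound \eqref{eqn:measure_condition}, the dichotomy in Assumption~\ref{as:partition}, and counting boundary points of order $n_l\epsilon_l\asymp K^{1/d_l}n_l^{1-1/d_l}$ times the max degree) is essentially the paper's treatment of the remainder $R$ and of the per-component penalties, so that part is sound. But note two consequences of the flawed reduction: (i) in the correct argument the cross-edge remainder must be controlled for \emph{both} $\theta^*$ and $\hat\theta$ (the paper bounds $|R(\theta)|\le 4\tau_d\|\theta\|_\infty K\sum_l|[n_l]\setminus\Lambda_l|$ and hence needs control of $\|\hat\theta\|_\infty$), which your plan never addresses because you only ever bound $\|\nabla_{G_K}\theta^*\|_1$; and (ii) the empirical process must be re-bounded component-by-component, which is the substantive content of the proof rather than ``bookkeeping.'' As written, the proposal does not establish the stated rate.
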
 
 Notice that when  $d_l  =  d$ for all $l \in [\ell]$ in Theorem \ref{thm:adaptivity_dimension},  then we obtain, ignoring logarithmic factors, the  rate $ n^{-1/d} $ which  is  minimax when the functions $f_{0,l}$ are piecewise Lipschitz. The rate is also minimax when $d=2$ and the functions  $f_{0,l}$  satisfy  Assumptions \ref{as:_rf}--\ref{as:sup}. In addition, our rates can be compared  with the existing literature on manifold adaptivity. Specifically, when $d=2$, the rate   $n^{-1/2}$  is  attained  by  local polynomial regression \citep{bickel2007local} and   Gaussian process regression \citep{yang2016bayesian} for the class of differentiable functions with bounded  partial derivatives,  and by $K$-NN  regression for Lipschitz functions  \citep{kpotufe2011k}. In higher dimensions,  \cite{bickel2007local}, \cite{yang2016bayesian} and  \cite{kpotufe2011k}    attain better  rates  than $n^{-1/d}$ on smaller classes of functions that do not allow for discontinuities.  

 Finally, we refer the reader to Section S11 of the Supplementary Material for an example suggesting that the $\epsilon$-NN-FL estimator may not be  manifold adaptive.

\section{Experiments}
\label{sec:data}

Throughout this section, we take $d_\mathcal{X}$ to be Euclidean distance.

\begin{figure}[t!]
	\begin{center}
	\captionsetup{width=0.97\linewidth}
(a) \hspace{80mm} (b) \\
		\includegraphics[width=2.6in,height= 2.6in]{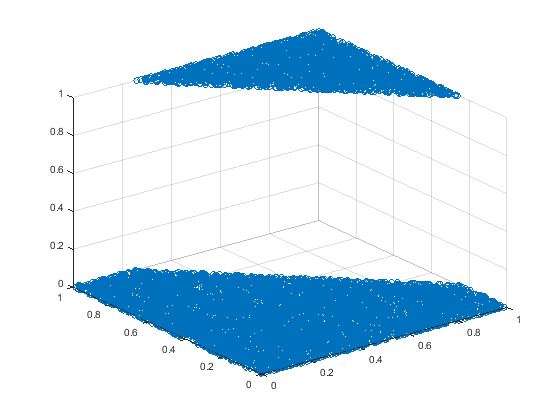}
		\includegraphics[width=2.6in,height= 2.6in]{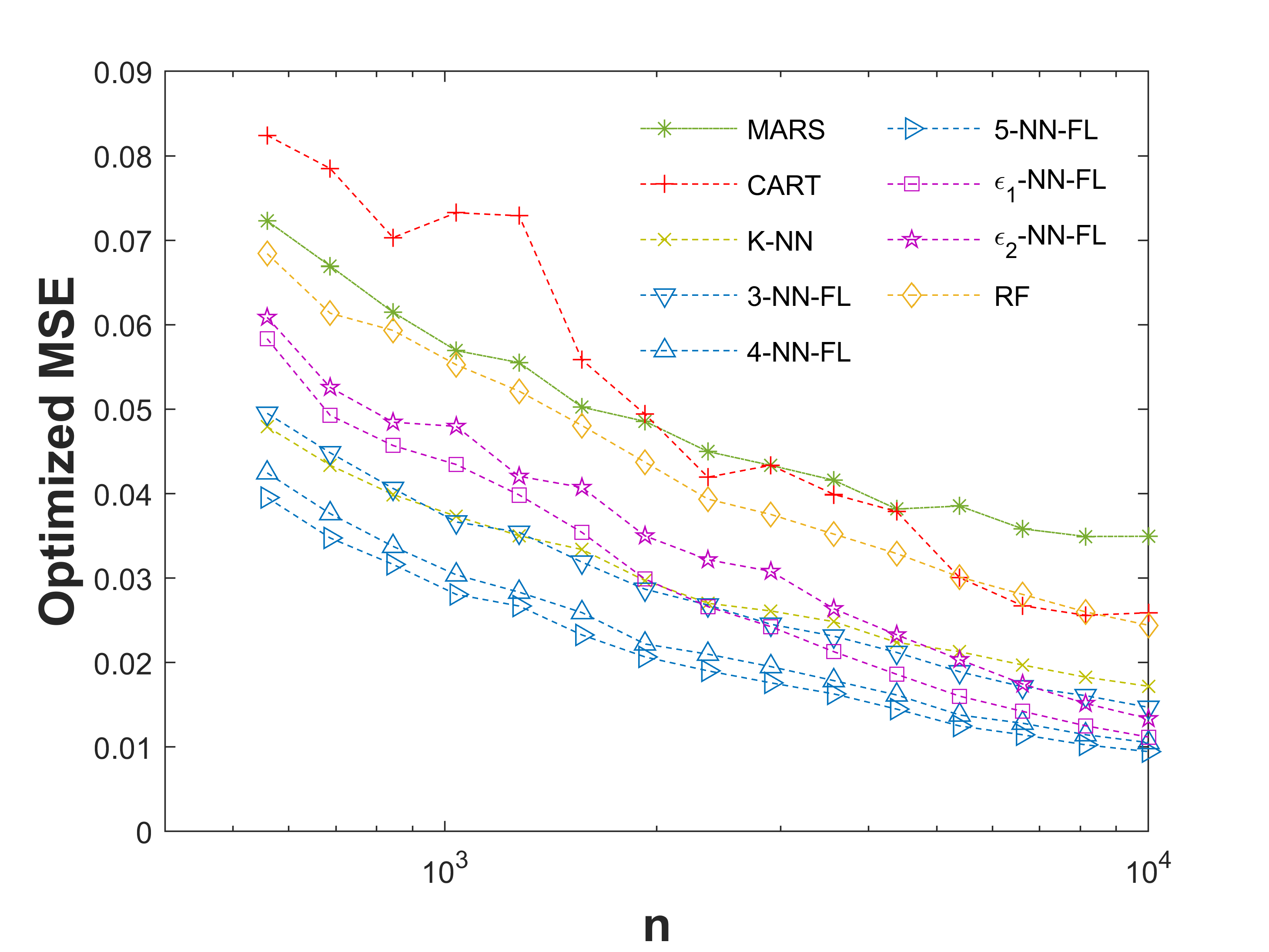}\\
(c) \hspace{80mm} (d) \\
		\includegraphics[width=2.6in,height= 2.6in]{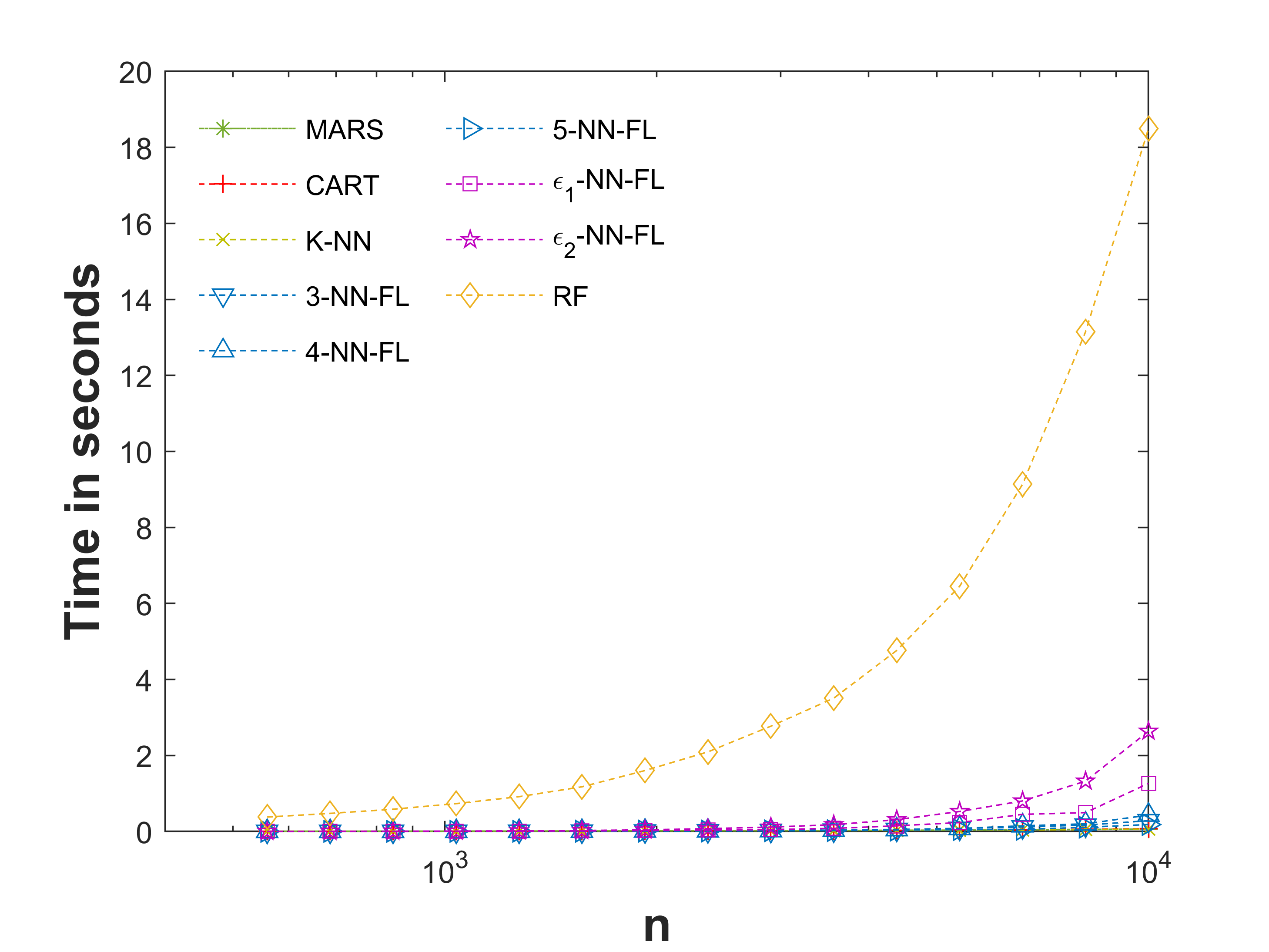}
		\includegraphics[width=2.6in,height= 2.6in]{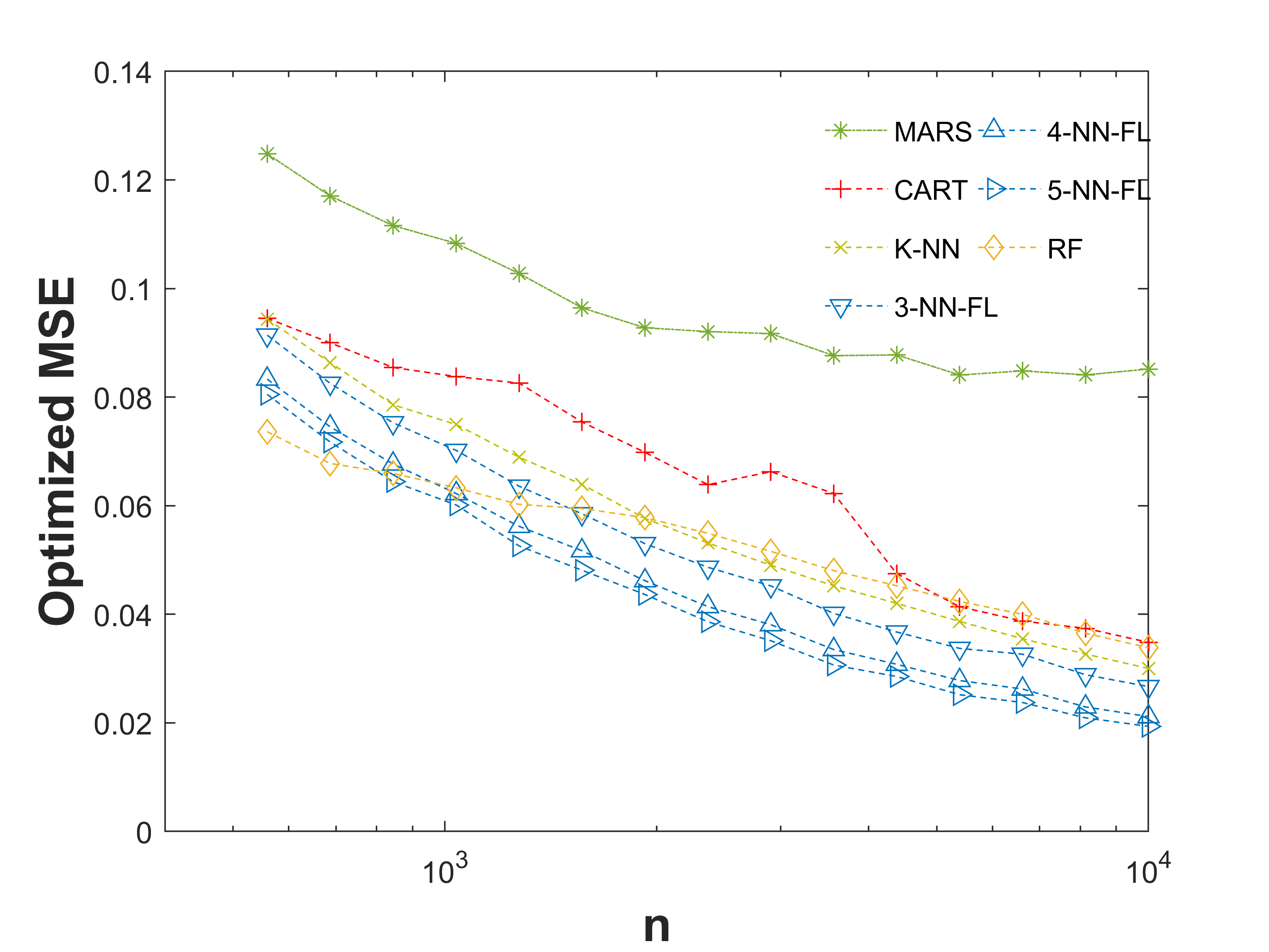}
		\caption{\label{fig:fix_p_var_d} \emph{(a):} A scatterplot of data generated under Scenario 1.  The vertical axis displays $f_0(x_i)$, 
while the other two axes display the two covariates. 
 \emph{(b):} Optimized MSE (averaged over 150 Monte Carlo  simulations) of competing methods under Scenario 1. Here $\epsilon_1 = \frac{3}{4}(\log n/n)^{1/2}$ and $\epsilon_2  \,=\,  (\log n/n)^{1/2} .$
\emph{(c):} Computational time (in seconds)  for Scenario 1, averaged over 150 Monte Carlo simulations. 
\emph{(d):} Optimized MSE (averaged over 150 Monte Carlo simulations) of competing methods under Scenario 2. }
	\end{center}
\end{figure}


\subsection{Simulated Data}\label{sec:sim}

In this section, we compare the following approaches:


\begin{itemize}
\item $\epsilon$-NN-FL, with $\epsilon$ held fixed, and $\lambda$ treated as a tuning parameter.
\item $K$-NN-FL, with $K$ held fixed, and $\lambda$ treated as a tuning parameter. 
\item CART \citep{breiman1984classification}, implemented in the \verb=R= package \verb=rpart=, 
with the complexity parameter treated as a tuning parameter.
\item MARS  \citep{friedman1991multivariate}, implemented in the \verb=R= package \verb=earth=, 
with the penalty parameter treated as a tuning parameter. 
\item Random forests \citep{breiman2001random}, implemented in the \verb=R= package \verb=randomForest=,
 with the number of trees fixed at 800, and with 
the minimum size of each terminal node treated as a tuning parameter.
\item $K$-NN regression \citep[e.g.][]{stone1977consistent}, implemented in \verb=Matlab= using the 
function \verb=knnsearch=, with $K$ treated as a tuning parameter. 
\end{itemize}

We evaluate each method's performance using the MSE, as defined in Section \ref{sec:compact_support}. Specifically, we apply 
 each method to 150 Monte Carlo data sets with a range of tuning parameter values. 
For each method, we then identify the tuning parameter value that leads to the smallest average MSE 
over the 150 data sets. We refer to this smallest average MSE as the 
\emph{optimized MSE} in what follows. 



 In our   first two scenarios we consider $d=2$ covariates, and  let  the sample  size $n$ vary.

\emph{Scenario 1.}
The function $f_0  \,:\, [0,1]^2   \rightarrow \mathbb{R}$  is piecewise constant,
\begin{equation}
	\label{eqn:true_function2}
	f_0(x)  \,\,=\,\,  \boldsymbol{1}_{ \left\{t  \in  \mathbb{R}^2  \,\,:\,\,   \left\|t -   \frac{3}{4}(1,1)^T \right\|_2  \, <\, \left\|t -  \frac{1}{2}(1,1)^T \right\|_2  \right\}   }(x).
\end{equation} 
The covariates are drawn from a uniform distribution on  $[0,1]^2$. The data are generated as in  \eqref{eq:model}  with $N(0,1)$ errors.  

\emph{Scenario 2.} The function $f_0  \,:\, [0,1]^2   \rightarrow \mathbb{R}$  is  as  in \eqref{eq:f0_example}, with generative density for $X$ as in   (\ref{eqn:form}). The data are generated as in  \eqref{eq:model}  with $N(0,1)$ errors. 

Data generated under Scenario 1 are displayed in  Figure~\ref{fig:fix_p_var_d}(a).  
Data generated  under Scenario 2 are displayed in 
Figure~\ref{fig:density}(b).

 Figure~\ref{fig:fix_p_var_d}(b) and Figure~\ref{fig:fix_p_var_d}(d) display the optimized MSE, 
as a function of the sample 
size, for Scenarios 1 and 2, respectively. $K$-NN-FL gives the best results in both scenarios.
 $\epsilon$-NN-FL performs a bit worse than $K$-NN-FL in Scenario 1,
 and very poorly in Scenario 2 (results not shown).

Timing results for all approaches under Scenario 1 are  reported in Figure \ref{fig:fix_p_var_d}(c). 
For all  methods, the times reported are averaged over a range of tuning parameter values. 
 For instance, for $K$-NN-FL,    we fix $K$  and       compute  the time  for different  choices
  of $\lambda$; we then report the average of those times.


\begin{figure}[t!]
	\begin{center}
(a) \hspace{80mm} (b) \\
		\includegraphics[width=2.6in,height= 2.6in]{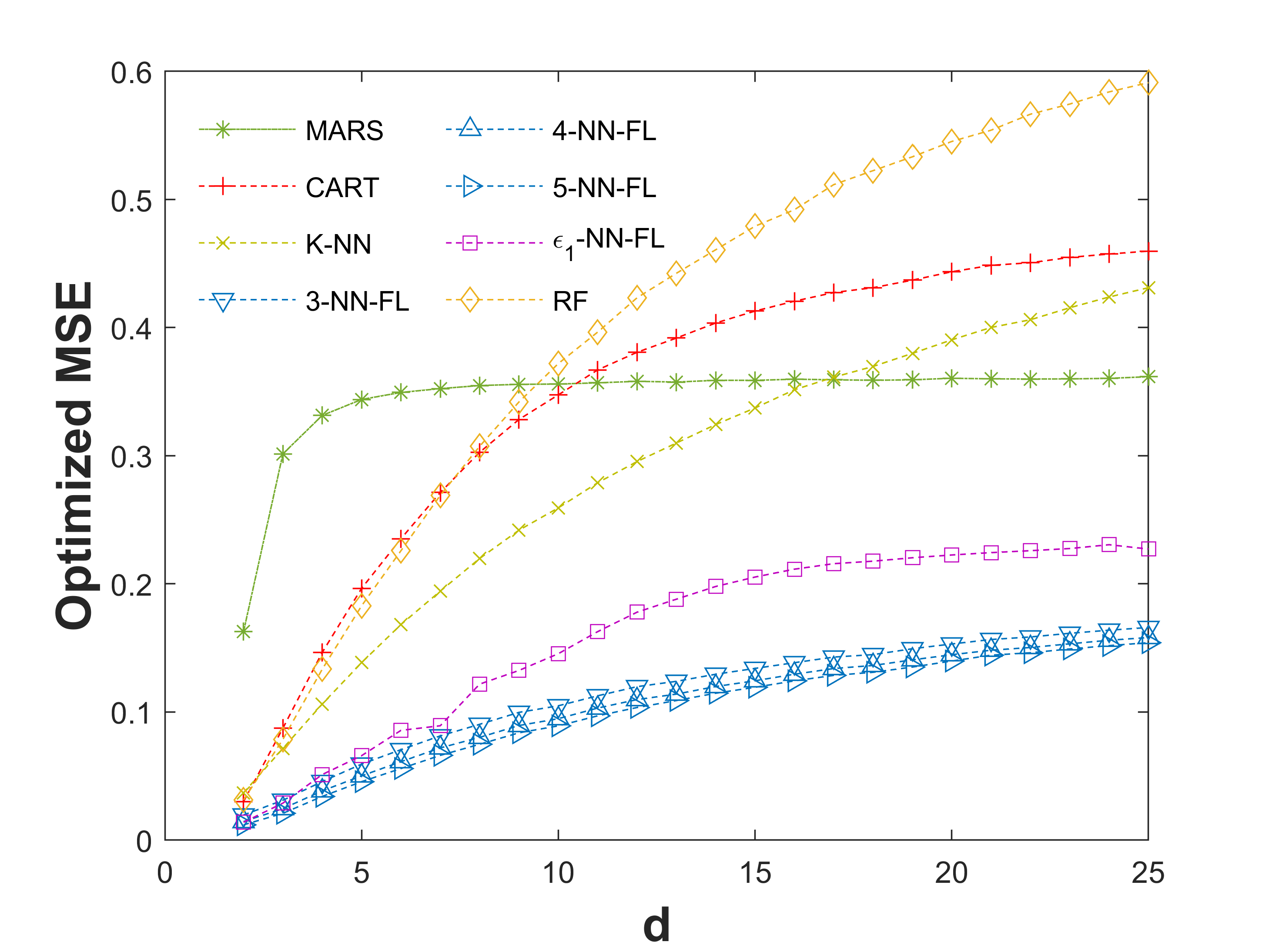}
			\includegraphics[width=2.6in,height= 2.6in]{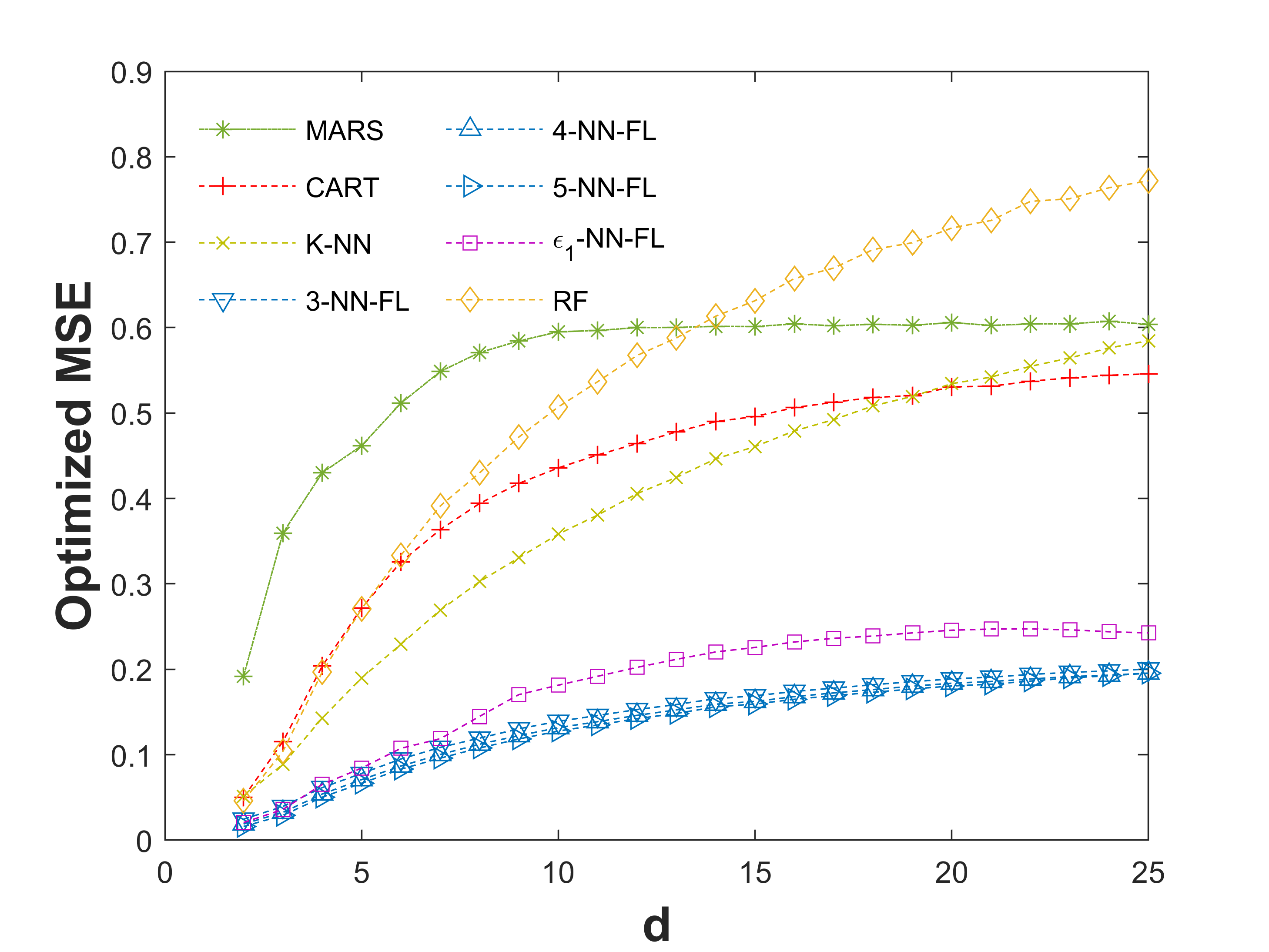}
		\caption{\label{fig:fix_n_var_p_ind10}  \emph{(a)}  Optimized MSE, averaged over 150 
Monte Carlo simulations, for Scenario 3. 
\emph{(b):}  Optimized MSE, averaged over 150  Monte Carlo simulations, for Scenario 4. In both Scenario 3 and Scenario  4,  $\epsilon_1$ is chosen to be the largest  value  such that the total number of edges in the graph $G_{\epsilon_1}$  is at most  50000.}
	\end{center}
\end{figure}

For our  next two  scenarios, we consider  $n=8000$  and values of  $d$ in  $\{2,\ldots,25\}$.

\emph{Scenario 3.}     The function  $f_0  \,:\, [0,1]^d   \rightarrow \mathbb{R}$  is defined as
\[
   f_0(x)     \,\,=\,\,\begin{cases}
     1  & \,\,\,\text{if}\,\,\,  \left\|   x    -  \frac{1}{4} \boldsymbol{1}_d \right\|_2\,<\,\,\left\|   x    -  \frac{3}{4} \boldsymbol{1}_d \right\|_2\\
      -1  &  \,\,\,\text{otherwise},
   \end{cases},
\]
and the  density $p$   is  uniform in $[0,1]^d$. The data are generated as in  \eqref{eq:model}  with $\varepsilon_i \overset{\text{ind}}{\sim}  N(0,0.3)$.

\emph{Scenario 4.} The function  $f_0  \,:\, [0,1]^d   \rightarrow \mathbb{R}$   is defined as 
\[
f_0(x)     \,\,=\,\,\begin{cases}
2  & \,\,\,\text{if}\,\,\,     \|x - q_1\|_2 < \min\{ \|x - q_2\|_2, \|x - q_3\|_2,\|x - q_4\|_2   \}      \\
1& \,\,\,\text{if}\,\,\,    \|x - q_2\|_2 < \min\{ \|x - q_1\|_2, \|x - q_3\|_2,\|x - q_4\|_2   \} \\
0 & \,\,\,\text{if}\,\,\,   \|x - q_3\|_2 < \min\{ \|x - q_1\|_2, \|x - q_2\|_2,\|x - q_4\|_2   \} \\
-1  & \,\,\,\text{otherwise}\,\,\,    \\
   \end{cases},
\]
where  $q_1 \,=\, \left(\frac{1}{4}\boldsymbol{1}_{ \floor{d/2} }^T ,   \frac{1}{2}\boldsymbol{1}_{d -  \floor{d/2}}^T    \right)^T       $, $q_2 \,=\, \left(\frac{1}{2}\boldsymbol{1}_{ \floor{d/2} }^T ,   \frac{1}{4}\boldsymbol{1}_{d -  \floor{d/2}}^T    \right)^T   $,  $q_3 \,=\, \left(\frac{3}{4}\boldsymbol{1}_{ \floor{d/2} }^T ,   \frac{1}{2}\boldsymbol{1}_{d -  \floor{d/2}}^T    \right)^T  $  and
$q_4 \,=\, \left(\frac{1}{2}\boldsymbol{1}_{ \floor{d/2} }^T ,   \frac{3}{4}\boldsymbol{1}_{d -  \floor{d/2}}^T    \right)^T  $. Once again,  the generative density for $X$ is uniform in $[0,1]^d$. The data  are generated  as in  \eqref{eq:model}  with $\varepsilon_i \overset{\text{ind}}{\sim}  N(0,0.3)$.

Optimized MSE for each approach is displayed in 
 Figure~\ref{fig:fix_n_var_p_ind10}.  
When $d$ is small, most methods perform well; however, 
  as $d$ increases, the competing methods quickly deteriorate, whereas $K$-NN-FL  continues to perform well.

  \begin{figure}[t!]
  	\begin{center}
  		\includegraphics[width=4.4in,height= 2.5in]{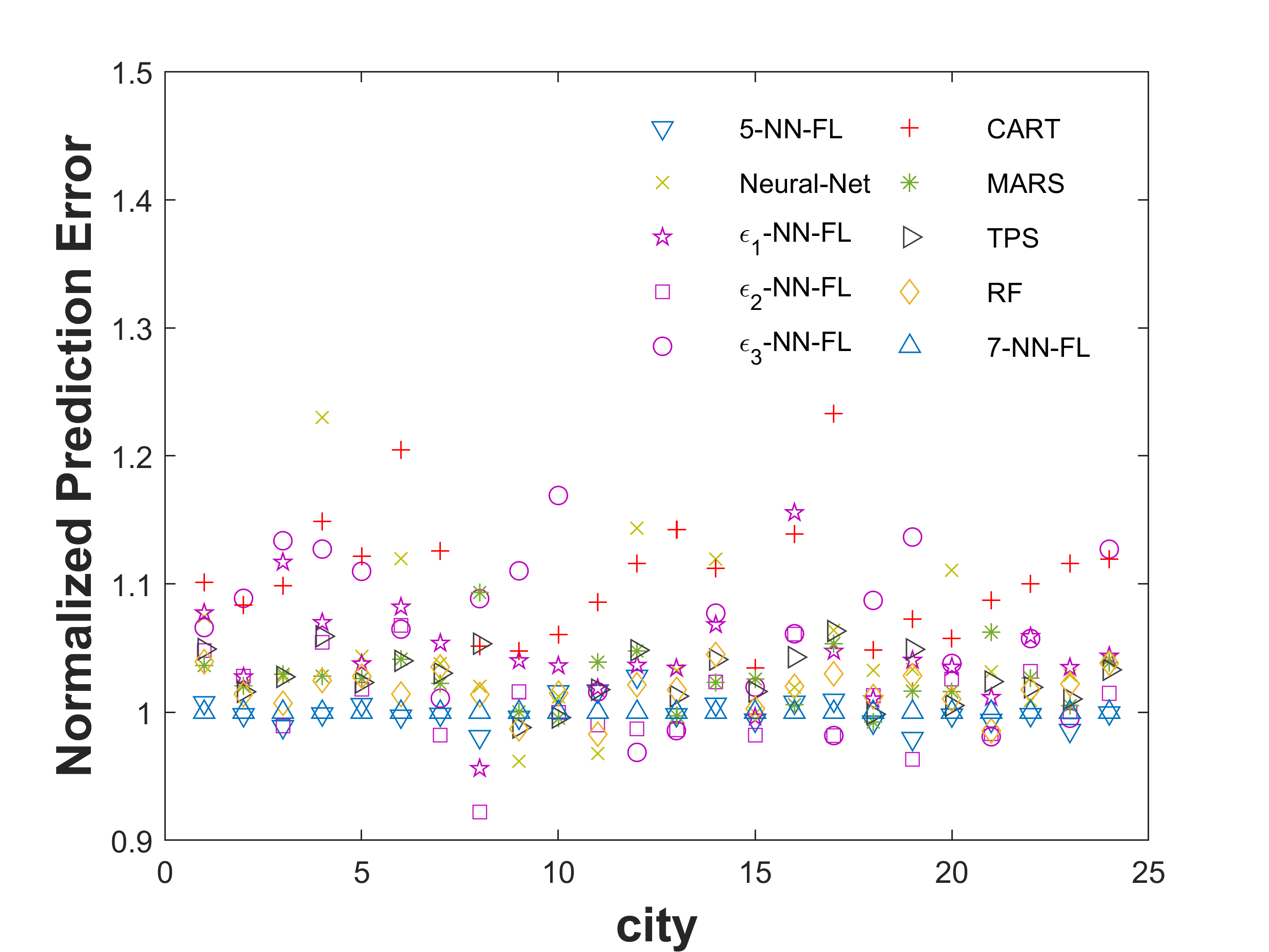}
  		\caption{\label{fig:flu_comparisons} 
Results for the flu data. ``Normalized Prediction Error'' was obtained by dividing each method's test set prediction error by the test set  prediction error
  of $K$-NN-FL, with $K=7$. }
  	\end{center}
  \end{figure}
  
\subsection{Flu Data}
\label{sec:flu}

Our data consists of 
 flu activity and
atmospheric conditions   between
January 1, 2003 and December 31, 2009 across different cities in Texas.  Our data-use agreement
does not permit dissemination of the flu activity, which comes from medical records. The atmospheric conditions, which include temperature and air
quality (particulate matter), can be obtained directly from CDC Wonder
(\url{http://wonder.cdc.gov/}).   Using  the number of flu-related doctor's office visits  as the 
dependent variable, we fit a 
separate non-parametric regression
 model to each of 24 cities; each day was treated as a separate observation, so that  the number 
of samples is $n=2556$ in each city. Five  independent variables are included in the regression:   
maximum and average observed concentration of particulate matter, 
maximum and minimum temperature, and day of the year.  All variables  are scaled to lie in $[0,1]$. 
We performed 50 75\%/25\% splits of the data into a training set and a test set. All models were fit on 
the training data, with 5-fold cross-validation to select tuning parameter values.
Then prediction performance was evaluated on the test set.

We apply $K$-NN-FL with $K \in \{5,7\}$,  and $\epsilon$-NN-FL with  
$\epsilon   =    j /  n^{1/d} $  for $j \in \{1,2,3\}$, which is motivated by Theorem \ref{thm:penalty}, and  with larger choices of $\epsilon$ leading to worse performance. We also fit neural networks 
(\citealp{hagan1996neural}; implemented in \verb=Matlab= using the functions \verb=newfit= and \verb=train=),
    thin plate splines (\citealp{duchon1977splines}; implemented using the \verb=R= package \verb=fields=),  and 
MARS, CART, and random forests, as described in Section~\ref{sec:sim}.

  Average test set prediction error (across the 50 test sets)  
is displayed   in Figure \ref{fig:flu_comparisons}. We see that $K$-NN-FL and $\epsilon$-NN-FL
  have the best performances. 
In particular,  $K$-NN-FL  performs best in 13  out the 24 cities, and  second best in 6  cities. 
In 8 of the 24 cities, 
 $\epsilon$-NN-FL  performs best. 

We contend that $K$-NN-FL achieves superior performance because it adapts to heterogeneity in the density of design points, $p$ (manifold adaptivity), and heterogeneity in the smoothness of the regression function, $f_0$ (local adaptivity).
In our theoretical results, we have substantiated this contention through prediction error rate bounds for a large class of regression functions of heterogeneous smoothness and a large class of underlying measures with heterogeneous intrinsic dimensionality.
Our experiments demonstrate that these theoretical advantages translate into practical performance gains.

\appendix

\section{Notation}
\label{sec:notation}

Throughout, given $m \in \mathbb{N}$, we denote by $[m]$ the set $\{1,\ldots,m\}$. For   $a \in A$, $A  \subset \mathcal{A}$, with  $(\mathcal{A},d_{\mathcal{A}} )$ a metric space, we 
write
\[
\begin{array}{lll}
d_{\mathcal{A}}(a,A) &  = &\underset{b \in A }{\inf }\,  d_{\mathcal{A}}(a,b). \\
\end{array}
\]
In the case of the space $\mathbb{R}^d$, we will use the notation $\text{dist}(\cdot,\cdot)$ for the metric induced by the norm $\|\cdot\|_{\infty}$, and we will write $B_{\epsilon}(x) $ for the ball $B_{\epsilon}(x,\|\cdot\|_2)$.
We  use the notation $\|\cdot\|_n$  for the rescaling of the $\|\cdot\|_2$  norm, such that for  $x \in \mathbb{R}^n$,
\[
\|x\|_n^2  \,\,=\,\,  \frac{1}{n} \sum_{i=1}^{n} \,x_i^2.
\]
Furthermore,  we write
\begin{equation}
\label{eqn:set_epsilon}
\Omega_{\epsilon}   \,\,=\,\, [0,1]^d  \backslash  B_{\epsilon}(\partial [0,1]^d ).
\end{equation}  
Thus, $\Omega_{\epsilon}$ is the set of points in the interior of $[0,1]^d$  such that  balls of radius $\epsilon$ with center in such points are also contained in $[0,1]^d$.

Given  an open set  $\Omega \subset \mathbb{R}^d$,  as  is usual in mathematical analysis, we denote by $C_c^{1}(\Omega,\mathbb{R}^d )$  the set of functions  $\phi  \,:\,\Omega  \rightarrow  \mathbb{R}^d $ that  have  compact support  and continuous first derivative. We also write  $C^{\infty}(\Omega)$ for the class of functions   $\phi  \,:\, \Omega   \rightarrow  \mathbb{R} $  which have derivatives of all orders. 
The function   $\psi  \,:\, \mathbb{R}^d   \rightarrow \mathbb{R}$ given as
\begin{equation}
\label{eqn:mollifier}
\psi(z) \,\,=\,\,  \begin{cases}
C_1\,\exp\left(  \frac{-1}{1  -  \|x  \|_2^2}  \right)  & \,\,\,\,\,\,\,\text{if} \,\,\,\|x\|_2  \,<\,1,\\
0 &\,\,\,\,\,\,\,\text{otherwise},
\end{cases}
\end{equation}
is  called a test function if  $C_1$  is chosen such that  $ \int \psi(z) \, dz  \,=\,1$.
We also denote, for $s \in \mathbb{N}$,  by $\boldsymbol{1}_{s}$  the vector $\boldsymbol{1}_{s}   \,:=\, (1,\ldots,1)^{T}   \in \mathbb{R}^s$. Moreover, for vectors $u \in \mathbb{R}^s$ and $v \in \mathbb{R}^t$, we write $w   =  (u^T, v^T)^T \in \mathbb{R}^{s+t}$  for the concatenation of $u$ and $v$.

Furthermore,  we denote by  $L_1(\Omega)$  the set of Lebesgue measurable functions   $f \,:\,  \Omega  \,\rightarrow \, \mathbb{R}$ such that
\[
\displaystyle \int_{\Omega }  \, \vert  f(x)  \vert \mu(dx)  \,< \, \infty,   
\]
where  $\mu$  is the Lebesgue  measure  in $\Omega$. We also denote by $L_{\infty}(
\Omega)$ the set of measurable functions  $f \,:\,  \Omega  \,\rightarrow \, \mathbb{R}$, such that  $\vert f\vert$ is bounded except perhaps in set of $\mu$-measure zero. The supremum norm in
$L_{\infty}(
\Omega)$ is denoted by $\|\cdot\|_{L_{\infty}(\Omega)}$. 


\section{Definition of Bounded  Variation}

We now review some notation regarding general spaces of functions of bounded variation. Let  $\Omega   \,\,\subset  \,\,\mathbb{R}^d$ be an open set.    Recall that the divergence of a function $g \,:\, \Omega \rightarrow \mathbb{R} $ is defined  as
\[
\text{div}(g)(x)\,\,=\,\,\sum_{j = 1}^d  \frac{\partial g(x)}{\partial x_j},  \,\,\,\,\,\forall  x \in \Omega,   
\]
provided that  the partial derivatives involved exist. 

\begin{definition}
	A function $f  \,:\,   \Omega  \rightarrow \mathbb{R}$   has bounded variation if
	\[
	\vert  f \vert_{\text{BV}(\Omega)}  \,\,:=\,\,    \sup \left\{  \underset{\Omega }{\int} f(x) \,\text{div}(g)(x)\,dx   \,; \,\,   g \in  C_c^{1}(\Omega,\mathbb{R}^d ),\,\,       \|g\|_{\infty}    \,\leq  \,1        \right\},
	\]
	is finite, where 
	\[
	\|  g\|_{\infty}  \,\,:=\,\,  \left\|   \left(   \sum_{j=1}^d   g_j^2  \right)^{1/2}  \right\|_{L_{\infty}(\Omega)  }. 
	\]
	The space of functions of bounded variation on $\Omega$ is denoted by $\text{BV}(\Omega)$. We refer the reader to  \cite{ziemer2012weakly}  for a full description of the class of functions of bounded  variation.
\end{definition}

\section{Discussion of Assumptions \ref{as:_rf}--\ref{as:sup} }

\subsection{Piecewise Lipschitz Condition and Assumption \ref{as:sup}  }


To better understand the definition  of $S_1(g,\mathcal{P}_{\epsilon,  \mathcal{S} })$ (see  \eqref{summation1} in the main paper), notice that we can view  $S_1(g,\mathcal{P}_{\epsilon,  \mathcal{S} })$
as a discretization  over the set $\Omega_{2\epsilon}\backslash B_{2\epsilon}(\mathcal{S} )$ of the integral  
of the function   
\begin{equation}
\label{eqn:lebesgue}
m(z_A,\epsilon)=   \frac{1}{\epsilon^{d+1}}   \underset{ B_{\epsilon}(z_{A}) }{\int}\,\vert g(z_A)- g( z)\vert dz. 
\end{equation}
Recall that $z_A$ is a Lebesgue point of $g$ if $\lim_{\epsilon \rightarrow 0}  \, \, \epsilon\, m(z_A,\epsilon)\,=\,0$
\cite[see  Definition  2.18  in][]{giaquinta2010}. Furthermore,  if $g  \in L_1(\mathbb{R}^d)$  
then  almost all points are Lebesgue points. This follows from the Lebesgue differentiation theorem; see Theorem  2.16 in \cite{giaquinta2010}. 
Thus, if $g  \in L_1(\mathbb{R}^d)$, then loosely speaking each term  in the right hand side of (\ref{summation1}) in the main paper  is  $O(\epsilon^{d-1})$,
for  any configuration of  points $z_A$.  
In Assumption  \ref{as:sup},   we further require  
$ S_1(f_0 \circ  h^{-1} ,\mathcal{P}_{\epsilon,  \mathcal{S} }) $  to be bounded.


Next, we show that  piecewise Lipschitz continuity  implies  Assumption  \ref{as:sup}. First, note that if $g$ is piecewise Lipschitz, then  $S_1(g,\mathcal{P}_{\epsilon,  \mathcal{S} })$,  defined in \eqref{summation1} of the paper, is bounded.  To see  this,  assume that $g$ is piecewise Lipschitz   with the set  $\mathcal{S}$.  Then 
\begin{equation}
\label{eqn:lip_s1}
S_1(g,\mathcal{P}_{\epsilon,  \mathcal{S} })   \,\leq \, \underset{A \in \mathcal{P}_{\epsilon,  \mathcal{S} } }{\sum} 
\,\, \underset{  z_{A }  \in A   }{\sup} 
\,\,\, \underset{ B_{\epsilon}(z_{A }  ) }{\int}\, \frac{\vert g(z_{A})\,-\,  g( z)\vert}{\|z_A  -  z\|_2   }\,dz 
\,\leq \,   \frac{L_0\,  \pi^{ \frac{d}{2}  } }{ \Gamma \left(  \frac{d}{2} + 1 \right)}  \,<\,  \infty,
\end{equation}
where the second-to-last inequality follows from the fact that  the volume of a $d$-dimensional ball 
of radius $\epsilon$ is $\pi^{d/2} \epsilon^d/\Gamma(d/2+1)$, as well as the fact that there are
at most on the order of $\epsilon^{-d}$ elements of $\mathcal{P}_{\epsilon,S}$.

Furthermore, with  a similar argument to that in \eqref{eqn:lip_s1},  we can show that if $g$ is piecewise Lipschitz, then  $S_2(g,\mathcal{P}_{\epsilon,  \mathcal{S} })$ is bounded. Therefore, if   $f_0 \circ  h^{-1}$ is piecewise Lipschitz, then $f_0 \circ  h^{-1}$ satisfies Assumption \ref{as:sup}.


\subsection{  Generic Functions that Do Not Satisfy Definition  \ref{def:pie_lip}}

\label{sec:not_pl}
We now  present a general condition  on $f_0$   that implies that $f_0$  is not piecewise Lipschitz. 

Let $f_0  \,:\,  [0,1]^d \,\rightarrow  \mathbb{R}$ such that   $f_0$ is differentiable in $(0,1)^d$. Suppose  that the  gradient of $f_0$ is continuous and unbounded. Then, we claim that $f_0$ is not piecewise Lipschitz. To see this,  notice that there  exists $t^{(0)} \in [0,1]^d$ such that
\[
\underset{t \rightarrow t^{(0)} }{\lim }  \, \|\nabla f_0(t)\|_{\infty} \,=\,\infty.
\]
Hence, without lost of generality 
\[
\underset{t \rightarrow t^{(0)} }{\lim }  \, \left\vert  \frac{\partial f_0(t)}{\partial t_1} \right\vert \,=\,\infty.
\]
Now, let $S \subset [0,1]^d$ satisfy all but the third  condition of Definition 1. Then there exists a positive decreasing sequence $\{\epsilon_m\}_{m=1}^{\infty}$, 
and  $t^{
	(m,1)},t^{(m,2)} \in \{(0,1)^d \backslash B_{\epsilon_m}(S)\}  \cap B_{C\epsilon_m^{1/d}}(t^{(0)})$ such that   $\|t^{(m,1)}-t^{(m,2)}\|_2 <  C \epsilon_m^{1/d} < 1$ for some  constant  $C$, $t^{(m,1)}_j  \,=\,t^{(m,2)}_j $ for $j \in   \{2, \ldots, d\}$, and such that the segment  connecting $t^{(m,1)}$ to $t^{(m,2)}$ does not contain elements in $B_{\epsilon_m}(S)$. Hence, by the mean value theorem,
\[
\frac{  \vert  f_0(t^{(m,1)}) -f_0(t^{(m,2)}) \vert }{  \| t^{(m,1)} -  t^{(m,2)}\|_2 } \,=\,  \left\vert\frac{\partial f_0(t^{(m,1,2)} )}{\partial t_1} \right\vert, 
\]  
where $t^{(m,1,2)}$ is a point in the segment connecting $t^{(m,1)}$ and $t^{(m,2)}$. By the continuity of $\nabla f_0$,  we can make the right-hand side of the previous inequality as large as desired. Therefore,   the function  $f_0$ is not piecewise  Lipschitz.

\subsection{Assumptions \ref{as:_rf}--\ref{as:sup}    and Definition  \ref{def:pie_lip}  Induce Different Function Classes  }

We start  by providing  a stronger condition that Assumption \ref{as:sup}. Specifically, assume that for small  enough $\epsilon$ it holds that
\begin{equation}
\label{eqn:condition}
\sum_{A \in  \mathcal{P}_{\epsilon, \emptyset}}    \text{Vol}(A) \,\underset{z \in  B_{2\epsilon}(A)  }{\sup}\, \| \nabla f_0(z)\|_1 \,\,< \,\,C,   
\end{equation}
for a constant $C$. Then we claim that $f_0$  satisfies  Assumption  \ref{as:sup}. To verify this, we note that by the mean value theorem,
\begin{equation}
\label{eqn:st1}
S_1(f_0, \mathcal{P}_{\epsilon,\emptyset}  )  \,\leq\, \underset{A \in\mathcal{P}_{\epsilon,\emptyset}  }{\sum} \,   C_2\text{Vol}(A) \,  \underset{z \in  B_{\epsilon}(A)}{\sup}  \| \nabla f_0(z)  \|_1
\,<\, C_2 C,
\end{equation}
for some constant $C_2 >0$.
Moreover,
\[
\begin{array}{lll}
T(f_0,z_A)   &\leq &  \underset{z \in B_{2\epsilon}(z_A) }{\sup}            \text{Vol}(B_{\epsilon}(0))   \frac{\left\| \nabla f_0(z)   \right\|_1}{\epsilon^d} \,d \| \nabla\phi \|_{\infty} \,\leq \,  \text{Vol}(A)   \underset{z \in  B_{2\epsilon}(A)  }{\sup}\frac{\left\| \nabla f_0(z)   \right\|_1}{\epsilon^d} \,d \| \nabla\phi \|_{\infty},
\end{array}
\]
which implies that for some constant $C_3 >0$
\begin{equation}
\label{eqn:st2}
S_2(f_0,\mathcal{P}_{\epsilon,\emptyset}  )\,\leq \,  C_3 \underset{A \in\mathcal{P}_{\epsilon,\emptyset}  }{\sum}\, \text{Vol}(A) \underset{z \in  B_{2\epsilon}(A)  }{\sup}\left\| \nabla f_0(z)   \right\|_1 \,< \,C_3 C.\,
\end{equation}

Therefore,  combining (\ref{eqn:st1})  and (\ref{eqn:st2}), we obtain that the condition defined in (\ref{eqn:condition}) implies Assumption \ref{as:sup}.

Next, we exploit  (\ref{eqn:condition})  to show that Assumptions  \ref{as:_rf}--\ref{as:sup} and   Definition  \ref{def:pie_lip} induce different classes of functions.

\begin{example}
	\label{ex1}
	Consider the function $f_0(x)  \,=\,  \sum_{j=1}^{d} \sqrt{x_j} $,  $x \in [0,1]^d$. Then  $f_0 $  satisfies Assumptions \ref{as:_rf}--\ref{as:sup}.   However,  $f_0$  is not  piecewise Lipschitz. 
\end{example}
To verify Example \ref{ex1}, we check that (\ref{eqn:condition}) holds for $f_0  \,:\,   [0,1]^d \rightarrow \mathbb{R} $ defined as  $f_0(x) \,:=\,  \sum_{j=1}^d  \sqrt{x_j} $.  To that end, notice that for some constant $C_4 > 0$, we have that 
\[
\begin{array}{lll}
\displaystyle \sum_{A \in  \mathcal{P}_{\epsilon, \emptyset}}    \text{Vol}(A) \,\underset{z \in  B_{2\epsilon}(A)  }{\sup}\, \| \nabla f_0(z)\|_1  &\leq & C_4 \left[ \displaystyle  \sum_{j_1=1}^{1/\epsilon } \ldots \sum_{j_d=1}^{1/\epsilon }  \epsilon^{d}  \left(   \frac{1}{\sqrt{\epsilon  j_1  }  } \,+\, \ldots\,+\,\   \frac{1}{\sqrt{\epsilon  j_d }) }  \right )    \right]
\\
& \leq   &  C_4  d   \sqrt{\epsilon} \displaystyle  \sum_{j_1=1}^{1/\epsilon }\frac{1}{\sqrt{ j_1  }  } \\
&  \leq  &C_4  d   \sqrt{\epsilon} \displaystyle  \sum_{j_1=1}^{1/\epsilon }    \frac{2}{\sqrt{ j_1  }  +  \sqrt{ j_1 -1 } }\\
& \leq &C_4  d   \sqrt{\epsilon} \displaystyle  \sum_{j_1=1}^{1/\epsilon }     2 \left[ \sqrt{ j_1  }  -  \sqrt{ j_1 -1 }    \right]\\
& \leq &C_4  d   \sqrt{\epsilon}   \sqrt{  \frac{1}{\epsilon}   }\\
& \leq &C_4  d. 
\end{array}
\]
Furthermore, it is straightforward to check that $f_0$ has bounded variation, and hence  $f_0$  satisfies Assumption \ref{as:_rf}. However, 
\[
\underset{t \rightarrow 0 }{\lim }  \, \left\vert  \frac{\partial f_0(t)}{\partial t_1} \right\vert \,=\,\infty,
\]
which by the argument in Section \ref{sec:not_pl}  implies that $f_0$  is not piecewise Lipschitz. 


\section{Outline of the Proof of Theorem \ref{thm:upper_bound} } 
\label{sec:grid_embedding}
We start by  discussing  how to embed a mesh on a $K$-NN   graph  generated under Assumptions \ref{as1}--\ref{as3}. 
This construction will then be used in the next sections  in order to derive an upper bound for the MSE of the $K$-NN-FL  estimator.

The embedding idea that  we will present  appeared   in the flow-based proof of Theorem  4 from  \cite{von2010hitting}  regarding commute distance on $K$-NN-graphs. There,  the authors  introduced the notion  of a valid grid (Definition 17). For a fixed set of design points,  a  grid graph  $G$  is  valid  if, among other things,  $G$ satisfies the following: (i) The grid width is not too small, in the sense that each cell of the grid contains at least one of the design
points.  (ii) The grid width is not too large: points in the same or neighboring cells of the grid
are always connected in the $K$-NN graph.

The   notion of a valid grid  was introduced for fixed design points, but through a minor modification, we construct a grid graph that with high probability satisfies the conditions of a valid grid from \cite{von2010hitting}. 
We now proceed to construct a grid embedding that for any signal will lead to a lower bound on the total variation along the $K$-NN graph. 

Given $N \in \mathbb{N}$,  in $ [0,1]^{d}$   we   construct a  $d$-dimensional grid graph $G_{ \text{lat} } = (V_{\text{lat}}, E_{\text{lat}})$, i.e., a lattice graph, with equal side lengths, and total number of nodes $|V_{\text{lat}} | \,=\, N^d$. Without loss of generality,  we assume that the nodes of the grid correspond to the points 
\begin{equation}
\label{eqn:lattice}
P_{\text{lat}}(N) \,\,=\,\,  \left\{ \left(\frac{i_1}{N}  \,-\,\frac{1}{2N},\ldots, \frac{i_d}{N}  \,-\,\frac{1}{2N}  \right)\,\,:\,\, i_1,\ldots,i_d \in \{1,\ldots,N\}  \right\}.
\end{equation}
Notice that $P_{\text{lat}}(N)$ is the  set of  centers of the  elements of the parition $\mathcal{P}_{N^{-1}}$, with the notation of Section \ref{sec:compact_support} in the main paper. Moreover,  $z,z^{\prime} \in P_{\text{lat}}(N)$  share an edge in the graph $G_{\text{lat}}(N)$ if only if $\|z \,\,-\,\,z^{\prime} \|_2\,\,=\,\,N^{-1}$. If the nodes corresponding to $z, z^{\prime}$ 
share an edge, then we will write  $(z,z^{\prime}) \in E_{\text{lat}}(N)$.
Note that the lattice $G_{\text{lat}}(N)$ is constructed in $[0,1]^d$  and not in the set $\mathcal{X}$. However, this lattice can be transformed into a mesh in the covariate space through the homeomorphism from Assumption \ref{as3}, by $I(N) = h^{-1}\{P_{\text{lat}}(N)\}$.  
We can use $I(N)$ to perform a quantization in the domain $\mathcal{X}$  by using the cells  associated with $I(N)$. See \cite{alamgir2014density} for more general aspects of quantizations under Assumptions \ref{as1}--\ref{as3}. 

Using  this mesh $I(N)$, which depend on the homeomorphism $h$,   for any signal   $\theta  \in \mathbb{R}^{n}$,  we can construct two vectors  denoted by
$\theta_I \in \mathbb{R}^n$ and   $\theta^I \in \mathbb{R}^{N^d}$. The former ($\theta_I$) is a signal vector that is constant within mesh cells. 
The latter ($\theta^I$) has coordinates corresponding to the different nodes of the mesh (centers of cells). 
The precise definitions of  $\theta_I$ and $\theta^I $  are given in  Section \ref{sec:quantization}.
Since  $\theta$  and $\theta_I$ have the same dimension, it is natural to  ask how these two relate to each other, at least for the purpose of understanding the empirical process  associated with the $K$-NN-FL estimator. Moreover,  given that $\theta^I \in \mathbb{R}^{N^d}$, one can try to relate the total variation of $\theta^I$  along a $d$-dimensional grid with $N^d$ nodes, with the total variation of the original signal $\theta$ along the $K$-NN graph. We proceed to establish these connections next.

\begin{lemma}
	\label{grid_embedding}
	Assume that $K$ is chosen such that $ K/\log n   \,\rightarrow  \, \infty $. Then 
	with high probability   the following holds. Under Assumptions \ref{as1}--\ref{as3},  there exists  an  $N  \asymp  (n/K)^{1/d}  $   such that for the   corresponding mesh $I(N)$ we have that   
	\begin{equation}
	\label{eqn:inequality1}
	\vert e^T( \theta \,-\, \theta_I)\vert  \,\,  \leq  \,\,  2\,\|e \|_{\infty}\, \| \nabla_{G_K} \theta  \|_1,  \,\,\,\,\,\,\,\, \forall \theta \in \mathbb{R}^{n},\\
	\end{equation}
	for all $e \in \mathbb{R}^n$.
	Moreover, 
	\begin{equation}
	\label{eqn:inequality2}
	\|D\, \theta^I \|_1    \,\, \leq  \,\,    \|\nabla_{G_K}\, \theta\|_1,\,\,\,\,\,\,\,\, \forall \theta \in \mathbb{R}^{n},
	\end{equation}
	where $D$ is the incidence matrix of a d-dimensional  grid graph $G_{grid} = (V_{grid},E_{grid})$ with $V_{grid} =  [N^d]$.
\end{lemma}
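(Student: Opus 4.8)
The plan is to first show that, with high probability, the pulled-back lattice $I(N)$ is a \emph{valid grid} in the sense of \cite{von2010hitting}, and then to extract the two deterministic inequalities from the two defining properties of such a grid. \textbf{Constructing a valid mesh.} Take $N$ of order $(n/K)^{1/d}$ and partition $\mathcal{X}$ into the $N^d$ cells obtained by pulling back, through $h^{-1}$, the $N^d$ boxes induced by the lattice $P_{lat}(N)$; write $C(i)$ for the cell containing $x_i$. I would prove that, with probability tending to one, simultaneously: \emph{(i)} every cell contains at least one design point; and \emph{(ii)} any two design points lying in the same cell, or in two $\|\cdot\|_\infty$-adjacent cells, are joined by an edge of $G_K$. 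Property \emph{(i)} follows because, by Assumptions~\ref{as1}--\ref{as3}, each cell has $\mu$-measure of order $N^{-d}$ and hence probability mass at least $c\,p_{\min}N^{-d}$, so a Chernoff bound and a union bound over the $\asymp N^d$ cells give \emph{(i)} as soon as $nN^{-d}\gtrsim\log n$, which holds here since $K/\log n\to\infty$. For property \emph{(ii)}, the bi-Lipschitz map $h$ of Assumption~\ref{as3} makes the $d_{\mathcal{X}}$-diameter of a cell (and of a cell together with its neighbours) of order $N^{-1}\asymp(K/n)^{1/d}$, while a Bernstein/Chernoff argument shows that, uniformly in $i$, the distance from $x_i$ to its $K$-th nearest neighbour is at least a fixed multiple of $(K/n)^{1/d}$ with high probability (the expected number of sample points in a ball of that radius is $\lesssim K$, and the union bound over the $n$ points closes because $K\gg\log n$); choosing the proportionality constant in $N\asymp(n/K)^{1/d}$ appropriately --- small enough to keep cells non-empty, large enough that a cell and its neighbours have diameter below the $K$-NN radius --- yields \emph{(i)} and \emph{(ii)} together. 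This is the random-design analogue of the valid-grid construction of \cite{von2010hitting}, made precise in Lemma~\ref{embedding}, and it is the main obstacle; the rest is elementary.

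\textbf{The anchor map and \eqref{eqn:inequality1}.} For each cell $C$ choose an anchor design point $a(C)\in C$, which exists by \emph{(i)}; the map $C\mapsto a(C)$ is injective since the cells are disjoint. Set $\theta^I_C=\theta_{a(C)}$ and $(\theta_I)_i=\theta_{a(C(i))}$, so that $\theta_I$ is constant on each cell. Because $x_i$ and $x_{a(C(i))}$ always belong to a common cell, $\{i,a(C(i))\}\in E_K$ by \emph{(ii)} (the statement being vacuous when $i=a(C(i))$). Hence
\[
\big|e^{T}(\theta-\theta_I)\big|=\Big|\sum_{i=1}^{n}e_i\big(\theta_i-\theta_{a(C(i))}\big)\Big|\le\|e\|_\infty\sum_{i=1}^{n}\big|\theta_i-\theta_{a(C(i))}\big|,
\]
and each term $|\theta_i-\theta_{a(C(i))}|$ equals $|\theta_u-\theta_v|$ for the edge $\{u,v\}=\{i,a(C(i))\}$ of $G_K$; since a fixed edge of $G_K$ arises in this way from at most its two endpoints, $\sum_i|\theta_i-\theta_{a(C(i))}|\le 2\|\nabla_{G_K}\theta\|_1$, which is \eqref{eqn:inequality1}.

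\textbf{Inequality \eqref{eqn:inequality2}.} By construction $\|D\,\theta^I\|_1=\sum_{(C,C')\in E_{grid}}|\theta_{a(C)}-\theta_{a(C')}|$. For each grid edge $(C,C')$ the anchors $a(C),a(C')$ lie in adjacent cells, so $\{a(C),a(C')\}\in E_K$ by \emph{(ii)}, and $(C,C')\mapsto\{a(C),a(C')\}$ is injective because $a(\cdot)$ is. Summing over grid edges gives $\|D\,\theta^I\|_1\le\sum_{(u,v)\in E_K}|\theta_u-\theta_v|=\|\nabla_{G_K}\theta\|_1$, which is \eqref{eqn:inequality2}. In short, once the valid mesh of the first step is available both inequalities reduce to bookkeeping with injective edge maps and the triangle inequality; the genuine work is the probabilistic construction of that mesh, and in particular pinning down $N\asymp(n/K)^{1/d}$ so that the cells are simultaneously non-empty and fine enough to be $K$-NN-connected.
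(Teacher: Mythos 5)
Your proposal is correct and follows essentially the same route as the paper: a valid-grid construction in the spirit of \cite{von2010hitting} (lower bound on the $K$-NN radius plus Chernoff-type control of per-cell counts, exactly as in Lemmas \ref{lem:controlling_dist} and \ref{lem:controlling_counts}), followed by the same injective edge-mapping bookkeeping for \eqref{eqn:inequality1} and \eqref{eqn:inequality2}; your arbitrary in-cell anchor plays the role of the paper's closest-to-center representative in Appendix \ref{sec:quantization}, which is an immaterial difference. You also correctly use $N\asymp(n/K)^{1/d}$, which is the scaling actually used in Lemma \ref{lem:controlling_counts} (the statement's $(K/n)^{1/d}$ is a typo).
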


Lemma \ref{grid_embedding}  immediately  provides a path to control the empirical process associated with the $K$-NN-FL  estimator. In particular, by the basic inequality argument (see, for instance, \citealt{wang2016trend}), it is of  interest to bound the quantity  $\varepsilon^T\,(\hat{\theta}   \,-\theta^*)$. In our context, this can be done by noticing that 
\begin{equation}
\label{eqn:basic_inequality2}
\begin{array}{lll}
\frac{1}{n}\varepsilon^{T}( \hat{\theta} \,-\,\theta^*   )    &  =   &  \frac{1}{n}\varepsilon^{T}( \hat{\theta} \,-\,\hat{\theta}_I  )   \,+\,   \frac{1}{n}\varepsilon^{T}(\hat{\theta}_I  \,-\,\theta^*_I   )    \,+\,   \frac{1}{n}\varepsilon^{T}( \theta^*_I   \,-\,\theta^*   ).  \\
\end{array}
\end{equation}
Hence  in the proof of our main theorem stated in   Section \ref{sec:appendix_theorem2},  we proceed to bound each term in the right hand side of (\ref{eqn:basic_inequality2}).


Furthermore, Lemma \ref{grid_embedding}   provides  lower bounds involving $\theta_I$ and $\theta^I $, both of which depend on the homeomorphism $h$. However, we only need to specify $K$, not $h$. In other words,
we can avoid constructing the mesh $I(N)$, which would require knowledge of the unknown function $h$.

\section{Explicit Construction of Mesh for  $K$-NN Embeddings }
\label{sec:quantization}

We now describe in detail the  embedding  idea  from Section \ref{sec:grid_embedding}. 

Importantly, $ I(N)$ (defined in Section \ref{sec:grid_embedding})  can be thought of as a quantization in the domain  $\mathcal{X}$; see \cite{alamgir2014density} for more general aspects of quantizations under Assumptions \ref{as1}--\ref{as3}.  For our purposes, it will be crucial to understand the behavior of $\{x_i\}_{i=1}^n$ and their relationship to  $I(N)$. This is because we will use a grid embedding in order to analyze the behavior of the $K$-NN-FL estimator $\hat{\theta}$.  With that goal in mind, we define a collection of cells, $\{ C(x)\}_{x \in  I(N)}$, in  $\mathcal{X}$ as
\begin{equation}
\label{eqn:cells}
C(x)  \,\, =\,\,  h^{-1}\left(   \left\{ z   \in [0,1]^d \,\,:\,\,   h(x)  \,=\,\underset{x' \in P_{\text{lat}}(N) }{\arg\min } \| z - x' \|_{\infty}  \right\}     \right)    .
\end{equation}

Recall that the goal in this paper is to estimate $\theta^* \in \,\mathbb{R}^n$. However, the mesh construction $I(N)$  has $N^d$  elements, which we denote by $u_1,\ldots,u_{N^d}$. Hence, it is not immediately clear how to evaluate the total variation of $\theta^*$ along the graph corresponding to the mesh.

We first define $\theta_I$, a vector constructed from $\theta$ that incorporates information about the samples $\{x_i\}_{i=1}^{n}$  and the cells $\{ C(x)\}_{x \in  I(N)}$. For $x \in \mathcal{X}$  we write $P_I(x)$ as the point in $I(N)$ such that $x \in C(P_I(x))$. If there is more than one  such point, then we arbitrarily pick one. Then   we collapse  measurements corresponding to  different  observations $x_i$ that fall in the same cell  in  $\{ C(x)\}_{x \in  I(N)}$ into a single  value associated with the observation closest  to the center of the cell after mapping with the homeomorphism.   Thus,  for a given  $\theta \in \mathbb{R}^n$,  we define $\theta_{I} \in \mathbb{R}^{n}$ as
\begin{equation}
\label{eqn:first_proj}
(\theta_I)_{i} \,\,=\,\, \theta_j\,\,\,\,\,\,\,\text{where}\,\,\,\,\,\,  j \,\,\,=\,\,\,\underset{ l \in [n]}{\arg \min } \, \| h( P_I(x_i)) \,-\,h(x_l)\|_{\infty}.
\end{equation}

Next we construct $\theta^I$, a mapping from $\mathbb{R}^n$ to $\mathbb{R}^{N^d}$. For any   $ \theta \in \mathbb{R}^n$,  we can induce  a signal in $\mathbb{R}^{N^d}$ corresponding to the elements in $I(N)$. We write
\[
I_j \,=\, \{  i\in [n]\,:\,    P_I(x_i)\,=\, u_j \},\,\,\,\,j \in [N^d],
\]
where as before  $u_1,\ldots, u_{N^d}$ are the elements of  $I(N)$. If $I_j  \,\neq \, \emptyset$, then there exists $i_j \in I_j$ such that  $(\theta_I)_i \,\,=\,\, \theta_{i_j}$ for all $i \in I_j$. Here  $\theta_I$ is the vector defined in (\ref{eqn:first_proj}).  Using this notation, for a vector  $\theta \in \mathbb{R}^n$, we write  $\theta^{I} \,=\,( \theta_{i_1},\ldots, \theta_{i_{N^d} } )$. We use the convention  that $\theta_{i_j}   \,=\,0$ if   $I_j \,=\,  \emptyset$.

Hence, for a given  $\theta \in \mathbb{R}^n$,  we have  constructed two very intuitive signals $\theta_I \in \mathbb{R}^n $  and  $\theta^I \in \mathbb{R}^{N^d}$. The former  forces covariates $x_i$ in the same cell to take the same signal value. The latter has coordinates corresponding to the different nodes of the mesh (centers of cells).


\section{Auxiliary Lemmas for Proof of Theorem  \ref{thm:upper_bound} }
\label{sec:aux_thm1}

In several of the lemmas we will present next, we will implicitly  condition on one of the observations, for example $x_1$.  When doing so, we will exploit the fact that  the other observations are independent  as by our model  assumption $x_i \overset{\text{ind}}{\sim } p(x)$,  $i=1,\ldots,n$.   

The following lemma  is a well-known concentration inequality  for binomial random variables. It can  be found as Proposition 27 in \cite{von2010hitting}.

\begin{lemma}
	\label{lem:binom_concentration} Let  $m$  be a $\text{Binomial}(M,q)$  random variable. Then, for all $\delta \in (0,1]$,
	\[
	\begin{array}{lll}
	\mathbb{P}\left( m \,\leq \, (1-\delta)Mq  \right) &\leq &  \exp\left( -\frac{1}{3}\delta^2 M q  \right),\\
	\mathbb{P}\left( m \,\geq \, (1+\delta)Mq  \right) &\leq &  \exp\left( -\frac{1}{3}\delta^2 M q  \right).
	\end{array}
	\]
\end{lemma}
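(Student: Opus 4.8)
The plan is to establish both inequalities by the classical exponential-moment (Chernoff) argument, and then convert the exact Chernoff exponent into the stated form $\tfrac13\delta^{2}Mq$ via two elementary scalar inequalities. Writing $m=\sum_{i=1}^{M}X_{i}$ with the $X_{i}$ i.i.d.\ $\mathrm{Bernoulli}(q)$, the moment generating function is $\mathbb{E}[e^{tm}]=(1-q+qe^{t})^{M}$ for every $t\in\mathbb{R}$, and the bound $1+x\le e^{x}$ gives $1-q+qe^{t}\le \exp\!\big(q(e^{t}-1)\big)$, so $\mathbb{E}[e^{tm}]\le \exp\!\big(Mq(e^{t}-1)\big)$.

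For the upper tail I would fix $t>0$ and apply Markov's inequality to $e^{tm}$ at level $e^{t(1+\delta)Mq}$, yielding $\mathbb{P}\big(m\ge(1+\delta)Mq\big)\le \exp\!\big(Mq(e^{t}-1)-t(1+\delta)Mq\big)$; minimizing the exponent over $t>0$ gives $t=\log(1+\delta)$ and hence $\mathbb{P}\big(m\ge(1+\delta)Mq\big)\le \exp\!\big(-Mq\,[(1+\delta)\log(1+\delta)-\delta]\big)$. For the lower tail I would apply Markov to $e^{-tm}$ with $t>0$, optimize to obtain $t=-\log(1-\delta)$ (legitimate for $\delta\in(0,1)$; the endpoint $\delta=1$ is handled directly since $\mathbb{P}(m\le 0)=(1-q)^{M}\le e^{-Mq}\le e^{-Mq/3}$), and arrive at $\mathbb{P}\big(m\le(1-\delta)Mq\big)\le \exp\!\big(-Mq\,[(1-\delta)\log(1-\delta)+\delta]\big)$.

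It then remains to verify, for $\delta\in(0,1]$, the elementary scalar inequalities $(1+\delta)\log(1+\delta)-\delta\ge \tfrac13\delta^{2}$ and $(1-\delta)\log(1-\delta)+\delta\ge \tfrac12\delta^{2}\ge \tfrac13\delta^{2}$ (the second for $\delta<1$). For each I would let $\phi(\delta)$ be the left side minus $\tfrac13\delta^{2}$ (resp.\ minus $\tfrac12\delta^{2}$), observe $\phi(0)=0$, and show $\phi'\ge 0$ on the relevant interval: for the second inequality $\phi'(\delta)=-\log(1-\delta)-\delta\ge 0$ follows at once from the power series of $-\log(1-\delta)$; for the first, $\phi'(\delta)=\log(1+\delta)-\tfrac23\delta$ has second derivative $-(1+\delta)^{-2}<0$, so $\phi'$ is concave and attains its minimum over $[0,1]$ at an endpoint, where $\phi'(0)=0$ and $\phi'(1)=\log 2-\tfrac23>0$. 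This convexity bookkeeping is the only ``obstacle'', and it is entirely routine; everything else is the textbook Chernoff computation. (Since the lemma is quoted verbatim from Proposition 27 of \cite{von2010hitting}, one could alternatively just cite it.)
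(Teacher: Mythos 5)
Your proof is correct, but it is worth noting that the paper itself offers no proof of this lemma at all: it is stated verbatim as a known result, with a pointer to Proposition 27 of \cite{von2010hitting}, so there is no internal argument to compare against. What you supply is the standard self-contained Chernoff derivation that underlies that citation, and all of its steps check out: the MGF bound $\mathbb{E}[e^{tm}]\le\exp(Mq(e^t-1))$ via $1+x\le e^x$, the optimizations $t=\log(1+\delta)$ and $t=-\log(1-\delta)$ giving the exact exponents $(1+\delta)\log(1+\delta)-\delta$ and $(1-\delta)\log(1-\delta)+\delta$, and the scalar comparisons to $\tfrac13\delta^2$. Your calculus bookkeeping is also right where it is most delicate: for the upper tail, $\phi'(\delta)=\log(1+\delta)-\tfrac23\delta$ is indeed concave (its second derivative is $-(1+\delta)^{-2}$), so checking $\phi'(0)=0$ and $\phi'(1)=\log 2-\tfrac23>0$ legitimately gives $\phi'\ge 0$ on $[0,1]$ even though $\phi'$ is not monotone there; for the lower tail the power-series bound yields the stronger constant $\tfrac12\delta^2$, and the endpoint $\delta=1$ is handled directly by $\mathbb{P}(m=0)=(1-q)^M\le e^{-Mq}$. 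In short, you prove from scratch what the authors merely cite; either route is acceptable, and yours has the advantage of making the constant $\tfrac13$ transparent.
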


We now use Lemma \ref{lem:binom_concentration}   to  bound   the  distance  between  any design point and its  $K$-nearest neighbors.

\begin{lemma}
	\label{lem:controlling_dist}
	(\textbf{See Proposition 30 in \cite{von2010hitting}})  Denote by $R_K(x)$ the distance from  $x \in \mathcal{X}$  to its $K$th  nearest neighbor in the set $\{x_1,\ldots,x_n\}$. Setting
	\[
	\begin{array}{lll}
	R_{K,\max}   &   = &  \underset{1 \leq  i \leq n}{\max}\,\, R_{K}(x_i),  \\
	R_{K,\min}   &   = &  \underset{1 \leq  i \leq n}{\min}\,\, R_{K}(x_i),  \\
	\end{array}
	\] 
	we have that 
	\[
	\textup{pr}\left\{ a\left(\frac{K}{n}\right)^{1/d}    \,\,\leq  \,\, R_{K,\min} \,\,\leq \,\,  R_{K,\max} \,\,\leq \,\,\tilde{a} \left(\frac{K}{n}\right)^{1/d}   \right\}  \,\,\geq\,\, 1\,\,-\,\,  n\,\exp(-K/3)\,\,-\,\, n\,\exp(-K/12),
	\]
	under Assumptions \ref{as1}--\ref{as3}, where  $a \,=\, 1/(2\,c_{2,d}\,p_{\max})^{1/d} $,  and   $\tilde{a}    \,\,=\,\, 2^{1/d}/(p_{\min}\,c_{1,d})^{1/d}$.
\end{lemma}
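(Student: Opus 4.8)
Since this statement is (a mild restatement of) Proposition~31 of \cite{von2010hitting}, I would follow that argument. The starting point is an elementary reformulation in terms of ball counts: for a radius $\rho>0$ and an index $i$, write $M_i(\rho) := \#\{\, j\in[n]\setminus\{i\} : x_j \in B_\rho(x_i)\,\}$. Then $R_K(x_i) < \rho$ iff $M_i(\rho)\ge K$, and $R_K(x_i) \ge \rho$ iff $M_i(\rho)\le K-1$. Consequently, with $r_{\min} := a(K/n)^{1/d}$ and $r_{\max} := \tilde a(K/n)^{1/d}$, and using that $p$ is absolutely continuous so that ties have probability zero,
\[
\{R_{K,\max} \le r_{\max}\} \;\supseteq\; \bigcap_{i\in[n]} \{ M_i(r_{\max}) \ge K \}, \qquad \{R_{K,\min} \ge r_{\min}\} \;=\; \bigcap_{i\in[n]} \{ M_i(r_{\min}) \le K-1\}.
\]
Hence it suffices to control the two families of binomial counts $\{M_i(r_{\min})\}_{i}$ and $\{M_i(r_{\max})\}_{i}$ simultaneously over $i\in[n]$.

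Next I would fix $i$ and condition on $x_i$. Conditionally, $M_i(\rho) \sim \text{Binomial}(n-1, q_i)$ with $q_i := \mathbb{P}(X\in B_\rho(x_i)) = \int_{B_\rho(x_i)} p(x)\,\mu(dx)$. For $n$ large enough that $r_{\max} < r_0$ (the regime of interest, since in applications $K/n$ is tiny), Assumptions \ref{as1} and \ref{as2} give, \emph{uniformly over the position of} $x_i$, the sandwich $p_{\min} c_{1,d}\rho^d \le q_i \le p_{\max} c_{2,d}\rho^d$. Substituting $\rho = r_{\max}$ and using $\tilde a^d = 2/(p_{\min} c_{1,d})$ yields $(n-1)q_i \ge 2(1-1/n)K$; substituting $\rho = r_{\min}$ and using $a^d = 1/(2 c_{2,d} p_{\max})$ yields $(n-1)q_i \le K/2$.

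I would then apply the binomial deviation bound of Lemma~\ref{lem:binom_concentration}. For the upper bound on $R_{K,\max}$: the event $\{M_i(r_{\max}) < K\}$ is a lower-deviation event relative to a conditional mean of order $2K$, so the lower-tail estimate of Lemma~\ref{lem:binom_concentration} (with the deviation parameter $\delta$ matched to $\tilde a$ as in \cite{von2010hitting}) bounds its conditional probability by $\exp(-K/12)$; since this bound does not depend on $x_i$, it holds unconditionally, and a union bound over $i\in[n]$ contributes $n\exp(-K/12)$. For the lower bound on $R_{K,\min}$: the event $\{M_i(r_{\min}) \ge K\}$ is an upper-deviation event relative to a conditional mean $\le K/2$, so the upper-tail estimate of Lemma~\ref{lem:binom_concentration} bounds its conditional probability by $\exp(-K/3)$, and a union bound over $i$ contributes $n\exp(-K/3)$. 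Intersecting the two good events gives exactly the claimed inequality, with the constants $a,\tilde a$ inherited from the radii chosen above.

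I do not expect any genuine obstacle here; this is a standard nearest-neighbor distance concentration fact. The only points requiring a little care are: (i) $R_K(x_i)$ depends on $x_i$ itself, which is why I condition on $x_i$ first and then observe that the binomial-parameter sandwich — and hence the conditional tail bound — is uniform in the location of $x_i$, so it passes to the union bound without loss; (ii) verifying $r_{\min}, r_{\max} < r_0$ so that Assumption~\ref{as2} is applicable, which is automatic once $K/n$ is small enough; and (iii) the precise exponents $1/12$ and $1/3$, which are determined by the particular choices of $\delta$ matched to $a$ and $\tilde a$, exactly as in \cite{von2010hitting}.
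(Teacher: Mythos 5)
Your proposal is correct and follows essentially the same route as the paper's proof: reduce the nearest-neighbor distance events to binomial ball counts, sandwich the ball probability uniformly via Assumptions \ref{as1}--\ref{as2} with the radii $a(K/n)^{1/d}$ and $\tilde a(K/n)^{1/d}$, apply the binomial tail bounds of Lemma \ref{lem:binom_concentration}, and finish with a union bound over $i$. The only (immaterial) difference is that you condition on $x_i$ and work with $\mathrm{Binomial}(n-1,q_i)$, whereas the paper compares against a dominating $\mathrm{Binomial}(n,\mu_{\max})$ (resp.\ dominated $\mathrm{Binomial}(n,\mu_{\min})$) variable before taking the maximum over $i$.
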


\begin{proof}
	This proof closely follows that of Proposition 30 in \cite{von2010hitting}.
	
	First note that for any $x \in \mathcal{X}$,  by Assumptions \ref{as1}--\ref{as2} we have
	\[
	\begin{array}{l}
	\text{pr}\left\{ x_1  \in B_r(x) \right\}    \,\, = \,\, \int_{B_r(x)} p(t) \mu(dt)     \,\, \leq \,\,   p_{\max}\,\mu\{B_r(x)\} \,\,\leq \,\,  c_{2,d}\, r^d\,p_{\max} \,\,:=\,\,  \mu_{\max} \,\,<\,\,1, \\   
	\end{array}
	\]
	for small enough $r$. We also have that $R_K(x)   \,\leq \, r$  if and only if there are at least $K$  observations  $\{x_i\}$ in $B_r(x)$. Let  $V \,\sim\,  \text{Binomial}(n,\,\mu_{\max})$. Then, 
	\[
	\mathbb{P}\left(  R_K(x) \,\leq\,  r \right)\,\,\leq \,\,     \mathbb{P}\left( V \,\geq \,K \right)\,\,=\,\,   \mathbb{P}\left( V \,\geq \,2\,\mathbb{E}(V) \right), 
	\]
	where the last equality follows by choosing $a \,=\, 1/(2\,c_{2,d}\,p_{\max})^{1/d} $, and $r  \,=\,  a(K/n)^{1/d} $. Therefore, from Lemma \ref{lem:binom_concentration} we obtain
	\[
	\begin{array}{lll}
	\textup{pr}\left\{ R_{K,\min} \,\leq\,  a(K/n)^{1/d}  \right\}  &\leq  &   \textup{pr}\left\{ \exists  i\,:\,  R_K(x_i) \,\leq\,  a(K/n)^{1/d} \right\}  \\ 
	& \leq &   n\,\underset{1 \leq i \leq n}{\max}  \textup{pr}\left\{  R_K(x_i)  \leq r  \right\} \\
	& \leq  &  n\,\exp(-K/3).
	\end{array}
	\]
	Furthermore,
	\[
	\textup{pr}\left\{  x_1  \in B_r(x) \right\}   \,\, = \,\, \int_{B_r(x)} p(t) \mu(dt)   \,\,\geq \,\, p_{\min}\mu\{B_r(x)\}\,\,\geq \,\,  c_{1,d}\, r^d\,p_{\min} \,\,:=\,\,  \mu_{\min} \,\,>\,\,0, 
	\]
	and we arrive with a similar argument to
	\begin{equation}
	\label{eqn:rkmax}
	\textup{pr}\left\{  R_{K,\max}\,>\,  \tilde{a}(K/n)^{1/d}  \right\} \,\,\leq  \,\, n\,\exp(-K/12), 
	\end{equation}
	where $\tilde{a}    \,\,=\,\, 2^{1/d}/(p_{\min}\,c_{1,d})^{1/d}$.
\end{proof}

The upper bound in Lemma  \ref{lem:controlling_dist}   allows us  to control the maximum distance between  $x_i$  and $x_j$ whenever  they are connected in the $K$-NN  graph. This maximum distance   scales as $(K/n)^{1/d}$.  The lower bound, on the other hand,  prevents  $x_i$  from  being arbitrarily   close to its $K$th nearest neighbor. These properties are particularly  important as they will be used to   characterize the penalty  $\|  \nabla_G \theta^* \|_1$.

As explained in \cite{wang2016trend}, there are different strategies for proving convergence rates in generalized lasso problems such as  (\ref{eq:fusedlasso}) from the main paper. Our approach here is similar in spirit to \cite{padilla2016dfs}, and it is based on considering a  lower bound for the penalty function  induced by the $K$-NN graph. This lower bound will arise by constructing a signal over the grid graph induced by the cells $\{ C(x)\}_{x \in  I(N)}$ defined in \eqref{eqn:cells}. Towards that  end, we provide the following lemma characterizing the minimum and maximum  number of observations  $\{x_i\}$   that fall within each cell $C(x)$. With an abuse of notation, we write $\vert C(x) \vert  \,=\,  \vert\{i \in [n] \,:\,  x_i \in C(x)  \} \vert$.

We now  present a   result related to  Proposition  28 in  \cite{von2010hitting}.

\begin{lemma}
	\label{lem:controlling_counts}
	Assume that $N$  in the construction of $P_{\text{lat}}(N)$ defined in \eqref{eqn:lattice}  is chosen as
	\[
	N\,\,=\,\,  \ceil[\Bigg]{  \frac{   3\,\sqrt{d}\,\left( 2\,c_{2,d}\,p_{\max} \right)^{1/d} \,n^{1/d}  }{L_{\min}\,K^{1/d}} }.
	\]
	Then there exist positive constants $\tilde{b}_1$ and $\tilde{b}_2$ depending on $L_{\min}$,  $L_{\max}$, $d$, $p_{\min}$, $p_{\max}$, $c_{1,d}$, and $c_{2,d} $ defined in Assumptions \ref{as1}--\ref{as3}, such that
	\begin{equation}
	\begin{array}{lll}
	\textup{pr}\bigg\{\underset{x  \in  I(N) }{\max}\vert C(x)\vert   \,\,\geq \,\, (1+\delta)\,c_{2,d}\,\tilde{b}_1\,K  \bigg\}    &  \leq & N^d \exp\left(  -\frac{1}{3}\delta^2\,\tilde{b}_2\,c_{1,d}\,K   \right),\\
	\vspace{0.3in}
	\textup{pr}\bigg\{ \underset{x  \in  I(N) }{\min}\vert C(x)\vert   \,\,\leq \,\, (1-\delta)\,c_{1,d}\,\tilde{b}_2\,K  \bigg\}  &\leq  & N^d \exp\left(  -\frac{1}{3}\delta^2\,\tilde{b}_2\,c_{1,d}\,K   \right),
	\end{array}
	\end{equation}
	for all $\delta \in (0,1)$,  with  $a  \,\,=\,\,1/(2\,c_{2,d}\,p_{\max})^{1/d}$, and $\tilde{a}    \,\,=\,\, 2^{1/d}/(p_{\min}\,c_{1,d})^{1/d}$. Moreover,  the symmetric $K$-NN  graph     has maximum degree $d_{\max}$  satisfying
	\[
	\textup{pr}\left(  d_{\max}    \,\geq \,  \frac{3}{2}\,p_{\max}\, c_{2,d}  \,\tilde{a}^d\, K     \right) \,\,\leq \,\,    n\, \left\{ \exp\left(-\frac{K}{12}\right)  \, +\,   \exp\left(-\frac{p_{\min} c_{1,d}\tilde{a}^d  K }{24}\right)    \right\}.
	\] 
	
	Define the event $\Omega$  as:  ``If  $x_i  \in C(x_i^{\prime})$ and $x_j \in C(x_j^{\prime})$  for $x_i^{\prime},x_j^{\prime} \in I(N)$ with $\|h(x_i^{\prime}) \,-\,h(x_j^{\prime}) \|_2 \,\leq\,  \,N^{-1} $, then $x_i$ and $x_j$  are connected in the $K$-NN graph". Then,  
	$$ \textup{pr}\left(\Omega\right)  \,\, \geq \,\,  1\,\,-\,\,  n\,\exp(-K/3).$$
	
\end{lemma}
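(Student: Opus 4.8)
The proof combines three ingredients: uniform two-sided control of the probability mass $q_x := \mathbb{P}(X \in C(x))$ of each mesh cell $C(x)$; the binomial concentration inequality of Lemma~\ref{lem:binom_concentration}; and the nearest-neighbor distance bounds of Lemma~\ref{lem:controlling_dist}. I would begin with the cell-mass estimate. By construction $h(C(x))$ is the $\ell_\infty$-Voronoi cell of $h(x)$ in the regular lattice $P_{lat}(N)$, hence an axis-aligned cube of side $1/N$ centered at $h(x)$ and contained in $[0,1]^d$; it therefore contains the Euclidean ball of radius $1/(2N)$ about $h(x)$ and is contained in the Euclidean ball of radius $\sqrt d/(2N)$ about $h(x)$. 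Applying the bi-Lipschitz bounds of Assumption~\ref{as3} to $h^{-1}$ sandwiches $C(x)$, in the $d_{\mathcal X}$ metric, between balls of radii $(2NL_{\max})^{-1}$ and $\sqrt d\,(2NL_{\min})^{-1}$, and Assumptions~\ref{as2} and \ref{as1} then give $p_{\min}c_{1,d}(2NL_{\max})^{-d} \le q_x \le p_{\max}c_{2,d}\,d^{d/2}(2NL_{\min})^{-d}$, valid once $n$ is large enough that both radii fall below $r_0$. Substituting the prescribed value of $N$ --- for which $N^d \asymp n/K$ with explicit constants, the ceiling changing things only by a bounded factor once $n/K$ is large --- yields constants $\tilde b_1, \tilde b_2 > 0$ depending only on $L_{\min}, L_{\max}, d, p_{\min}, p_{\max}, c_{1,d}, c_{2,d}$ such that $c_{1,d}\tilde b_2 K \le nq_x \le c_{2,d}\tilde b_1 K$ for every $x \in I(N)$.

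The count bounds then follow routinely. Each $|C(x)|$ is $\mathrm{Binomial}(n, q_x)$, so I would rewrite the threshold $(1+\delta)c_{2,d}\tilde b_1 K$ as $(1+\gamma)nq_x$ with $\gamma \ge \delta$ (using $nq_x \le c_{2,d}\tilde b_1 K$), apply Lemma~\ref{lem:binom_concentration} --- the case $\gamma > 1$ being handled by monotonicity of the tail --- to obtain $\mathbb{P}(|C(x)| \ge (1+\delta)c_{2,d}\tilde b_1 K) \le \exp(-\tfrac13 \delta^2 nq_x) \le \exp(-\tfrac13 \delta^2 \tilde b_2 c_{1,d}K)$, and do the symmetric computation for the lower tail after writing $(1-\delta)c_{1,d}\tilde b_2 K$ as $(1-\gamma)nq_x$ with $\gamma \in [\delta, 1]$. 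A union bound over the $N^d$ cells produces the two displayed inequalities.

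For the maximum degree, I would condition on the event $\mathcal{E} := \{R_{K,\max} \le \tilde a(K/n)^{1/d}\}$, which by Lemma~\ref{lem:controlling_dist} has probability at least $1 - n\exp(-K/12)$. On $\mathcal{E}$, every neighbor $j \neq i$ of a vertex $i$ in the symmetric $K$-NN graph satisfies $d_{\mathcal X}(x_i, x_j) \le R_{K,\max} \le \tilde a(K/n)^{1/d}$ --- whether the edge arises from $x_j \in \mathcal{N}_K(x_i)$ or from $x_i \in \mathcal{N}_K(x_j)$ --- so $\deg(i) \le Z_i := |\{l \neq i : x_l \in B_{\tilde a(K/n)^{1/d}}(x_i)\}|$, and conditionally on $x_i$ the variable $Z_i$ is stochastically dominated by $\mathrm{Binomial}(n, \bar q)$ with $\bar q = p_{\max}c_{2,d}\tilde a^d K/n$ by Assumptions~\ref{as1}--\ref{as2}. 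Lemma~\ref{lem:binom_concentration} with $\delta = 1/2$ then bounds $\mathbb{P}(Z_i \ge \tfrac32 p_{\max}c_{2,d}\tilde a^d K)$ by $\exp(-\tfrac1{12}p_{\max}c_{2,d}\tilde a^d K) \le \exp(-\tfrac1{24}p_{\min}c_{1,d}\tilde a^d K)$ (using $p_{\min}c_{1,d} \le p_{\max}c_{2,d}$); a union bound over $i$, added to $\mathbb{P}(\mathcal{E}^c)$, gives the stated bound on $d_{\max}$. Finally, for the event $\Omega$: if $x_i \in C(x_i')$ and $x_j \in C(x_j')$ with $\|h(x_i') - h(x_j')\|_2 \le N^{-1}$, then $h(x_i)$ and $h(x_j)$ lie in cubes of side $1/N$ about $h(x_i')$ and $h(x_j')$, so the triangle inequality gives $\|h(x_i) - h(x_j)\|_2 \le (\sqrt d + 1)/N$ and hence $d_{\mathcal X}(x_i, x_j) \le (\sqrt d + 1)/(NL_{\min})$; the coefficient $3\sqrt d$ in the choice of $N$ is exactly what makes this at most $a(K/n)^{1/d}$ with $a = (2c_{2,d}p_{\max})^{-1/d}$. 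On $\{R_{K,\min} \ge a(K/n)^{1/d}\}$, which has probability at least $1 - n\exp(-K/3)$ by Lemma~\ref{lem:controlling_dist}, we then have $d_{\mathcal X}(x_i, x_j) \le R_{K,\min} \le R_K(x_j)$, so $x_i \in \mathcal{N}_K(x_j)$ (up to the null event of distance ties) and $(i,j)$ is an edge of the $K$-NN graph; this gives $\mathbb{P}(\Omega) \ge 1 - n\exp(-K/3)$.

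The only delicate point is the constant bookkeeping in the first step: one has to check that a single pair $\tilde b_1, \tilde b_2$ works, i.e. that $\tilde b_2 c_{1,d}K$ simultaneously lower-bounds every $nq_x$ (this quantity enters both tail exponents) while $c_{2,d}\tilde b_1 K$ and $c_{1,d}\tilde b_2 K$ respectively upper- and lower-bound every $nq_x$, after inserting the explicit ceiling-rounded value of $N$ and restricting to $n$ large enough for the volume estimates of Assumption~\ref{as2} to be in force. Everything else is a routine combination of binomial concentration and union bounds.
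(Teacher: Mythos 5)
Your proposal is correct and follows essentially the same route as the paper: sandwich each cell $C(x)$ between $d_{\mathcal X}$-balls of radius $\asymp (K/n)^{1/d}$ via Assumption~\ref{as3} and the choice of $N$, apply Lemma~\ref{lem:binom_concentration} with a union bound over the $N^d$ cells, bound $d_{\max}$ on the event $R_{K,\max}\le \tilde a(K/n)^{1/d}$ by counting points in balls of that radius, and deduce $\Omega$ from the triangle inequality in $h$-coordinates together with the $R_{K,\min}$ bound of Lemma~\ref{lem:controlling_dist}. The only (cosmetic) deviation is in the degree step, where you use stochastic domination by $\mathrm{Binomial}(n,\bar q)$ and the inequality $p_{\min}c_{1,d}\le 2\,p_{\max}c_{2,d}$ to reach the exponent $p_{\min}c_{1,d}\tilde a^d K/24$, while the paper lower-bounds $(n-1)\,\mathbb{P}(x_1\in B_{\tilde a(K/n)^{1/d}}(x))$ directly; both yield the stated bound.
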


\begin{proof}
	Let  $x_i$  and $x_j$ such that $x_i  \in C(x_i^{\prime})$ and $x_j \in C(x_j^{\prime})$  for $x_i^{\prime},x_j^{\prime} \in I(N)$ with $\|h(x_i^{\prime}) \,-\,h(x_j^{\prime}) \|_2 \,\leq\,  \,N^{-1} $. Then the following holds using Assumption \ref{as3}:
	\[
	\def\arraystretch{1.8}
	\begin{array}{lll}
	d_{\mathcal{X}}(x_i,x_j)   & \leq & L_{\min}^{-1}\|  h(x_i)  \,-\,h(x_j) \|_2\\
	& \leq &    L_{\min}^{-1}  \,  \sqrt{d}\, \|  h(x_i)  \,-\,h(x_j) \|_{\infty}\\
	&\leq  &   L_{\min}^{-1}  \,  \sqrt{d} \left\{  \|  h(x_i)  \,-\,h(x_i^{\prime}) \|_{\infty} \,\,+\,\, \|  h(x_i^{\prime})  \,-\,h(x_j^{\prime}) \|_{\infty}\,\,+\,\, \|  h(x_j^{\prime})  \,-\,h(x_j) \|_{\infty} \right\}  \\
	&  < & 3\,L_{\min}^{-1}  \,  \sqrt{d}\, N^{-1} \\
	& \leq &     a\left(\frac{K}{n}\right)^{1/d}
	\end{array}
	\] 
	with   $a  \,\,=\,\,1/(2\,c_{2,d}\,p_{\max})^{1/d} $, where the first inequality follows from Assumption \ref{as3}. Therefore, as in the proof of Lemma \ref{lem:controlling_dist},
	\[
	\textup{pr}\left(\Omega\right) \,\,\geq \,\,\,\textup{pr}\left\{ a\left(\frac{K}{n}\right)^{1/d}    \leq  R_{K,\min}  \right\}\, \geq\,  1\,\,-\,\,  n\,\exp(-K/3).
	\] 
	Next we proceed to  derive an upper bound on the counts $\{C(x)\}$.	Assume that $x \in h^{-1}\{P_{\text{lat}}(N)\}$,  and $x^{\prime } \in C(x)$. Then,
	\[
	\def\arraystretch{1.8}
	\begin{array}{lll}
	d_{\mathcal{X}}(x,x^{\prime})  & \leq &  \frac{1}{L_{\min}}  \|   h(x) - h(x^{\prime}) \|_2 \\
	& \leq &  \frac{ \sqrt{d}  }{L_{\min}}  \|   h(x) - h(x^{\prime}) \|_{\infty} \\
	& \leq & \frac{\sqrt{d}}{2\,  L_{\min}\,N }\\
	&   \leq &  \frac{a}{6} \left( \frac{K}{n} \right)^{1/d}   \\ 
	& =&  \frac{(\tilde{b}_1)^{1/d} }{p_{\max} ^{1/d} }\,  \left( \frac{K}{n} \right)^{1/d},
	\end{array}
	\]
	where the first inequality follows from Assumption \ref{as3}, the second from the definition of $P_{\text{lat}}(N)$,  the third one from the choice of $N$, and $\tilde{b}_1$ is an appropriate  constant. Therefore,
	\begin{equation}
	\label{one_side}
	C(x)   \,\subset\, B_{  \frac{ (\tilde{b}_1)^{1/d}}{p_{\max} ^{1/d} } \,  \left( \frac{K}{n} \right)^{1/d}  }(x).
	\end{equation}
	On the other hand, if $\tilde{b}_2$  is such that 
	\[
	\frac{(\tilde{b}_2)^{1/d}}{p_{\min}^{1/d} }  \,\,\leq  \,\,  \frac{a}{2 \,L_{\max}}\,\frac{L_{\min} }{3\,   \sqrt{d} }\,,
	\]
	then if 
	\[
	d_{\mathcal{X}}(x,x^{\prime})\,\,\leq \,\,  \frac{  (\tilde{b}_2)^{1/d} }{p_{\min}^{1/d} }\left(\frac{K}{n}\right)^{1/d}, 
	\]
	we have that  by Assumption \ref{as3},   for large enough $n$
	\[
	\begin{array}{lll}
	\|   h(x) \,-\,  h(x^{\prime}) \|_{\infty}    &  \leq & \frac{1}{2\,N}.
	\end{array}
	\]
	And so,
	\begin{equation}
	\label{second_side}
	B_{ \frac{ (\tilde{b}_2)^{1/d }}{ p_{\min}^{1/d} }\,  \left( \frac{K}{n} \right)^{1/d}  }(x)   \,\subset\, C(x).
	\end{equation}
	
	In consequence,
	\[
	\def\arraystretch{1.8}
	\begin{array}{lll}
	\textup{pr}\left\{ \underset{x  \in   I(N) }{\max}\vert C(x)\vert   \,\,\geq \,\, (1+\delta)\,c_{2,d}\,\tilde{b}_1\,K  \right\} & \leq  & 
	\underset{x  \in  I(N)  }{\sum} \,\textup{pr}\left\{\vert C(x)\vert   \,\,\geq \,\, (1+\delta)\,c_{2,d}\,\tilde{b}_1\,K  \right\}\\
	& \leq &\underset{x  \in   I(N)}{\sum} \,\textup{pr}\left[\vert C(x)\vert   \,\,\geq \,\, (1+\delta)\,n\,  \textup{pr}\{x_1 \in C(x)\} \right]\\
	& \leq & \underset{x  \in  I(N)}{\sum} \, \exp\left(  -\frac{1}{3}\delta^2\,\tilde{b}_2\,c_{1,d}\,K   \right) \\
	& = & N^d\,\exp\left(  -\frac{1}{3}\delta^2\,\tilde{b}_2^d\,c_{1,d}\,K   \right), \\
	\end{array}
	\]	
	where the first inequality follows from a union bound, the second from (\ref{one_side}), and the third  from (\ref{second_side}) combined with Lemma \ref{lem:binom_concentration}.
	
	On the other hand, with a similar argument, we have that
	\[
	\begin{array}{lll}
	\textup{pr}\left\{ \underset{x  \in   I(N) }{\min}\vert C(x)\vert   \,\,\leq \,\, (1-\delta)\,c_{1,d}\,\tilde{b}_2\,K  \right\}  & \leq &
	\underset{x  \in  I(N)}{\sum} \, 	\textup{pr}\left\{\vert C(x)\vert   \,\,\leq \,\, (1-\delta)\,c_{1,d}\,\tilde{b}_2\,K  \right\}\\
	&\leq  &\underset{x  \in   I(N)}{\sum} \,	\textup{pr}\left[\vert C(x)\vert   \,\,\leq \,\, (1-\delta)\,n\,  \textup{pr}\{x_1 \in C(x)\} \right]\\
	&\leq  &    N^d\,\exp\left(  -\frac{1}{3}\delta^2\,\tilde{b}_2\,c_{1,d}\,K   \right).
	\end{array}
	\]
	Next  we proceed to  find an upper bound on the maximum degree of the $K$-NN  graph.  We start by defining the sets
	\[
	\begin{array}{lll}
	B_i(x)    \,\, = \,\, \left\{   j   \in [n]\backslash  \{i\}    \,\,:\,\,      x_j \in  B_{  \tilde{a} (K/n)^{1/d    }}(x)      \right\},\,\,\,\,\,\,	   B_i   \,\, = \,\, \left\{   j   \in [n]\backslash  \{i\}    \,\,:\,\,      x_j \in  B_{  \tilde{a} (K/n)^{1/d    }}(x_i)      \right\},	       
	\end{array} 
	\]
	for $i \in [n]$, and $x \in \mathcal{X}$, and  where $\tilde{a}$ is given as in Lemma  \ref{lem:controlling_dist}. Then,
	\[
	\vert B_i(x) \vert    \,\,\,\sim \,\,  \text{Binomial}\left[ n-1, \textup{pr}\{x_1 \in  B_{  \tilde{a} (K/n)^{1/d    }}(x) \}  \right],
	\]
	where
	\[
	p_{\min} \,c_{1,d}\, \tilde{a}^d   \frac{K}{n}\,\,\leq \,\,  \textup{pr}\{x_1 \in  B_{  \tilde{a} (K/n)^{1/d    }}(x) \}  \,\,\leq \,\,   p_{\max} \,c_{2,d}\, \tilde{a}^d   \frac{K}{n},   
	\]
	which implies by Lemma \ref{lem:binom_concentration}  that
	\[
	\textup{pr}\left\{    \vert B_i(x) \vert    \, \geq \, \frac{3}{2} p_{\max} \,c_{2,d}\, \tilde{a}^d \,K \right\} \, \,\leq \,\,\textup{pr}\left[    \vert B_i(x) \vert    \, \geq \,    \frac{3}{2}\textup{pr}\{x_1 \in  B_{  \tilde{a} (K/n)^{1/d    }}(x) \}(n-1)   \right]   \,\,\leq \,\,  \exp\left(  -\frac{ p_{\min} \,c_{1,d}\, \tilde{a}^d }{24} K \right).
	\]
	Hence,
	\[
	\textup{pr}\left\{   \vert B_i \vert    \, \geq \,   \frac{3}{2}p_{\max} \,c_{2,d}\, \tilde{a}^d \,K\right\}    \,\,= \,\, \underset{\mathcal{X}}{\int}    \textup{pr}\left\{\vert B_i(x) \vert    \, \geq \,   \frac{3}{2}p_{\max} \,c_{2,d}\, \tilde{a}^d \,K\right\} \,  p(x) \mu(dx)  \,\,\leq \,\,  \exp\left(  -\frac{ p_{\min} \,c_{1,d}\, \tilde{a}^d }{24} K \right).
	\]
	Therefore,  if $d_i$  is the degree associated with $x_i$  then
	\[
	\begin{array}{lll}
	\textup{pr}\left(  d_i    \, \leq \,   \frac{3}{2}p_{\max} \,c_{2,d}\, \tilde{a}^d \,K  \right)  & \geq & \textup{pr}\left[  \left\vert \{ j  
	\in [n] \backslash \{i\} \,\,:\,\,   d_{\mathcal{X}}(x_i  ,x_j)   \leq  R_{K,\max}  \}  \right\vert   \, \leq \,   \frac{3}{2}p_{\max} \,c_{2,d}\, \tilde{a}^d \,K  \right] \\
	& \geq  & \textup{pr}\big[  \left\vert \{ j  
	\in [n] \backslash \{i\} \,\,:\,\,   d_{\mathcal{X}}(x_i  ,x_j)   \leq  R_{K,\max}  \}  \right\vert   \, \leq \,   \frac{3}{2}p_{\max} \,c_{2,d}\, \tilde{a}^d \,K,\\
	& &   R_{K,\max}  \leq \tilde{a}\left(\frac{K}{n}\right)^{1/d} \big]\\
	&\geq &  \textup{pr}\big[  \left\vert \{ j  
	\in [n] \backslash \{i\} \,\,:\,\,   d_{\mathcal{X}}(x_i  ,x_j)   \leq  \tilde{a}(K/n)^{1/d} \}  \right\vert   \, \leq \,   \frac{3}{2}p_{\max} \,c_{2,d}\, \tilde{a}^d \,K,\\
	& &   R_{K,\max}  \leq \tilde{a}\left\{\frac{K}{n}\right\}^{1/d} \big]\\
	& \geq &   1 \,\,-\,\,  \textup{pr}\left\{ R_{K,\max}  > \tilde{a}\left(\frac{K}{n}\right)^{1/d}\right\}  \,\,-\,\, \textup{pr}\left\{  \vert B_i \vert    \, > \,   \frac{3}{2}p_{\max} \,c_{2,d}\, \tilde{a}^d \,K\right\},
	\end{array}
	\]
	and the claim follows from the previous inequality and  Equation (\ref{eqn:rkmax}).
\end{proof}

As stated in Lemma \ref{lem:controlling_counts}, the number of observations $x_i$ that fall within each cell $C(x)$ scales like $K$,  with high probability. We will exploit this fact  next in order to obtain an upper bound on the MSE.


\begin{lemma}
	\label{embedding}
	Assume that the event  $\Omega$ from  Lemma \ref{lem:controlling_counts} intersected with
	\begin{equation}
	\label{mesh_lower_bound}
	\underset{x  \in  I(N) }{\min}\vert C(x)\vert   \,\,\geq \,\, \frac{1}{2}\,c_{1,d}\,\tilde{b}_2\,K 
	\end{equation}
	holds.	Define $N$ as in that lemma  and let  $I(N)$ be  the corresponding mesh.
	Then for all $e \in \mathbb{R}^n$,  it holds that  
	\begin{equation}
	\label{eqn:inequality1}
	\vert e^T( \theta \,-\, \theta_I)\vert  \,\,  \leq  \,\,  2\,\|e \|_{\infty}\, \| \nabla_{G_K} \theta  \|_1,  \,\,\,\,\,\,\,\, \forall \theta \in \mathbb{R}^{n}.\\
	\end{equation}
	Moreover, 
	\begin{equation}
	\label{eqn:inequality2}
	\|D\, \theta^I \|_1    \,\, \leq  \,\,    \|\nabla_{G_K}\, \theta\|_1,\,\,\,\,\,\,\,\, \forall \theta \in \mathbb{R}^{n},
	\end{equation}
	where $D$ is the incidence matrix of a d-dimensional  grid graph $G_{grid} = (V_{grid},E_{grid})$ with $V_{grid} =  [N^d]$, where $(l,l^{\prime}) \in E_{grid} $ if and only if  
	$$\|   h\{P_I(x_{i_l})\}   \,\,-\,\,  h\{P_I(x_{i_{l ^{\prime} }})\}\|_2    \,\,=\,\,\frac{1}{\,N}.$$
	Here we use the notation from Section \ref{sec:quantization}.
\end{lemma}

\begin{proof}
	We start by introducing the notation $x_i^{\prime} \,\,=\,\,   P_I(x_i)$. 	To prove (\ref{eqn:inequality1}) we proceed in cases.
	
	\smallskip\smallskip
	\noindent
	{\bf Case 1.} If $(\theta_I)_1 \,\,=\,\, \theta_1$, then clearly $ \vert e_1 \vert \, \vert \theta_1 \,\,-\,\ (\theta_I)_1 \vert \,\,=\,\, 0 $. \\
	\smallskip\smallskip
	\noindent
	{\bf Case 2.} If $(\theta_I)_1 \,\,=\,\, \theta_i $, for $i \neq 1$, then 
	$$ \| h(x_1^{\prime}) \,\,-\,\,h(x_i ) \|_{\infty} \,\,\leq \,\,\| h(x_1^{\prime}) \,\,-\,\,h(x_1)  \|_{\infty}\,\,\leq \,\,\frac{1}{2\,N}.$$ 
	Thus, $x^{\prime}_1\,\,=\,\, x_i^{\prime}$,    and so  $(1,i) \in  E_K$ by the assumption that the event $\Omega$ holds.
	
	Therefore for every $i \in \{1,\ldots,n\}$,   there exists $j_i \in [n]$ such that  $(\theta_I)_i \,\,=\,\, \theta_{j_i}  $  and either $i = j_i$  or $(i, j_i) \in E_K$. Hence,
	\[
	\begin{array}{lll}
	\vert e^T( \theta \,-\, \theta_I)\vert  &  \leq  &   \sum_{i=1}^{n}    \,\vert e_i\vert\,\vert \theta_i \,\,-\,\, \theta_{j_i}  \vert\\
	& \leq  &   2\,\|e\|_{\infty}\, \| \nabla_{G_K} \hat{\theta}  \|_1.
	\end{array}
	\]	
	
	To verify (\ref{eqn:inequality2}), we  observe that
	\begin{equation}
	\begin{array}{lll}
	\|D\, \theta^I \|_1    & = &   \underset{(l,l^{\prime}) \in E_{grid}}{\sum }  \, \left \vert \theta_{i_l} \,-\, \theta_{i_{l^{\prime} }} \right \vert.\\
	\end{array}
	\end{equation}
	Now, if $(l,l^{\prime})  \,\in \,  E_{grid}$,   then $x_{i_l}$ and $x_{i_{l^{\prime}} }$ are in  neighboring cells in $I$. This implies that $(i_l,i_{l^{\prime}})$ is an edge in the $K$-NN graph. Thus, every edge in the grid graph $G_{grid}$  corresponds to an edge in the $K$-NN graph and the mapping is injective. Note that here we have used the fact that  (\ref{mesh_lower_bound}) ensures  that every cell has at least one point, provided that $K$ is large enough. The claim follows.
\end{proof}

\begin{lemma}
	\label{lem:empirical_process_1}
	With the notation from Lemma \ref{embedding}, we have that 
	\[
	\varepsilon^T( \hat{\theta} \,-\, \theta^*)   \leq  2\,\|\varepsilon \|_{\infty}\,\left( \| \nabla_{G_K} \theta^*  \|_1  \,+\,\| \nabla_{G_K} \hat{\theta}  \|_1 \right)\,\,+\,\, \varepsilon^T( \hat{\theta}_I \,-\, \theta_I^* ),
	\]
	on the event $\Omega$.
\end{lemma}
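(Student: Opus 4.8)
The plan is to use the additive decomposition of the empirical process term $\varepsilon^T(\hat\theta - \theta^*)$ around the quantized vectors $\hat\theta_I$ and $\theta^*_I$, and then to control the two ``quantization discrepancy'' terms using the mesh-embedding inequality \eqref{eqn:inequality1} from Lemma~\ref{embedding}. The key observation is that $\hat\theta_I$ and $\theta^*_I$ are obtained from $\hat\theta$ and $\theta^*$ by collapsing values within mesh cells, so the discrepancies $\hat\theta - \hat\theta_I$ and $\theta^* - \theta^*_I$ are small precisely in the sense measured by the $K$-NN total variation.

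First I would write the identity
\[
\varepsilon^T(\hat\theta - \theta^*) = \varepsilon^T(\hat\theta - \hat\theta_I) + \varepsilon^T(\hat\theta_I - \theta^*_I) + \varepsilon^T(\theta^*_I - \theta^*),
\]
which is exactly display \eqref{eqn:basic_inequality2} after multiplying by $n$. Bounding the first and last summands by their absolute values gives
\[
\varepsilon^T(\hat\theta - \theta^*) \le \bigl|\varepsilon^T(\hat\theta - \hat\theta_I)\bigr| + \bigl|\varepsilon^T(\theta^*_I - \theta^*)\bigr| + \varepsilon^T(\hat\theta_I - \theta^*_I).
\]

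Next, since we are working on the event $\Omega$ from Lemma~\ref{lem:controlling_counts}, Lemma~\ref{embedding} is in force, and I would apply its inequality \eqref{eqn:inequality1} with $e = \varepsilon$ twice: once with $\theta = \hat\theta$ and once with $\theta = \theta^*$, obtaining
\[
\bigl|\varepsilon^T(\hat\theta - \hat\theta_I)\bigr| \le 2\,\|\varepsilon\|_\infty\,\|\nabla_{G_K}\hat\theta\|_1, \qquad \bigl|\varepsilon^T(\theta^* - \theta^*_I)\bigr| \le 2\,\|\varepsilon\|_\infty\,\|\nabla_{G_K}\theta^*\|_1.
\]
Substituting these two bounds into the previous display yields exactly the claimed inequality.

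I do not expect any real obstacle here: the statement is essentially a one-line corollary of Lemma~\ref{embedding} combined with the triangle inequality. The only points requiring care are bookkeeping ones, namely that the mesh $I = I(N)$ and the quantization maps $\theta \mapsto \theta_I$ must be the ones built with the specific $N$ of Lemma~\ref{lem:controlling_counts}/Lemma~\ref{embedding}, and that the validity of \eqref{eqn:inequality1} is contingent on the event $\Omega$ holding, which is why that event appears as a hypothesis in the statement of the lemma.
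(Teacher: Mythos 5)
Your proposal is correct and follows essentially the same route as the paper: the paper's proof writes the identity $\varepsilon^T(\hat{\theta}-\theta^*)=\varepsilon^T(\hat{\theta}-\hat{\theta}_I)+\varepsilon^T(\hat{\theta}_I-\theta_I^*)+\varepsilon^T(\theta_I^*-\theta^*)$ on the event $\Omega$ and then invokes Lemma~\ref{embedding} (inequality~\eqref{eqn:inequality1} with $e=\varepsilon$) to bound the first and third terms, exactly as you do. Your additional bookkeeping remarks about the mesh $I(N)$ and the role of $\Omega$ are accurate but not a departure from the paper's argument.
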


\begin{proof}
	Let us assume that event $\Omega$ happens. Then we observe that
	\begin{equation}
	\label{eqn:decomposition}
	\varepsilon^T( \hat{\theta} \,-\, \theta^* )\,\,\,=\,\,\,\varepsilon^T( \hat{\theta} \,-\, \hat{\theta}_I )\,\,+\,\,\varepsilon^T( \hat{\theta}_I \,-\, \theta_I^* )\,\,\,+\,\,\,\varepsilon^T( \theta_I^* \,-\, \theta^*),
	\end{equation}
	and the claim follows by Lemma \ref{embedding}. 
	
\end{proof}

\begin{lemma}
	\label{lem:empirical_process_2}
	With the notation  from Lemma \ref{lem:empirical_process_1}, on the event $\Omega$, we have that 
	\[
	\varepsilon^T( \hat{\theta}_I \,-\, \theta_I^*)   \,\,\leq \,\, \underset{u \in I}{\max } \sqrt{\vert C(u) \vert}\,\left(  \| \Pi \tilde{\varepsilon} \|_2\, \ \|\hat{\theta}  \,-\,\theta^{*}   \|_2     \,\,+\,\,   \|(D^{+})^T\,\tilde{\varepsilon}\|_{\infty}\, \Big[\|\nabla_{G_K} \hat{\theta}\|_1  \,+\,\|\nabla_{G_K}\,\theta^{*}   \|_1 \Big]   \right),
	\]
	where $\tilde{\varepsilon}   \in \mathbb{R}^{N^d}$ is a mean zero vector whose coordinates are independent and sub-Gaussian with the same constants  as in (\ref{eqn:sub_gaussian_errors}).
	Here, $\Pi$ is the orthogonal projection onto the span of $\boldsymbol{1}   \in \mathbb{R}^{N^d}$, and $D^+$  is the pseudo-inverse of the incidence matrix $D$  from Lemma \ref{embedding}.   
\end{lemma}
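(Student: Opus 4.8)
The plan is to decompose $\varepsilon^T(\hat\theta_I - \theta^*_I)$ through the coarser signal $\theta^I \in \mathbb{R}^{N^d}$, exactly as in the (commented-out) discussion following Lemma~\ref{lem:empirical_process_1}. First I would write, grouping the $n$ errors by the cell they fall into,
\[
\varepsilon^T(\hat\theta_I - \theta^*_I) \;=\; \sum_{j=1}^{N^d} \sum_{i \in I_j} \varepsilon_i \,(\hat\theta_{i_j} - \theta^*_{i_j})
\;=\; \Big(\max_{u \in I}\sqrt{|C(u)|}\,\Big)\; \tilde\varepsilon^T(\hat\theta^I - \theta^{*,I}),
\]
where $\tilde\varepsilon_j := \big(\max_{u\in I}|C(u)|\big)^{-1/2}\sum_{i\in I_j}\varepsilon_i$. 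On the event $\Omega$ the index sets $I_j$ partition $[n]$, so the $\tilde\varepsilon_j$ are independent, mean zero, and sub-Gaussian with the same constants as in \eqref{eqn:sub_gaussian_errors} (scaling a sum of at most $\max_u |C(u)|$ i.i.d.\ sub-Gaussians by that factor keeps the parameter controlled). This reduces the problem to bounding the grid empirical-process term $\tilde\varepsilon^T(\hat\theta^I - \theta^{*,I})$.

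Next I would apply the standard decomposition used for the fused lasso on a graph with incidence matrix $D$: split $\tilde\varepsilon$ into its component in $\mathrm{null}(D) = \mathrm{span}(\mathbf 1)$ and its orthogonal complement. Writing $\Pi$ for the projection onto $\mathrm{span}(\mathbf 1)\subset\mathbb{R}^{N^d}$ and $D^+$ for the pseudoinverse of $D$ (so that $D^+ D$ is the projection onto $\mathrm{row}(D) = \mathrm{span}(\mathbf 1)^\perp$), we get
\[
\tilde\varepsilon^T(\hat\theta^I - \theta^{*,I})
= (\Pi\tilde\varepsilon)^T(\hat\theta^I - \theta^{*,I}) + ((D^+)^T\tilde\varepsilon)^T D(\hat\theta^I - \theta^{*,I}).
\]
For the first term I bound by Cauchy–Schwarz: $|(\Pi\tilde\varepsilon)^T(\hat\theta^I-\theta^{*,I})| \le \|\Pi\tilde\varepsilon\|_2\,\|\hat\theta^I-\theta^{*,I}\|_2$, and then I need $\|\hat\theta^I-\theta^{*,I}\|_2 \le \|\hat\theta - \theta^*\|_2$ — this holds because $\theta\mapsto\theta^I$ picks out one coordinate of $\theta$ per (nonempty) cell, so it is (a restriction of) a coordinate-selection map and is $1$-Lipschitz in $\ell_2$; combined with the outer factor $\max_u\sqrt{|C(u)|}$ this produces the $\|\hat\theta-\theta^*\|_2$ appearing in the statement. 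For the second term I use Hölder, $\big|((D^+)^T\tilde\varepsilon)^T D(\hat\theta^I-\theta^{*,I})\big| \le \|(D^+)^T\tilde\varepsilon\|_\infty\,\|D(\hat\theta^I - \theta^{*,I})\|_1 \le \|(D^+)^T\tilde\varepsilon\|_\infty\big(\|D\hat\theta^I\|_1 + \|D\theta^{*,I}\|_1\big)$, and then invoke inequality \eqref{eqn:inequality2} from Lemma~\ref{embedding} (equivalently Lemma~\ref{grid_embedding}), namely $\|D\theta^I\|_1 \le \|\nabla_{G_K}\theta\|_1$, applied to both $\hat\theta$ and $\theta^*$. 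Assembling the two bounds and pulling out the common factor $\max_u\sqrt{|C(u)|}$ gives precisely the claimed inequality.

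I expect the only genuinely delicate point to be bookkeeping around the quantization maps: one must be careful that $\theta\mapsto\theta^I$ really is a coordinate selection (so that $\|\theta^I\|_2\le\|\theta\|_2$ and the difference $\hat\theta^I-\theta^{*,I}$ has the same ``selected'' indices, using that $I_j$ depends only on the design, not on $\theta$), and that the regrouping of $\varepsilon$ by cells on the event $\Omega$ is a genuine partition so the $\tilde\varepsilon_j$ are independent. Everything else — Cauchy–Schwarz, Hölder, the null-space/row-space split, and the appeal to Lemma~\ref{embedding} — is routine. The sub-Gaussianity claim for $\tilde\varepsilon$ is standard but should be stated, since it is what makes $\|\Pi\tilde\varepsilon\|_2$ and $\|(D^+)^T\tilde\varepsilon\|_\infty$ amenable to the concentration arguments used downstream in the proof of Theorem~\ref{thm:upper_bound}.
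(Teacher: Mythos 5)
Your proposal is correct and follows essentially the same route as the paper's proof: group the errors by cell to form $\tilde{\varepsilon}$ with the factor $\max_{u\in I}\sqrt{\vert C(u)\vert}$, split $\tilde{\varepsilon}^T(\hat{\theta}^I-\theta^{*,I})$ along $\spa(\boldsymbol{1})$ and its complement via $\Pi$ and $D^{+}D$, apply Cauchy--Schwarz and H\"older together with the coordinate-selection bound $\|\hat{\theta}^I-\theta^{*,I}\|_2\le\|\hat{\theta}-\theta^*\|_2$, and finish with inequality \eqref{eqn:inequality2} of Lemma \ref{embedding}. The only difference is that you make the null-space/row-space decomposition explicit where the paper invokes it implicitly, so no changes are needed.
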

\begin{proof}
	Here we use the notation from the proof of  Lemma \ref{embedding}. Then,
	$$
	\begin{array}{lll}
	\varepsilon^T( \hat{\theta}_I \,-\, \theta_I^*)   &   = &   \displaystyle \sum_{j=1}^{N^d } \underset{l \in I_j}{\sum} \, \varepsilon_l \,\left(  \hat{\theta}_{i_j}\,\,-\,\, \theta^*_{i_j} \right) \,\,=\,\,\displaystyle \underset{u \in I}{\max }\,  \vert C(u) \vert^{\frac{1}{2}}   \,   \tilde{\varepsilon}^T(\hat{\theta}^I \,-\, \theta^{*,I} )   \\
	\end{array} 
	$$
	where 
	\[
	\tilde{\varepsilon}_ j  \,\,=\,\, \left\{ \underset{u \in I}{\max }  \vert C(u) \vert   \right\}^{-1/2}\, \underset{ l \, \in   I_j   }{\sum }\, \varepsilon_l.
	\]
	Clearly, the 
	$\tilde{\varepsilon}_1, \ldots, \tilde{\varepsilon}_{N^d }$  are independent given $\Omega$, and are also sub-Gaussian with the same constants as the original errors $\varepsilon_1,\ldots , \varepsilon_n$.
	
	Moreover,  let $\Pi$ be the orthogonal projection onto the span of $\boldsymbol{1}   \in \mathbb{R}^{N^d}$. Then, by H$\ddot{o}$lder's inequality, and  by the triangle inequality,
	\begin{equation}
	\label{eqn:art_one} 
	\begin{array}{lll}
	\varepsilon^T( \hat{\theta}_I \,-\, \theta_I^*)   &   \leq  & \left\{\underset{u \in I}{\max }  \vert C(u) \vert   \right\}^{1/2}\left\{  \| \Pi \tilde{\varepsilon} \|_2\, \ \|\hat{\theta}^I  \,-\,\theta^{*,I}   \|_2     \,\,+\,\,   \|(D^{+})^T\,\tilde{\varepsilon}\|_{\infty}\,\|D( \hat{\theta}^I  \,-\,\theta^{*,I}    )\|_1    \right\}\\
	&  \leq &  \left\{ \underset{u \in I}{\max }  \vert C(u) \vert   \right\}^{1/2}\left\{  \| \Pi \tilde{\varepsilon} \|_2\, \ \|\hat{\theta}^I  \,-\,\theta^{*,I}   \|_2     \,\,+\,\,   \|(D^{+})^T\,\tilde{\varepsilon}\|_{\infty}\, \Big(\|D \hat{\theta}^I\|_1  \,+\,\|D\,\theta^{*,I}   \|_1 \Big)   \right\}.
	\end{array} 
	\end{equation}
	Next we observe that 
	\begin{equation}
	\label{eqn:partA}
	\begin{array}{lll}
	\displaystyle	\|\hat{\theta}^I  \,-\,\theta^{*,I}   \|_2    \,\, =  \,\,   \left\{\sum_{j = 1}^{ N^d }  \left(\hat{\theta}_{i_j} \,-\,\theta^*_{i_j} \right)^2  \right\}^{1/2}
	\,\,  \leq \,\,\left\{\sum_{i = 1}^{n}  \left(\hat{\theta}_{i} \,-\,\theta^*_{i} \right)^2  \right\}^{1/2}.
	\end{array}
	\end{equation}
	Therefore, combining (\ref{eqn:art_one}), (\ref{eqn:partA}) and Lemma \ref{embedding},  we arrive at
	\[
	\varepsilon^T( \hat{\theta}_I \,-\, \theta_I^*)   \,\,\leq \,\, \underset{u \in I}{\max } \vert C(u) \vert^{ \frac{1}{2} }  \,\left\{  \| \Pi \tilde{\varepsilon} \|_2\, \ \|\hat{\theta}  \,-\,\theta^{*}   \|_2     \,\,+\,\,   \|(D^{+})^T\,\tilde{\varepsilon}\|_{\infty}\, \Big(\|\nabla_{G_K} \hat{\theta}\|_1  \,+\,\|\nabla_{G_K}\,\theta^{*}   \|_1 \Big)   \right\}.
	\]
	
\end{proof}


The following lemma results from a similar argument as the proof of Theorem 2 in \cite{hutter2016optimal}.

\begin{lemma}
	\label{lem:empirical_process_3}
	Let $d>1$ (recall  Assumptions  \ref{as1}--\ref{as3} ) and let $\delta  > 0$.	With the notation from the previous lemma, given the event $\Omega$, we have that
	\begin{equation}
	\begin{array}{lll}
	\varepsilon^T( \hat{\theta}_I \,-\, \theta_I^*)  &\leq  & \{(1+\delta)\,c_{2,d}\,\tilde{b}_1^d\,K\}^{\frac{1}{2} } \,\Bigg(2\,\sigma\,\left\{    2\,\log \left( \frac{e}{\delta} \right)  \right\}^{ \frac{1}{2}  }\, \|\hat{\theta}  \,-\,\theta^{*}   \|_2 \,\,+\,\,   \sigma \,C_1(d)\, \left[\, \frac{2\,\log n}{d}\,\log\left\{ \frac{ C_2(p_{\max},L_{\min},d)\, n}{K\,\delta}\right\}\right]^{ \frac{1}{2} }  \\
	& &\,\,\,\,\,\,\,\,\,\,\,\,\,\,\,\,\,\,\,\,\,\,\,\,\,\,\,\,\,\,\,\,\,\,\,\,\,\,\,\,\,\,\,\,\,\,\,\,\,\cdot \Big(\|\nabla_{G_K} \hat{\theta}\|_1  \,+\,\|\nabla_{G_K}\,\theta^{*}   \|_1 \Big) \Bigg)    \\ 
	\end{array}
	\end{equation}
	with probability at least
	\[
	1\,\,-\,\,2\delta\,\, \,\,-\,\, \frac{n}{K \, \tilde{a}^d\, L_{\max}^d  } \exp\left(  -\frac{1}{3}\delta^2\,\tilde{b}_2\,c_{1,d}\,K   \right) \,\,-\,\, n\,\exp(-K/3),  
	\]
	where $C_1(d) \,>\, 0 $ is a constant depending on $d$, and  $C_2(p_{\max},L_{\min},d)\, >\, 0 $ is another constant depending on $p_{\max}$, $L_{\min}$,  and $d$.
\end{lemma}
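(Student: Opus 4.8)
The plan is to take the bound of Lemma~\ref{lem:empirical_process_2} and control each of its three random ingredients by a separate concentration estimate, then combine by a union bound. Recall that on $\Omega$,
\[
\varepsilon^T(\hat\theta_I-\theta_I^*)\le\max_{u\in I}\sqrt{|C(u)|}\,\Big(\|\Pi\tilde\varepsilon\|_2\,\|\hat\theta-\theta^*\|_2+\|(D^+)^T\tilde\varepsilon\|_\infty\big[\|\nabla_{G_K}\hat\theta\|_1+\|\nabla_{G_K}\theta^*\|_1\big]\Big).
\]
Since $\varepsilon$ is independent of $X$, I carry out the concentration steps conditionally on $X$; then $\tilde\varepsilon$ has independent, mean-zero, sub-Gaussian coordinates with parameter at most $\sigma$ (the normalization by $\sqrt{\max_u|C(u)|}\ge\sqrt{|I_j|}$ only shrinks the scale), and the incidence matrix $D$ of $G_{grid}$ is a fixed matrix.

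First, by Lemma~\ref{lem:controlling_counts}, outside an event of probability at most $N^d\exp(-\frac13\delta^2\tilde b_2 c_{1,d}K)$ one has $\max_{u\in I}|C(u)|\le(1+\delta)c_{2,d}\tilde b_1 K$, and simultaneously $\min_{u\in I}|C(u)|>0$, so that every cell is occupied and $G_{grid}$ is the full $d$-dimensional grid on $N^d$ vertices with on the order of $dN^d$ edges; using $N\asymp(n/K)^{1/d}$, this bounds the leading factor by $\sqrt{(1+\delta)c_{2,d}\tilde b_1^d K}$ (after absorbing dimensional constants) and contributes the $\frac{n}{K\tilde a^d L_{\max}^d}\exp(-\frac13\delta^2\tilde b_2 c_{1,d}K)$ term to the failure probability. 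Next, since $\Pi$ is the rank-one projection onto $\mathrm{span}(\mathbf 1)$ in $\mathbb{R}^{N^d}$, the quantity $\|\Pi\tilde\varepsilon\|_2=|N^{-d/2}\sum_j\tilde\varepsilon_j|$ is a single mean-zero sub-Gaussian variable of parameter at most $\sigma$, so the sub-Gaussian tail bound gives $\|\Pi\tilde\varepsilon\|_2\le 2\sigma\sqrt{2\log(e/\delta)}$ with conditional probability at least $1-\delta$.

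The crux is the term $\|(D^+)^T\tilde\varepsilon\|_\infty$, and here the argument mirrors the proof of Theorem~2 in \cite{hutter2016optimal}. For each edge $e=(u,v)$ of $G_{grid}$, the coordinate $\big((D^+)^T\tilde\varepsilon\big)_e$ is mean-zero and sub-Gaussian with parameter $\sigma\|(D^+)_{\cdot,e}\|_2$, so a union bound over the $\asymp dN^d$ edges gives, with conditional probability at least $1-\delta$,
\[
\|(D^+)^T\tilde\varepsilon\|_\infty\le\sigma\Big(\max_e\|(D^+)_{\cdot,e}\|_2\Big)\sqrt{2\log(2dN^d/\delta)}.
\]
The deterministic input borrowed from \cite{hutter2016optimal} is that for the $d$-dimensional grid with $d\ge 2$ the maximal column norm of $D^+$ is $O(\sqrt{\log N})$, that is, $\max_e\|(D^+)_{\cdot,e}\|_2\le C_1(d)\sqrt{\log(N^d)/d}$; substituting $N^d\asymp n/K$ (so $\log(N^d)\le\log n$ up to lower-order terms) and $2dN^d\le C_2(p_{\max},L_{\min},d)\,n/K$ converts the estimate into $\sigma C_1(d)\sqrt{\frac{2\log n}{d}\log(\frac{C_2 n}{K\delta})}$.

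Finally, substituting the three bounds into the inequality of Lemma~\ref{lem:empirical_process_2} and summing the failure probabilities — $\delta$ for the projection term, $\delta$ for the maximal term, the cell-count term from Lemma~\ref{lem:controlling_counts}, and $n\exp(-K/3)$ for the remaining design events (the connectivity / near-neighbor bounds of Lemma~\ref{lem:controlling_counts}) — yields the claim after integrating out the conditioning on $X$. I expect the main obstacle to be the estimate $\max_e\|(D^+)_{\cdot,e}\|_2=O(\sqrt{\log N})$ for the $d$-dimensional grid: this is the genuinely nontrivial spectral/combinatorial fact imported from \cite{hutter2016optimal}, and it is what produces the $1/d$ inside the logarithmic factor. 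A secondary point requiring care is the conditioning on $X$, which renders $\tilde\varepsilon$ independent of $D$ and of the cell structure, so that the sub-Gaussian tail bounds apply cleanly.
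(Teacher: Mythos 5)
Your proposal is correct and follows essentially the same route as the paper's proof: it plugs the bound of Lemma~\ref{lem:empirical_process_2} into three separate concentration estimates --- the cell-count bound from Lemma~\ref{lem:controlling_counts} for $\max_{u\in I}\sqrt{|C(u)|}$, and the two sub-Gaussian bounds on $\|\Pi\tilde\varepsilon\|_2$ and $\|(D^+)^T\tilde\varepsilon\|_\infty$ taken from the proof of Theorem~2 of \cite{hutter2016optimal} --- and combines them by a union bound, exactly as the paper does. The only difference is that you spell out the ingredients the paper imports wholesale (the rank-one projection bound, the per-edge union bound with the column-norm estimate $\max_e\|(D^+)_{\cdot,e}\|_2=O(\sqrt{\log N})$ for the $d$-dimensional grid, and the occupancy of all cells so that $D$ is the full grid incidence matrix), which is a faithful elaboration rather than a different argument.
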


\begin{proof}
	First,  as in the proof of Theorem 2 from \cite{hutter2016optimal}, and our choice of $N$, we obtain that given $\Omega$,
	\begin{equation}
	\label{enq:from_hutter}
	\begin{array}{lll}
	\|  (D^{+})^{T} \tilde{\varepsilon} \|_{\infty}    & \leq &    \sigma \,C_1(d)\,\left[ \frac{2 \log n}{d}\,\log\left\{\frac{ C_2(p_{\max},L_{\min},d)\, n}{K\,\delta} \right\}  \right]^{  \frac{1}{2}  } , \\
	\| \Pi \tilde{\varepsilon} \|_2  & \leq  &  2\,\sigma\,\left\{    2\,\log \left( \frac{e}{\delta} \right)  \right\}^{ \frac{1}{2}  }
	\end{array}
	\end{equation}
	with probability at least $1 \,-\, 2\,\delta$, where  $C(d)$ is  constant depending on $d$, and $ C_2(p_{\max},L_{\min},d)$  is a constant depending on $p_{\max}$, $L_{\min}$ and $d$. 
	
	On the other hand, by Lemma \ref{lem:controlling_counts},  given the event $\Omega$ we have
	\[
	\underset{x  \in   h^{-1}(P_{\text{lat}}) }{\max} \vert C(x)\vert^{ \frac{1}{2}  }   \,\,\,\leq  \,\,\,  \{(1+\delta)\,c_{2,d}\,\tilde{b}_1\,K\}^{ \frac{1}{2}  }
	\]
	with probability at least 
	\begin{equation}
	\label{eqn:comb}
	1  \,\,-\,\, N^d \exp\left(  -\frac{1}{3}\delta^2\,\tilde{b}_2\,c_{1,d}\,K   \right) \,\,-\,\, n\,\exp(-K/3).
	\end{equation}
	Therefore,  combining (\ref{enq:from_hutter})  and (\ref{eqn:comb}),  the result follows.
\end{proof}

\section{Proof of Theorem  \ref{thm:upper_bound} }
\label{sec:appendix_theorem2}

Instead of proving Theorem  \ref{thm:upper_bound},  we will prove a more general result that holds for a general choice of $K$. This is given next. The corresponding result for $\epsilon$-NN-FL  can be obtained with a similar  argument.

\begin{theorem}
	\label{thm:knn}
	There exist  constants  $C_1(d)$ and $ C_2(p_{\max},L_{\min},d)$, depending on $d$, $p_{\max}$, and $L_{\min}$, such that with the notation from  Lemmas \ref{lem:controlling_dist}  and  \ref{lem:controlling_counts}, if  $\lambda$ is chosen as
	\[
	\lambda   \, = \,  \sigma \,C_1(d)\,\left[ (1+\delta)\,c_{2,d}\,\tilde{b}_1\,\,\frac{2\,\log n}{d}\,\log\left\{ \frac{ C_2(p_{\max},L_{\min},d)\, n}{K\,\delta} \right\}\,K\right]  \,\,+  \,\,  8\sigma \left(\log n\right)^{ \frac{1}{2}   },
	\]
	then the $K$-NN-FL  estimator $\hat{\theta}$ \eqref{eqn:fl}  satisfies 
	\[
	\begin{array}{lll}
	\| \hat{\theta}  \,-\, \theta^*  \|_n^2      & \leq &  \|\nabla_{G_K}\,\theta^{*}   \|_1\,\Big(  \frac{4\,\sigma \,C_1(d)}{n}\,\left[(1+\delta)\,c_{2,d}\,\tilde{b}_1\,\,\frac{2\,\log n}{d}\,\log\left\{ \frac{ C_2(p_{\max},L_{\min},d)\, n}{K\,\delta}\right\}\,K\right]^{1/2}    \\ 
	&&\,\,+\,\, \frac{32\, \left(\log n\right)    }{n}\Big)  \,\,+\,\,\frac{16\,\sigma^2\,(1+\delta)\,c_{2,d}\,\tilde{b}_1\,K\,\log \left( \frac{e}{\delta} \right) }{n},
	\end{array}
	\]
	with probability at least $\eta_{n}\,\{1\,\,-\,\,  n\,\exp(-K/3)\}$. Here,
	\[
	\eta_{n} \,\,=\,\,1\,\,-\,\,2\delta\,\, \,\,-\,\, N^d \exp\left(  -\frac{1}{3}\delta^2\,\tilde{b}_2\,c_{1,d}\,K   \right) \,\,-\,\, n\,\exp(-K/3)  \,\,-\,\,\frac{C}{n^7},
	\] 
	where $C$  is  the constant in  (\ref{eqn:sub_gaussian_errors}),  and  $N$  is given as in Lemma \ref{lem:controlling_counts}.
	Consequently, taking $K \,\asymp\,  \log^{1\,+\,2r} n $  we obtain the result in Theorem  \ref{thm:upper_bound}.
\end{theorem}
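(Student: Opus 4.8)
\emph{Overview.} The plan is to run the standard ``basic inequality'' argument for the generalized lasso, with the empirical process controlled by the machinery already assembled in Lemmas~\ref{lem:controlling_counts} and \ref{lem:empirical_process_1}--\ref{lem:empirical_process_3}. Since $\hat{\theta}$ minimizes the objective in \eqref{eqn:fl} and $y=\theta^*+\varepsilon$, comparing that objective at $\hat{\theta}$ with its value at $\theta^*$ and expanding the squares gives
\[
\tfrac12\|\hat{\theta}-\theta^*\|_2^2 \;\le\; \varepsilon^T(\hat{\theta}-\theta^*)\;+\;\lambda\|\nabla_{G_K}\theta^*\|_1\;-\;\lambda\|\nabla_{G_K}\hat{\theta}\|_1 .
\]

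\emph{Bounding the empirical process.} On the event $\Omega$ of Lemma~\ref{lem:controlling_counts}, Lemma~\ref{lem:empirical_process_1} splits $\varepsilon^T(\hat{\theta}-\theta^*)$ into $2\|\varepsilon\|_\infty[\|\nabla_{G_K}\theta^*\|_1+\|\nabla_{G_K}\hat{\theta}\|_1]+\varepsilon^T(\hat{\theta}_I-\theta_I^*)$, and Lemma~\ref{lem:empirical_process_3} (which rests, via Lemma~\ref{lem:empirical_process_2}, on the grid fused-lasso analysis) bounds the last term by $A\|\hat{\theta}-\theta^*\|_2+B[\|\nabla_{G_K}\hat{\theta}\|_1+\|\nabla_{G_K}\theta^*\|_1]$ on an event of the probability stated there, where $A$ is of order $\sigma\sqrt{(1+\delta)c_{2,d}\tilde{b}_1 K\log(e/\delta)}$ and $B$ of order $\sigma C_1(d)\sqrt{(1+\delta)c_{2,d}\tilde{b}_1 K\,\tfrac{\log n}{d}\log(C_2 n/(K\delta))}$. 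Separately, a union bound over the sub-Gaussian tails \eqref{eqn:sub_gaussian_errors} gives $\|\varepsilon\|_\infty\le 4\sigma\sqrt{\log n}$ except with probability $C/n^7$.

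\emph{Choosing $\lambda$ and concluding.} Take $\lambda=B+8\sigma\sqrt{\log n}$, which is exactly the choice in the statement and satisfies $\lambda\ge B+2\|\varepsilon\|_\infty$ on the good event; then the coefficient of $\|\nabla_{G_K}\hat{\theta}\|_1$ on the right-hand side is nonpositive and may be dropped, leaving $\tfrac12\|\hat{\theta}-\theta^*\|_2^2\le A\|\hat{\theta}-\theta^*\|_2+2\lambda\|\nabla_{G_K}\theta^*\|_1$. Applying $A\|\hat{\theta}-\theta^*\|_2\le\tfrac14\|\hat{\theta}-\theta^*\|_2^2+A^2$ and rearranging yields $\|\hat{\theta}-\theta^*\|_2^2\le 4A^2+8\lambda\|\nabla_{G_K}\theta^*\|_1$; dividing by $n$ and substituting the expressions for $A$ and $\lambda$ (bounding the residual numerical constants) gives the displayed inequality of Theorem~\ref{thm:knn}. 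For the probability: conditionally on $\Omega$, the conclusion of Lemma~\ref{lem:empirical_process_3} together with $\{\|\varepsilon\|_\infty\le 4\sigma\sqrt{\log n}\}$ holds with probability at least $\eta_n$ by a union bound (using $N^d\asymp n/K$ to absorb the $\tfrac{n}{K\tilde{a}^d L_{\max}^d}\exp(\cdot)$ term into the $N^d\exp(\cdot)$ term), and since $\mathbb{P}(\Omega)\ge 1-n\exp(-K/3)$ the overall probability is at least $\eta_n(1-n\exp(-K/3))$.

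\emph{Deducing Theorem~\ref{thm:upper_bound}, and the main obstacle.} Setting $K\asymp\log^{1+2r}n$ and $\delta$ a small constant, $\tfrac1n A^2\asymp\tfrac{K}{n}\asymp\tfrac{\log^{1+2r}n}{n}$, while $\tfrac{\lambda}{n}\|\nabla_{G_K}\theta^*\|_1\asymp\tfrac1n\sqrt{\log n\cdot K\cdot\log n}\,\|\nabla_{G_K}\theta^*\|_1\asymp\tfrac{\log^{1.5+r}n}{n}\|\nabla_{G_K}\theta^*\|_1$, and the failure probabilities $2\delta+N^d\exp(-\tfrac13\delta^2\tilde{b}_2 c_{1,d}K)+n\exp(-K/3)+C/n^7$ are uniformly small (the last three tending to $0$) because $\delta^2 K/\log n\to\infty$ for any fixed $\delta,r$; this is exactly the $O_{\mathbb{P}}$ statement, and the $\epsilon$-NN-FL assertion follows along identical lines once the mesh-embedding lemmas are replaced by their $\epsilon$-graph analogues. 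The genuinely difficult content --- embedding a lattice so that $\|D\theta^I\|_1\le\|\nabla_{G_K}\theta\|_1$, controlling the mesh-cell counts, and bounding $\|(D^+)^T\tilde{\varepsilon}\|_\infty$ through the Hütter--Rigollet-type analysis --- is already packaged in Lemmas~\ref{grid_embedding}, \ref{lem:controlling_counts} and \ref{lem:empirical_process_3}, so the only real care needed here is calibrating $\lambda$ to be just large enough to annihilate $\|\nabla_{G_K}\hat{\theta}\|_1$ without inflating the $\|\nabla_{G_K}\theta^*\|_1$ contribution beyond the target rate, handling the cross term $A\|\hat{\theta}-\theta^*\|_2$ by Young's inequality, and tracking the several failure probabilities through the union bound; I anticipate no conceptual difficulty, only bookkeeping with constants.
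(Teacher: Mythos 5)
Your proposal is correct and follows essentially the same route as the paper's own proof: the basic inequality, the decomposition of $\varepsilon^T(\hat{\theta}-\theta^*)$ via Lemmas~\ref{lem:empirical_process_1}--\ref{lem:empirical_process_3} on the event $\Omega$, the sub-Gaussian maximal bound $\|\varepsilon\|_\infty\le 4\sigma\sqrt{\log n}$, the same calibration of $\lambda$ to cancel $\|\nabla_{G_K}\hat{\theta}\|_1$, Young's inequality for the cross term, and the same union-bound accounting of failure probabilities. Apart from factor-of-two bookkeeping in the final constants (where the paper itself is loose), there is nothing to add.
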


\begin{proof}
	We notice by the basic inequality argument,  see for instance \cite{wang2016trend}, that
	\begin{equation}
	\label{eqn:basic_inequality}
	\begin{array}{lll}
	\frac{1}{2}\| \hat{\theta}  \,-\, \theta^*  \|_n^2   & \leq & \frac{1}{n}\Bigg\{ \varepsilon^T\,(\hat{\theta}\,-\, \theta^*)\,\,+\,\, \lambda \left( -\|\nabla_{G_K} \hat{\theta}\|_1  \,+\,\|\nabla_{G_K}\,\theta^{*}   \|_1 \right)    \Bigg\}.\\
	\end{array}
	\end{equation}
	On the other hand, by (\ref{eqn:sub_gaussian_errors})  in the paper, and a union bound,
	\begin{equation}
	\label{eqn:maximal_inequality}
	\textup{pr}\left\{ \underset{1 \leq i \leq n}{\max} \vert \varepsilon_i \vert   \,>\,  4 \sigma \left(\log n\right)^{ \frac{1}{2}  }   \,\,\Big|\,\, \Omega \right\}\,=\,  \textup{pr}\left\{   \underset{1 \leq i \leq n}{\max} \vert \varepsilon_i \vert   \,>\,  4\sigma \left(\log n\right)^{ \frac{1}{2}  }   \right\}\,\leq \,  \frac{C}{n^7}.
	\end{equation}
	Therefore, combining (\ref{eqn:basic_inequality}), (\ref{eqn:maximal_inequality}), Lemma \ref{lem:empirical_process_1},  Lemma \ref{lem:empirical_process_2}, and Lemma \ref{lem:empirical_process_3}
	we obtain that conditioning on $\Omega$,
	\[
	\begin{array}{lll}
	\frac{1}{2}\| \hat{\theta}  \,-\, \theta^*  \|_n^2     & \leq &  \frac{\left\{(1+\delta)\,c_{2,d}\,\tilde{b}_1\,K\right\}^{ \frac{1}{2}   }}{n} \,\Bigg(2\,\sigma\,\left\{    2\,\log \left( \frac{e}{\delta} \right)  \right\}^{\frac{1}{2}   }\, \|\hat{\theta}  \,-\,\theta^{*}   \|_2 \,\,+\,\,   \sigma \,C_1(d)\,\left[ \frac{2\,\log n}{d}\,\log\left\{ \frac{ C_2(p_{\max},L_{\min},d)\, n}{K\,\delta} \right\}  \right]^{ \frac{1}{2}  }  \\
	& &\,\,\,\,\,\,\,\,\,\,\,\,\,\,\,\,\,\,\,\,\,\,\,\,\,\,\,\,\,\,\,\,\,\,\,\,\,\,\,\,\,\,\,\,\,\,\,\,\,\cdot \Big(\|\nabla_{G_K} \hat{\theta}\|_1  \,+\,\|\nabla_{G_K}\,\theta^{*}   \|_1 \Big) \Bigg)    \,\,+\,\,   \\
	& &   \frac{8\sigma\, \left(\log n\right)^{ \frac{1}{2}  }       }{n}\Big(\|\nabla_{G_K} \hat{\theta}\|_1  \,+\,\|\nabla_{G_K}\,\theta^{*}   \|_1 \Big)\,\,+\,\,\frac{\lambda}{n} \left( -\|\nabla_{G_K} \hat{\theta}\|_1  \,+\,\|\nabla_{G_K}\,\theta^{*}   \|_1\right) \\
	& \leq&   \|\nabla_{G_K}\,\theta^{*}   \|_1\,\Big(  \frac{\sigma \,C_1(d)}{n}\,\left[(1+\delta)\,c_{2,d}\,\tilde{b}_1\,\,\frac{2\,\log n}{d}\,\log\left\{ \frac{ C_2(p_{\max},L_{\min},d)\, n}{K\,\delta}\right\}\,K\right]^{  \frac{1}{2}    }    \\ 
	&&\,\,+\,\, \frac{8\sigma\, \log^{ \frac{1}{2}   } n    }{n}\Big)  \,\,+\,\, 2\,\sigma\frac{\left\{2\,(1+\delta)\,c_{2,d}\,\tilde{b}_1\,K\,\log \left( \frac{e}{\delta} \right)\right\}^{ \frac{1}{2}   }   }{n} \|\hat{\theta}  \,-\,\theta^{*}   \|_2,
	\end{array}
	\]
	with probability at least  $\eta_n$.
	Hence by the inequality $a\,b  - 4^{-1}\,b^2  \,\leq \, a^2 $ , we obtain that conditioning on $\Omega$,
	\[
	\begin{array}{lll}
	\frac{1}{4}\| \hat{\theta}  \,-\, \theta^*  \|_n^2      & \leq &  \|\nabla_{G_K}\,\theta^{*}   \|_1\,\Big(  \frac{\sigma \,C_1(d)}{n}\,\left[(1+\delta)\,c_{2,d}\,\tilde{b}_1\,\,\frac{2\,\log n}{d}\,\log\left\{\frac{ C_2(p_{\max},L_{\min},d)\, n}{K\,\delta} \right\} \,K\right]^{\frac{1}{2}   }    \\ 
	&&\,\,+\,\, \frac{8\sigma\, (\log n)^{\frac{1}{2}   }   }{n}\Big)  \,\,+\,\,\frac{4\,\sigma^2\,(1+\delta)\,c_{2,d}\,\tilde{b}_1\,K\,\log \left( \frac{e}{\delta} \right) }{n},
	\end{array}
	\]
	with high probability. The claim follows.
\end{proof}

\section{Auxiliary lemmas for Proof of Theorem \ref{thm:penalty} }
\label{sec:useful_lemma}

Throughout we use the notation  from Section \ref{sec:quantization}. The lemma below  resembles Lemma \ref{lem:controlling_counts}  with the difference that we now prove that an edge in the $K$-NN  graph induces an edge in a certain mesh. 

\begin{lemma}
	\label{lem:controlling_counts2}
	Assume that $N$  in the construction of $G_{\text{lat}}$ is chosen as
	\[
	N\,\,=\,\,  \floor[\Bigg]{  \frac{  \,\left( \,c_{1,d}\,p_{\min} \right)^{1/d} \,n^{1/d}  }{2^{1/d}\,L_{\max}\,K^{1/d}} }.
	\]
	Then there exist positive constants $b^{\prime}_1$ and $b^{\prime}_2$ depending on $L_{\min}$, $L_{\max}$, $d$, $p_{\min}$, $c_{1,d}$, and $c_{2,d} $, such that
	\begin{equation}
	\begin{array}{lll}
	\textup{pr}\bigg\{ \underset{x  \in   h^{-1}(P_{\text{lat}}) }{\max}\vert C(x)\vert   \,\,\geq \,\, (1+\delta)\,c_{2,d}\,b^{\prime}_1\,K  \bigg\}   &  \leq &  N^d \exp\left(  -\frac{1}{3}\delta^2\,b^{\prime}_2\,c_{1,d}\,K   \right),\\
	\textup{pr}\bigg\{ \underset{x  \in   h^{-1}(P_{\text{lat}}) }{\min}\vert C(x)\vert   \,\,\leq \,\, (1-\delta)\,c_{1,d}\,b^{\prime}_2\,K  \bigg\} &\leq  &N^d \exp\left(  -\frac{1}{3}\delta^2\,b^{\prime}_2\,c_{1,d}\,K   \right),
	\end{array}
	\end{equation}
	for all $\delta \in (0,1]$. Moreover, let  $\tilde{\Omega}$ denote the event:  ``For all  $i, j \in [n]$, if $x_i$ and $x_j$  are connected in the $K$-NN graph, then  $\|h(x_i^{\prime}) \,-\,h(x_j^{\prime}) \|_2 \,<\,  2\,N^{-1} $  where  $x_i  \in C(x_i^{\prime})$ and $x_j \in C(x_j^{\prime})$  with $x_i^{\prime},x_j^{\prime} \in I(N)$". Then 
	$$\textup{pr}\left(\tilde{\Omega} \right) \,\,  \geq  \,\,  1\,\,-\,\,  n\,\exp(-K/3). $$
	
\end{lemma}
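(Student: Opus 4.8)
The plan is to run the argument of Lemma~\ref{lem:controlling_counts} in reverse: here $N$ is chosen so that the mesh $G_{lat}(N)$ is \emph{coarse}, making each cell $C(x)$ comparatively large, which forces $K$-NN edges to stay local with respect to the mesh. The preliminary step is to sandwich every cell between two metric balls. Fix $x\in h^{-1}(P_{lat}(N))$. If $x'\in C(x)$, then $\|h(x)-h(x')\|_\infty\le 1/(2N)$ by the definition of the $\|\cdot\|_\infty$-Voronoi cell in \eqref{eqn:cells}, so Assumption~\ref{as3} and the choice of $N$ give $d_{\mathcal X}(x,x')\le \sqrt d/(2L_{\min}N)\le r_1(K/n)^{1/d}$ for a constant $r_1$ depending on $d,L_{\min},L_{\max},c_{1,d},p_{\min}$; hence $C(x)\subset B_{r_1(K/n)^{1/d}}(x)$. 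Conversely, writing $\tilde a=2^{1/d}/(p_{\min}c_{1,d})^{1/d}$, if $d_{\mathcal X}(x,x')\le (\tilde a/2)(K/n)^{1/d}$ then $\|h(x)-h(x')\|_\infty\le\|h(x)-h(x')\|_2\le L_{\max}(\tilde a/2)(K/n)^{1/d}\le 1/(2N)$ by the choice of $N$, so $x'\in C(x)$ and $B_{(\tilde a/2)(K/n)^{1/d}}(x)\subset C(x)$. Taking $c_{2,d}b_1':=p_{\max}c_{2,d}r_1^d$ and $c_{1,d}b_2':=p_{\min}c_{1,d}(\tilde a/2)^d$, Assumptions~\ref{as1}--\ref{as2} then yield $c_{1,d}b_2'K\le n\,\mathbb P(x_1\in C(x))\le c_{2,d}b_1'K$ for all such $x$.

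The count bounds follow from Binomial concentration. Since $\vert C(x)\vert\sim\mathrm{Binomial}(n,\mathbb P(x_1\in C(x)))$ and $n\,\mathbb P(x_1\in C(x))\le c_{2,d}b_1'K$, the event $\{\vert C(x)\vert\ge(1+\delta)c_{2,d}b_1'K\}$ is contained in $\{\vert C(x)\vert\ge(1+\delta)\,n\,\mathbb P(x_1\in C(x))\}$, which by Lemma~\ref{lem:binom_concentration} has probability at most $\exp(-\tfrac13\delta^2 n\,\mathbb P(x_1\in C(x)))\le\exp(-\tfrac13\delta^2 c_{1,d}b_2'K)$, using the lower bound on $n\,\mathbb P(x_1\in C(x))$. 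A union bound over the $N^d$ cells gives the first displayed inequality, and the lower-tail half of Lemma~\ref{lem:binom_concentration} together with $(1-\delta)c_{1,d}b_2'K\le(1-\delta)\,n\,\mathbb P(x_1\in C(x))$ gives the second.

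For the event $\tilde\Omega$, suppose $x_i$ and $x_j$ are joined in the symmetric $K$-NN graph, so $d_{\mathcal X}(x_i,x_j)\le R_{K,\max}$. On the event of Lemma~\ref{lem:controlling_dist} one has $R_{K,\max}\le\tilde a(K/n)^{1/d}$, and the choice of $N$ gives $L_{\max}\tilde a(K/n)^{1/d}\le N^{-1}$; hence $\|h(x_i)-h(x_j)\|_\infty\le\|h(x_i)-h(x_j)\|_2\le L_{\max}d_{\mathcal X}(x_i,x_j)\le N^{-1}$. Writing $x_i\in C(x_i')$, $x_j\in C(x_j')$ with $x_i',x_j'\in I(N)$ and combining this with $\|h(x_i)-h(x_i')\|_\infty\le 1/(2N)$ from the first paragraph, the triangle inequality gives $\|h(x_i')-h(x_j')\|_\infty\le N^{-1}+1/(2N)+1/(2N)=2N^{-1}$, with strict inequality once $n/K$ is large enough that the floor in the definition of $N$ makes $L_{\max}\tilde a(K/n)^{1/d}<N^{-1}$ --- i.e. the two cell centers sit in adjacent mesh positions (which is the intended content of $\tilde\Omega$; in the $\ell_2$ form of the statement one reads off the same conclusion up to a dimensional constant). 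Since the bound on $R_{K,\max}$ is the only failure mode, the tail bound for $R_{K,\max}$ in Lemma~\ref{lem:controlling_dist} yields $\mathbb P(\tilde\Omega)\ge 1-n\exp(-K/3)$.

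The main obstacle is the two-sided calibration of $N$ rather than any individual estimate: the mesh must be coarse enough that $L_{\max}\tilde a(K/n)^{1/d}\le N^{-1}$, so $K$-NN neighbors cannot jump past an adjacent cell, yet fine enough that the inner ball $B_{(\tilde a/2)(K/n)^{1/d}}(x)$ still fits inside every cell, so that each cell is populated by $\asymp K$ design points. These requirements constrain $N$ from both sides and are what pin down the exponents and the factor $2^{1/d}$ in the prescribed value of $N$; checking that they are simultaneously satisfiable (and tracking the $\ell_\infty$ versus $\ell_2$ comparisons through the homeomorphism) is the delicate part, after which the remainder is a routine adaptation of the proof of Lemma~\ref{lem:controlling_counts}.
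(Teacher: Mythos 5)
Your proposal matches the paper's proof essentially step for step: you sandwich each cell $C(x)$ between metric balls of radius $\asymp (K/n)^{1/d}$ via Assumption \ref{as3} and the choice of $N$, apply Lemma \ref{lem:binom_concentration} with a union bound over the $N^d$ cells to get the count bounds, and control $\tilde{\Omega}$ by the triangle inequality together with the $R_{K,\max}$ tail from Lemma \ref{lem:controlling_dist}. The small loose ends you flag (the $\ell_\infty$ versus $\ell_2$ dimensional constant, strictness of the inequality in $\tilde\Omega$, and the fact that the $R_{K,\max}$ tail really yields $n\exp(-K/12)$ rather than $n\exp(-K/3)$) are present in the paper's own write-up as well, so this is the same argument.
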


\begin{proof}
	Let   $i, j \in [n]$ such that $x_i$ and $x_j$  are connected in the $K$-NN graph   where  $x_i  \in C(x_i^{\prime})$ and $x_j \in C(x_j^{\prime})$  with $x_i^{\prime},x_j^{\prime} \in I(N)$. Then, 
	\[
	\begin{array}{lll}
	\|  h(x_i^{\prime})  \,-\,h(x_j^{\prime}) \|_{2}  &  \leq &\|  h(x_i^{\prime})  \,-\,h(x_i) \|_{2}  \,\,+\,\, \|  h(x_i)  \,-\,h(x_j) \|_{2}  \,\,+\,\,\|  h(x_j)  \,-\,h(x_j^{\prime}) \|_{2} \\
	&\leq & \frac{1}{N} \,\,+\,\,  \|  h(x_i)  \,-\,h(x_j) \|_{2}\\
	& \leq &   \frac{1}{N} \,\,+\,\,   L_{\max}\,d_{\mathcal{X}}(x_i,x_j)  \\ 
	& \leq &   \frac{1}{N} \,\,+\,\,   L_{\max}\,R_{K,\max}
	\end{array}
	\] 
	Therefore,
	\[
	\textup{pr}\left(\tilde{\Omega}\right) \,\,\geq \,\,\,\textup{pr}\left\{  R_{K,\max}\,\leq \,  \tilde{a}(K/n)^{1/d}  \right\} \,\,\geq  \,\,1 \,\,-\,\, n\,\exp(-K/12), 
	\] 
	where $\tilde{a}$ is given as in Lemma \ref{lem:controlling_dist}.
	
	Next, we derive an upper bound on the counts of the mesh.	Assume that $x \in h^{-1}\{P_{\text{lat}}(N)\}$,  and $x^{\prime } \in C(x)$. Then,
	\[
	\begin{array}{lll}
	d_{\mathcal{X}}(x,x^{\prime})  & \leq &  \frac{1}{L_{\min}}  \|   h(x) - h(x^{\prime}) \|_2 \\
	& \leq & \frac{1}{2\,  L_{\min}\,N }\\
	& \leq & \frac{1}{2\,  L_{\min}\, } \frac{2 ^{ 1 +1/d}  \,K^{1/d}\, L_{\max}\,   }{n^{1/d}  }\\
	& =:& (b_1^{\prime})^{1/d}  \,  \left( \frac{K}{n} \right)^{1/d},
	\end{array}
	\]
	where the first inequality follows from Assumption \ref{as3}, the second from the definition of $P_{\text{lat}}(N)$, and the third one from the choice of $N$. Therefore,
	\begin{equation}
	\label{one_side_2}
	C(x)   \,\subset\, B_{ (b_1^{\prime})^{1/d}\,  \left( \frac{K}{n} \right)^{1/d}  }(x).
	\end{equation}
	On the other hand, we can find $b_2^{\prime}$  with a similar argument the proof of Lemma \ref{lem:controlling_counts}, and the proof follows the proof of that lemma.
	
\end{proof}

\begin{lemma}
	\label{thm:pred_rate}
	Suppose that Assumptions \ref{as1}--\ref{as3} hold, and choose $K \asymp \log^{1+2r} n $ for some $r >0$. 
	With  $\hat{f}(x)$ the prediction function for the $K$-NN-FL estimator as defined in (\ref{eqn:interpolation_rule}), and for an 
	appropriate choice of $\lambda$, it follows that 
	\begin{equation}
	\label{eqn:pred_rate}
	\mathbb{E}_{X  \sim p}\left\vert    f_0(X) \,-\,\hat{f}(X)  \right\vert^2	\,\,=\,\, O_{\text{pr}}\left( \frac{ \log^{1+2r} n  }{n}\,+\, \frac{\log^{1.5 + r} n}{n} \|\nabla_{G_K}\,\theta^{*}\|_1   \,\,+\,\, \text{AErr}\right),
	\end{equation}
	where $\text{AErr}$ is the approximation error,  defined as
	\[
	\text{AErr}\,\,=\,\,  \bigintssss   \,\left\{  f_0(x) \,-\,\frac{1}{ K    }  \underset{i \in  \mathcal{N}_K(x) }{\sum } f_0(x_i) \right\}^2\,\,p(x)\,  \mu(dx).
	\]
\end{lemma}


\begin{proof}
	Throughout we use the notation from Appendix  \ref{sec:quantization}.
	We start by noticing that
	\[
	\def\arraystretch{2.3}
	\begin{array}{lll}
	\mathbb{E}_{x  \sim p}\left\{    f_0(x) \,-\,\hat{f}(x)  \right\}^2  & =&         \bigints   \,\left\{  f_0(x) \,\,-\,\,\hat{f}(x)   \right\}^2   \,p(x)\,  \mu(dx)  \\
	&  =   &    \bigints   \,\Big\{  f_0(x) \,-\,\frac{1}{ \vert  \mathcal{N}_K(x)\vert    }  \underset{i \in  \mathcal{N}_K(x) }{\sum } f_0(x_i)   \\
	& & \,+\,\frac{1}{ \vert  \mathcal{N}_K(x)\vert    }  \underset{i \in  \mathcal{N}_K(x) }{\sum } f_0(x_i)\,-\,\frac{1}{ \vert  \mathcal{N}_K(x)\vert    }  \underset{i \in  \mathcal{N}_K(x) }{\sum } \hat{f}(x_i)    \Big\}^2   \,p(x)\,  \mu(dx)  \\
	& \leq &  2\, \bigints   \,\Big\{  f_0(x) \,-\,\frac{1}{ \vert  \mathcal{N}_K(x)\vert    }  \underset{i \in  \mathcal{N}_K(x) }{\sum } f_0(x_i) \Big\}^2\,\,p(x)\,  \mu(dx)    \,\,+\,\, \\
	& & 2\, \bigints   \,\Big\{ \frac{1}{ \vert  \mathcal{N}_K(x)\vert    }  \underset{i \in  \mathcal{N}_K(x) }{\sum } f_0(x_i)\,-\,\frac{1}{ \vert  \mathcal{N}_K(x)\vert    }  \underset{i \in  \mathcal{N}_K(x) }{\sum } \hat{f}(x_i)    \Big\}^2\,\,p(x)\,  \mu(dx). \\
	\end{array}
	\]  
	Therefore  we proceed to bound the second term in the last inequality. We observe that
	\begin{equation}
	\label{eqn:pred1}
	\begin{array}{l}
	\bigints   \,\Big\{ \frac{1}{ \vert  \mathcal{N}_K(x)\vert    }  \underset{i \in  \mathcal{N}_K(x) }{\sum } f_0(x_i)\,-\,\frac{1}{ \vert  \mathcal{N}_K(x)\vert    }  \underset{i \in  \mathcal{N}_K(x) }{\sum } \hat{f}(x_i)    \Big\}^2\,\,p(x)\,  \mu(dx)\\
	= \,\,\underset{ x^{\prime} \in I(N)  }{\sum }\,\,\underset{C(x^{\prime})}{ \bigints }\Big\{ \frac{1}{ \vert  \mathcal{N}_K(x)\vert    }  \underset{i \in  \mathcal{N}_K(x) }{\sum } f_0(x_i)\,-\,\frac{1}{ \vert  \mathcal{N}_K(x)\vert    }  \underset{i \in  \mathcal{N}_K(x) }{\sum } \hat{f}(x_i)    \Big\}^2\,\,p(x)\,  \mu(dx).
	\end{array}
	\end{equation}
	Let $x^{\prime}  \in \mathcal{X}$. Then there exists $u(x^{\prime})   \in  I(N)$  such that $x \in  C(u(x^{\prime}))$ and
	\[
	\|  h(x^{\prime})\,-\,h(u(x^{\prime}))   \|_2\,\,\leq \,\, \frac{1}{2\,N}.
	\]
	Moreover, by Lemma \ref{lem:controlling_counts}, with high probability, there exists $i(x^{\prime}) \in [n]$ such that $x_{i(x^{\prime})}   \in   C(u(x^{\prime}))$. Hence,
	\[
	d_{\mathcal{X}}(x^{\prime},x_{i(x^{\prime})})    \,\, \leq   \frac{1}{L_{\min}}\,\, \|  h(x^{\prime})\,-\, h(x_{i(x^{\prime})} )\|_2\,\,\leq \,\, \frac{1}{  L_{\min}\,N}.
	\]
	This implies that there exist $x_{i_1},\ldots,x_{i_K} $,  $i_1,\ldots,i_K  \in [n]$ such that
	\[
	d_{\mathcal{X}}(x^{\prime},x_{i_l}) \,\,\leq \,\,  \frac{1}{  L_{\min}\,N}\,\,+\,\,  R_{K,\max},\,\,\,\,\,l =1,\ldots, K. 
	\]
	Thus,  with high probability,  for any $x^{\prime}  \in \mathcal{X}$,
	\begin{equation}
	\def\arraystretch{2}
	\label{eqn:subset}
	\begin{array}{lll}
	\mathcal{N}_K(x^{\prime} )&\subset &   \left\{   i\,\,:\,\,  d_{\mathcal{X}}(x^{\prime},x_{i}) \,\,\leq \,\,  \frac{1}{  L_{\min}\,N}\,\,+\,\,  R_{K,\max} \right\}\\ 
	& \subset  &\left\{   i\,\,:\,\,  \| h(x^{\prime})\,-\,  h(x_{i}) \|_2 \,\,\leq \,\,  \frac{L_{\max}}{  L_{\min}\,N}\,\,+\,\,  L_{\max}\,R_{K,\max} \right\}\\
	& \subset  &\left\{   i\,\,:\,\,  \| h(u(x^{\prime}))\,-\,  h(x_{i}) \|_2 \,\,\leq \,\,       \left( \frac{1}{2} \,\,+\,\, \frac{L_{\max}}{  L_{\min}}\right)\frac{1}{N}  \,\,+\,\,  L_{\max}\,R_{K,\max} \right\}.
	\end{array}
	\end{equation}
	Hence,  we set
	\[
	\tilde{\mathcal{N}}(u(x^{\prime}))   \,=\,\left\{   i\,\,:\,\,  \| h(u(x^{\prime}))\,-\,  h(x_{i}) \|_2 \,\,\leq \,\,       \left( \frac{1}{2} \,\,+\,\, \frac{L_{\max}}{  L_{\min}}\right)\frac{1}{N}  \,\,+\,\,  L_{\max}\,R_{K,\max} \right\}.
	\]
	As a result, denoting by   $u_1,\ldots,u_{N^d}$  the elements of  $I(N)$, we have that for $j \in [N^d]$,
	\[    
	\def\arraystretch{2.3}       
	\begin{array}{l}
	\underset{C(u_j)}{\bigints}\,\Big\{ \frac{1}{ \vert  \mathcal{N}_K(x)\vert    }  \underset{i \in  \mathcal{N}_K(x) }{\sum } f_0(x_i)\,-\,\frac{1}{ \vert  \mathcal{N}_K(x)\vert    }  \underset{i \in  \mathcal{N}_K(x) }{\sum } \hat{f}(x_i)    \Big\}^2\,\,p(x)\,  \mu(dx)\\
	\leq \,\,\underset{C(u_j)}{\bigints}\, \frac{1}{ \vert  \mathcal{N}_K(x)\vert    } \,\underset{i \in  \mathcal{N}_K(x) }{\sum } \,\left\{ f_0(x_i) \,-\,  \hat{f}(x_i)  \right\}^2\,p(x)\,\mu(dx)\\
	\leq\,\,\frac{1}{ K  } \,\underset{C(u_j)}{\bigints}\, \underset{i \in \tilde{\mathcal{N}}(u_j) }{\sum } \,\left\{ f_0(x_i) \,-\,  \hat{f}(x_i)  \right\}^2\,p(x)\,\mu(dx)\\
	=\,\,\frac{1}{K    } \,\left[\underset{i \in  \tilde{\mathcal{N}}(u_j) }{\sum } \,\left\{ f_0(x_i) \,-\,  \hat{f}(x_i)  \right\}^2   \right]\,\underset{C(u_j)}{\bigints}p(x)\,\mu(dx).\\
	\end{array}
	\]
	The above combined with (\ref{eqn:pred1})  leads to
	\begin{equation}
	\label{eqn:step2}      
	\def\arraystretch{2.3} 
	\begin{array}{l}
	\displaystyle
	\int  \,\Bigg\{ \frac{1}{ \vert  \mathcal{N}_K(x) \vert    }  \underset{i \in  \mathcal{N}_K(x) }{\sum } f_0(x_i)\,-\,\frac{1}{ \vert  \mathcal{N}_K(x)\vert    }  \underset{i \in  \mathcal{N}_K(x) }{\sum } \hat{f}(x_i)    \Bigg\}^2\,\,p(x)\,  \mu(dx)\\
	\leq \,\,\displaystyle
	\sum_{j=1}^{N^d}\left(\frac{1}{ K   } \,\left[\underset{i \in  \tilde{\mathcal{N}}(u_j) }{\sum } \,\left\{f_0(x_i) \,-\,  \hat{f}(x_i)  \right\}^2   \right]\,\underset{C(u_j)}{\int}p(x)\,\mu(dx)\right)\\
	\displaystyle	\leq \,\,\, \left[\sum_{j=1}^{N^d}\frac{1}{ K   } \,\underset{i \in  \tilde{\mathcal{N}}(u_j) }{\sum } \,\left\{ f_0(x_i) \,-\,  \hat{f}(x_i)  \right\}^2   \,\right]\,\, \underset{1 \leq j \leq N^d}{\max} \underset{C(u_j)}{\int}p(x)\,\mu(dx)\\
	\displaystyle	\leq \,\,\left[ \sum_{i=1}^{n} \,\frac{1}{K}    \underset{j \in [N^d]\,:\,  \| h(x_i) \,-\, h(u_j)\|_2 \,\leq \,  \left( \frac{1}{2} \,+\, \frac{L_{\max}}{  L_{\min}}\right)\frac{1}{N}  \,+\, L_{\max}\,R_{K,\max}  }{\sum}  \,\left\{ f_0(x_i) \,-\,\hat{f}(x_i)  \right\}^2      \right]\,\, \underset{1 \leq j \leq N^d}{\max} \underset{C(u_j)}{\int}p(x)\,\mu(dx)\\
	\displaystyle	\leq \,\,\, \left[\sum_{i=1}^{n} \left\{ f_0(x_i) \,-\,  \hat{f}(x_i)  \right\}^2 \right]\,\left[  \underset{i \in [n]}{\max} \left\vert \left\{ j\,:\, \| h(x_i) \,-\, h(u_j)\|_2 \,\leq \,  \left( \frac{1}{2} \,+\, \frac{L_{\max}}{  L_{\min}}\right)\frac{1}{N}\,+\,  L_{\max}\,R_{K,\max}   \right\} \right\vert   \right]\cdot\\
	\,\,\,\,\,\,\,\,\,\,\frac{1}{K}\,\,\underset{1 \leq j \leq N^d}{\max} \underset{C(u_j)}{\int}p(x)\,\mu(dx).
	\end{array}
	\end{equation}
	Next,  for  a set $A \subset \mathbb{R}^d$  and  positive constant $r$, we define the packing and external covering numbers as
	\[
	\begin{array}{lll}
	\text{N}_{ r}^{\text{pack}}(A) & := & \max\left\{  	l \in \mathbb{N} \,:\, \exists  q_1,\ldots,q_l \in A,\,\,\,\text{such that}\,\,\, \|q_j - q_{j^{\prime}}\|_2 > r   \,\,\,\,\forall j \neq j^{\prime}   \right\},\\
	\text{N}_{ r}^{ \text{ext} }  & := & \min\left\{ l \in \mathbb{N}\,:\, \exists  q_1,\ldots,q_l \in \mathbb{R}^d,  \text{such that}\,\,\, \forall x \in A \,\,\,\,\text{there exists}\,\, l_x  \,\,\text{with}\,\,\,\|x-q_{l_x}\|_2 < r     \right\}.
	\end{array}
	\]
	Furthermore, by Lemma \ref{lem:controlling_counts}, there exists a constant $\tilde{c}$ such that  $R_{K,\max}   \leq  \tilde{c}/N$ with high probability. This implies that  with high probability, for  a positive constant $\tilde{C}$,  we have that 
	\begin{equation}
	\label{eqn:entropy_argument}
	\def\arraystretch{2.3}
	\begin{array}{l}
	\underset{i \in [n] }{\max} \left\vert \left\{ j \in [N^d] \,:\, \| h(x_i) \,-\, h(u_j)\|_{2} \,\leq \,  \left( \frac{1}{2} \,+\, \frac{L_{\max}}{  L_{\min}}\right)\frac{1}{N}\,+\,  L_{\max}\,R_{K,\max}   \right\} \right\vert \\
	\leq \,\,
	\underset{i \in [n] }{\max} \left\vert \left\{ j \in [N^d]\,:\, \| h(x_i) \,-\, h(u_j)\|_{2} \,\leq \,  \frac{ \tilde{C} }{N}   \right\} \right\vert \\
	\end{array}
	\end{equation}
	\begin{equation}
	\begin{array}{l}
	\label{eqn:entropy_argument2}
	\def\arraystretch{2.3}
	\leq \,\,\underset{i \in [n] }{\max}  \,\,\, \text{N}_{ \frac{1}{N} }^{\text{pack}}\left(  B_{\frac{ \tilde{C} }{N}   } (h(x_i)) \right)\\
	\leq   \,\,  \text{N}_{  \frac{1}{N} }^{ \text{ext} }(  B_{\frac{ \tilde{C} }{N}   } (0)   )\\
	=\,\,  \text{N}_{ 1}^{ \text{ext} }(   B_{\tilde{C}    } (0)   ) \\
	<\,\,  C^{\prime}, 
	\end{array}
	\end{equation}
	for some  positive constant  $C^{\prime}$, where the first inequality  follows from  $R_{K,\max}   \leq  \tilde{c}/N$, the second from the definition of packing number, and the remaining  inequalities from  well-known properties of packing and external covering numbers.

	Therefore, there exists a constant $C_1$ such that
	\[
	\def\arraystretch{2.3} 
	\begin{array}{l}
	\displaystyle \int   \,\Bigg\{ \frac{1}{ \vert  \mathcal{N}_K(x)\vert    }  \underset{i \in  \mathcal{N}_K(x) }{\sum } f_0(x_i)\,-\,\frac{1}{ \vert  \mathcal{N}_K(x)\vert    }  \underset{i \in  \mathcal{N}_K(x) }{\sum } \hat{f}(x_i)    \Bigg\}^2\,\,p(x)\,  \mu(dx)\\
	\displaystyle	\leq \,\,\left[\sum_{i=1}^{n} \left\{ f_0(x_i) \,-\,  \hat{f}(x_i)  \right\}^2 \right]\,
	\,\frac{C_1}{K}\,\,\underset{1 \leq j \leq N^d}{\max} \underset{C(u_j)}{\int}p(x)\,\mu(dx)\\
	\displaystyle	\leq \,\,\left[\sum_{i=1}^{n} \left\{ f_0(x_i) \,-\,  \hat{f}(x_i)  \right\}^2 \right]\,
	\,\frac{C_1\,p_{\max}}{K}\,\,\mu\left( B_{  \frac{ \tilde{b}_1}{p_{\max} ^{1/d} } \,  \left( \frac{K}{n} \right)^{1/d}  }(x) \right)\\
	\end{array}
	\]
	where the last equation follows as in (\ref{one_side}). The claim then follows.
	
\end{proof}

\begin{lemma}
	\label{thm:aerr}
	Assume that $g_0 :=  f_0\circ h^{-1}  $  satisfies  Definition  \ref{def:pie_lip}, i.e. $g_0$ is piecewise Lipschitz.
	Then, under Assumptions \ref{as1}--\ref{as3}, 
	\[
	\text{AErr}    \,\,=\,\, O_{ \text{pr}}\left(  \frac{K^{1/d}}{n^{1/d}} \right),
	\]
	provided   that  $K/\log n   \rightarrow \infty$, where $\text{AErr}$ was defined in Lemma \ref{thm:pred_rate}.
	Consequently,  with   $K \asymp \log^{1+2r} n $ for some $r >0$,  and for an appropriate choice of $\lambda$,  we have that
	\begin{equation}
	\mathbb{E}_{X  \sim p}\left\vert   f_0(X) \,-\,\hat{f}(X)  \right\vert^2	\,\,=\,\, O_{\text{pr}}\left\{   \frac{ \log^{  2.5 +3r +   (1+2r)/d } n   }{n^{1/d}}     \right\}.
	\end{equation}
\end{lemma}

\begin{proof}
	Throughout we use the notation from Lemma \ref{lem:controlling_counts}  and Section \ref{sec:quantization}. We denote by $u_1,\ldots,u_{N^d}$ the elements of $h^{-1}(P_{\text{lat}}(N))$, and so
	\[
	\def\arraystretch{2.3} 
	\begin{array}{lll}
	\text{AErr}  &=  &  \bigints   \,\Big\{ f_0(x) \,-\,\frac{1}{ K    }  \underset{i \in  \mathcal{N}_K(x) }{\sum } f_0(x_i) \Big\}^2\,\,p(x)\,  \mu(dx)\\
	&= &	 	\underset{j \in [N^d]}{\sum}\,	\underset{C(u_j)}{\bigints }	   \,\Big\{f_0(x) \,-\,\frac{1}{ K    }  \underset{i \in  \mathcal{N}_K(x) }{\sum } f_0(x_i) \Big\} ^2\,\,p(x)\,  \mu(dx)\\
	&   \leq& \underset{j \in [N^d]}{\sum}\,	\underset{C(u_j)}{\bigints } \,  \underset{i \in \mathcal{N}_K(x)  }{\sum }
	\frac{1}{K} \Big\{  f_0(x) \,-\, f_0(x_i) \Big\} ^2\,\,p(x)\,  \mu(dx)
	\\
	&  = & \underset{j \in [N^d]\,:\, \,\,  C(u_j) \cap   \mathcal{S}\, =\, \emptyset     }{\sum}\,\,\,\,	\underset{C(u_j)}{\bigints } \,  \underset{i \in \mathcal{N}_K(x)   }{\sum }
	\frac{1}{K} \Big\{  f_0(x) \,-\, f_0(x_i) \Big\}^2\,\,p(x)\,  \mu(dx)  \,\,+\,\,\\
	& & \underset{j \in [N^d]\,:\, \,\,  C(u_j) \cap  \mathcal{S}\, \neq \, \emptyset     }{\sum}\,\,\,\,	\underset{C(u_j)}{\bigints } \,  \underset{i \in \mathcal{N}_K(x)   }{\sum }
	\frac{1}{K} \Big\{  f_0(x) \,-\, f_0(x_i) \Big\}^2\,\,p(x)\,  \mu(dx),\\
	\end{array}
	\]
	where the  inequality holds by convexity.
	Therefore, 
	\[
	\def\arraystretch{2.3} 
	\begin{array}{lll}
	\text{AErr} 	&  \leq &  \,4\, \left\vert \left\{  j \in [N^d]\,:\, \,\,  C(u_j) \cap  \mathcal{S}\, \neq \, \emptyset  \right\}\right\vert\,\,\| f_0 \|_{\infty}^2\,\,\,\underset{1 \leq j \leq N^d}{\max} \,\underset{C(u_j)}{\bigints } \,\,p(x)\,  \mu(dx)  \,\,+\,\,\\
	& &   \underset{j \in [N^d]\,:\, \,\,  C(u_j) \cap  \mathcal{S}\, =\, \emptyset     }{\sum}\,\,\,\,	\underset{C(u_j)}{\bigints } \,  \underset{i \in \mathcal{N}_K(x)   }{\sum }
	\frac{1}{K} \Big\{  f_0(x) \,-\, f_0(x_i) \Big\}^2\,\,p(x)\,  \mu(dx) \\
	& \leq&\mu\left( B_{  \frac{ \tilde{b}_1^{1/d} }{p_{\max} ^{1/d} } \,  \left( \frac{K}{n} \right)^{1/d}  }(x) \right)\,\,\left( 4\, p_{\max}\,\| f_0 \|_{\infty}^2  \right)\,\left\vert \left\{  j \in [N^d]\,:\, \,\,  C(u_j) \cap   \mathcal{S}\, \neq \, \emptyset  \right\}\right\vert\\
	& & \,\,+\,\, \underset{j \in [N^d]\,:\, \,\,  C(u_j) \cap  \mathcal{S}\, =\, \emptyset     }{\sum}\,\,\,\,	\underset{C(u_j)}{\bigints } \,  \underset{i \in \mathcal{N}_K(x)  }{\sum }
	\,\,\frac{1}{K} \Big\{  g_0(h(x)) \,-\, g_0(h(x_i) )\Big\}^2\,p(x)\,  \mu(dx) \\
	& \leq& \left(c_{2,d}\,4\, \tilde{b}_1 \,\| f_0 \|_{\infty}^2\right) \frac{K}{n}\,\left\vert \left\{  j \in [N^d]\,:\, \,\,  C(u_j) \cap   \mathcal{S}\, \neq \, \emptyset  \right\}\right\vert\\
	& & \,\,+\,\, \underset{j \in [N^d]\,:\, \,\,  C(u_j) \cap \mathcal{S}\, =\, \emptyset     }{\sum}\,\,\,\,	\underset{C(u_j)}{\bigints } \,  \underset{i \in \mathcal{N}_K(x)   }{\sum }
	\,\,\frac{L_0}{K} \| h(x) \,-\, h(x_i) \|_2^2\,p(x)\,  \mu(dx) \\
	& \leq &     \left(c_{2,d}\,4\, \tilde{b}^d \,\| f_0 \|_{\infty}^2\right) \frac{K}{n}\,\left\vert \left\{  j \in [N^d]\,:\, \,\,  C(u_j) \cap   \mathcal{S}\, \neq \, \emptyset  \right\}\right\vert\\
	&&   \,\,+\,\,  L_0 \,\left\{\underset{j \in [N^d]\,:\, \,\,  C(u_j) \cap  \mathcal{S}\, =\, \emptyset     }{\sum}\,\,	\underset{C(u_j)}{\bigints } \,p(x)\,  \mu(dx) \right\} \,\left( \frac{L_{\max}}{  L_{\min}}\frac{1}{N}  \,\,+\,\,  L_{\max}\,R_{K,\max}\right)^2\\
	& \leq &     \left(c_{2,d}\,4\, \tilde{b}^d \,\| f_0 \|_{\infty}^2\right) \frac{K}{n}\,\left\vert \left\{  j \in [N^d]\,:\, \,\,  C(u_j) \cap  \ \mathcal{S}\, \neq \, \emptyset  \right\}\right\vert\\
	&&   \,\,+\,\,  L_0 \, \,\left( \frac{L_{\max}}{  L_{\min}}\frac{1}{N}  \,\,+\,\,  L_{\max}\,R_{K,\max}\right)^2,\\
	\end{array}
	\]
	
	\noindent where the first inequality holds  by elementary properties of integrals,
	the second inequality by (\ref{one_side}), the third inequality by Assumptions \ref{as2}--\ref{as3}, the fourth inequality by  the same argument as in (\ref{eqn:subset}), and the fifth inequality from  properties of integration.  The conclusion of the lemma follows from the inequality above combined with Lemma  \ref{lem:controlling_dist}  and   the proof of Proposition 23 from \cite{hutter2016optimal} which uses Lemma 8.3 from \cite{arias2012oracle}.

\end{proof}

\section{Proof of Theorem   \ref{thm:penalty}  }
\label{sec:appendix_penalty}
\begin{proof}
	Combining  Lemmas \ref{thm:pred_rate} and \ref{thm:aerr}  we obtain that
	\[
	E_{X  \sim p}\left\{\left\vert   f_0(X) \,-\,\hat{f}(X)  \right\vert^2\right\}    	\,\,=\,\, O_{ \textup{pr} }\left\{   \frac{ \log^{  \alpha} n   }{n^{1/d}}     \right\},
	\]
	provide that Assumptions \ref{as1}--\ref{as3}  hold  and  $f_0$   satisfies   Definition \ref{def:pie_lip}.
	
	Suppose  now that Assumptions   \ref{as1}--\ref{as:sup} hold. Throughout, we extend the domain of the function $g_0$  to be $\mathbb{R}^{d}$ by simply making it take the value zero in  $\mathbb{R}^{d} \backslash [0,1]^d$.  We will proceed to construct smooth approximations to $g_0$  that will allow us to obtain the desired result. To that end, for any $\epsilon   \,>\,0$   we construct the regularizer
	(or mollifier)  $g_{\epsilon} \,:\, \mathbb{R}^d   \rightarrow \mathbb{R} $ defined as
	\[
	g_{\epsilon}(z)\,\,=\,\,  \psi_{\epsilon}*g_0 (z)    \,\, =  \,\, \int \,  \psi_{\epsilon}(z^{\prime})\,g_0(z  \,-\, z^{\prime} )\,dz^{\prime},
	\]
	where  $\psi_{\epsilon}(z^{\prime})\,=\, \epsilon^{-d}\, \psi( z^{\prime}/\epsilon )$. Then  given Assumption \ref{as:_rf}, by the proof of Theorem 5.3.5   from  \cite{zhu2003semisupervised},  it follows that there exists a constant $C_2$ such that
	\[
	\underset{\epsilon   \rightarrow^+  0 }{\lim \sup}\, \,\, \underset{(0,1)^d}{\int} \| \nabla g_{\epsilon}(z)\|_1 \,dz        \,\,<\,\, C_2
	\] 
	which implies that there exists $\epsilon_1   >0$  such that 
	\begin{equation}
	\label{eqn:first_bound}
	\underset{0<\epsilon  \,<\, \epsilon_1      }{\sup}\, \,\,\underset{(0,1)^d}{\int}  \| \nabla g_{\epsilon}(z)\|_1  \,dz     \,\,<\,\, C_2.
	\end{equation}

	Next,	for $N$ as in Lemma \ref{lem:controlling_counts2}, we set $\epsilon  \,\,=\,\, N^{-1}$ and consider the event
	\begin{equation}
	\label{eqn:omega_definition}
	\Lambda_{\epsilon} \,=\, \left\{   h(x_1) \,\, \in\,\,  B_{4\epsilon}(\mathcal{S}) \cup \left[(0,1)^d\backslash \Omega_{4\epsilon}\right]     \right\},
	\end{equation}
	and note that
	\begin{equation}
	\label{eqn:prob}
	\begin{array}{lll}
	\textup{pr}( \Lambda_{\epsilon}) &=&    \underset{  h^{-1}\left( B_{4\epsilon}(\mathcal{S}) \cup (0,1)^d\backslash \Omega_{4\epsilon}\right)   }{\int }    p(z)\, \mu(dz)\\
	& \leq  &   p_{\max}\,\mu\left[ h^{-1}\left\{ B_{4\epsilon}(\mathcal{S}) \cup (0,1)^d\backslash \Omega_{4\epsilon} \right\}  \right]\\
	& \leq&     p_{\max}\,C_{\mathcal{S}}\,4\,\epsilon,\,
	\end{array}
	\end{equation}
	where the last inequality follows from Assumption \ref{as:sup}. Defining 
	\begin{equation}
	\label{eqn:J}
	J \,\,=\,\, \left\{  \,i    \in  [n] \,\,:\,  h(x_i)  \in  \Omega_{4\epsilon}\backslash  B_{4\epsilon}(\mathcal{S}) \right\},  		
	\end{equation}
	by the triangle inequality we have
	\begin{equation}
	\label{eqn:relating}
	\def\arraystretch{2.2}
	\begin{array}{l}
	\left\vert    \underset{(i,j) \in  E_K,\, i,j \in J  }{\sum} \vert \theta_i^* \,-\, \theta_j^* \vert \,-\,\underset{(i,j) \in  E_K\, i,j \in J }{\sum} \vert g_{\epsilon}\{h(x_i)\} \,-\, g_{\epsilon}\{h(x_j)\} \vert   \right\vert     
	\\ 
	=\,\,  
	\left\vert    \underset{(i,j) \in  E_K\, i,j \in J}{\sum} \vert g_0\{h(x_i)\} \,-\, g_0\{h(x_j)\} \vert \,-\,\underset{(i,j) \in  E_K\, i,j \in J}{\sum} \vert g_{\epsilon}\{h(x_i)\} \,-\, g_{\epsilon}\{h(x_j)\}\vert\right\vert\\
	\end{array}
	\end{equation}
	
	\begin{equation}
	\label{eqn:relating2}
	\def\arraystretch{2.2}
	\begin{array}{l}
	\leq  \underset{(i,j) \in  E_k\, i,j \in J}{\sum}  \Big[  \vert g_0\{h(x_i)\}\,-\,g_{\epsilon}\{h(x_i)\}  \vert   \,+\,  \vert g_0\{h(x_j)\}\,-\,g_{\epsilon}\{h(x_j)\} \vert   \Big]\\
	\leq\,\,   d_{\max}\, \underset{i \in J}{\sum}\,\, \vert g_0 \{h(x_i)\}\,-\,g_{\epsilon}\{h(x_i)\} \vert\\
	\leq  \,\,  d_{\max}\, \underset{i \in J}{\sum}\,\, \, \int  \psi_{\epsilon}\{h(x_i) -z \}\,\left \vert g_0\{ h(x_i)\}  \,-\, g_0(z)  \right\vert \,dz\\
	\leq \,\, K\,\tau_d\,   C_1\, \epsilon^{-d} \, \underset{i \in J}{\sum}\,\, \underset{   \|   h(x_i)   \,-\,  z\|_2  \leq  \epsilon   }{\int }\left \vert g_0\{ h(x_i)\}  \,-\, g_0(z) \right\vert \,dz,
	\end{array}
	\end{equation}
	where $\tau_d$ is a positive constant  and the second inequality happens  with high probability as shown in Lemma  \ref{lem:controlling_counts}.
	
	We then bound the last  term in  (\ref{eqn:relating2})  using Assumption \ref{as:sup}. Thus, 
	\begin{equation}
	\label{first_equation}
	\begin{array}{lll}
	\epsilon^{-d}\,\underset{i \in J}{\sum}\,\, \underset{   \|   h(x_i)   \,-\,  z\|_2  \leq  \epsilon   }{\int }\left \vert g_0\{ h(x_i)\}  \,-\, g_0(z)  \right\vert \,dz   & = &
	\epsilon^{-d}	\displaystyle  \sum_{A \in \mathcal{P}_{\epsilon } }\,\,\sum_{  i \in J,  h(x_i)  \in A } \,\, \underset{   \|   h(x_i)   \,-\,  z\|_2  \leq  \epsilon   }{\int }\left \vert g_0\{ h(x_i)\}  \,-\, g_0(z)  \right\vert \,dz \\
	&  \leq &  \left[\underset{z   \in  P_{\text{lat}}(N)  }{\max}  \,\vert C\{h^{-1}(z)\}\vert\right] \,\,S_1(g_0,\mathcal{P}_{N^{-1}, \mathcal{S}  })\,N^{d}\,\epsilon\\
	&\leq  & \left\{(1+\delta)\,c_{2,d}\,b^{\prime}_1\,K\right\} \,\,S_1(g_0,\mathcal{P}_{N^{-1}, \mathcal{S} })\,N^{d}\,\epsilon,
	\end{array}
	\end{equation}
	with probability at least
	\[
	1  \,\,-\,\, \frac{n}{K \, \tilde{a}^d\, L_{\max}^d  } \exp\left(  -\frac{1}{3}\delta^2\,b^{\prime}_2\,c_{1,d}\,K   \right),
	\]
	which follows from Lemma \ref{lem:controlling_counts2}.
	
	
	If $h(x_i) \notin \Omega_{\epsilon} \backslash  B_{\epsilon}(\mathcal{S}) $, then
	\begin{equation}
	\label{eqn:second_upper_bound}
	\left \vert g_0\{ h(x_i)\}  \,-\, g_0\{h(x_j)\}  \right\vert  \,\,\leq \,\,
	\,2\,\|g \|_{L_{\infty}(0,1)^d}.
	\end{equation}

	We now proceed to put the different pieces together.  Setting $\tilde{n} \,\,=\,\,  \vert [n]  \backslash  J    \vert, $ we observe that 
	\[
	\tilde{n} \,\,\sim\,\,\text{Binomial}\left\{ n\,,\,   \textup{pr}(   \Lambda_{\epsilon})  \right\}.
	\]
	If 
	\[
	n^{\prime} \,\,\sim\,\,\text{Binomial}\left\{ n\,,\,   p_{\max}\,C_{\mathcal{S}}\,\,4\epsilon \right\},
	\]
	then  by (\ref{eqn:prob}), we have 
	\begin{equation}
	\label{eqn:counting}
	\def\arraystretch{2.2}
	\begin{array}{lll}
	\textup{pr}\left( \tilde{n}  \,\geq   \,\frac{3}{2} \,n\,p_{\max}\,C_{\mathcal{S}}\,\epsilon\,4\right)  &\leq&   \textup{pr}\left( n^{\prime}  \,\geq \,\frac{3}{2}\,n\,p_{\max}\,C_{\mathcal{S}}\,\epsilon \,4\right)  \\
	& \leq&   \exp\left\{ - \frac{1}{12}\,n\,\left(  \,p_{\max}\,C_{\mathcal{S}}\,\epsilon \right) 4  \right\}\\
	&  = & \exp\left[ - \frac{\,p_{\max}\,4\,C_{\mathcal{S}}}{12}\,n\,\left\{\frac{2^{1/d}\,L_{\max}\,K^{1/d}}{ \,\left( \,c_{1,d}\,p_{\min} \right)^{1/d} \,n^{1/d}}   \right\} \right]\\
	&  = & \exp\left(  - \tilde{C}\, n^{ 1- 1/d}\,K^{1/d}   \right),
	\end{array}
	\end{equation}
	where the first inequality  follows  from \eqref{eqn:prob}, the second from Lemma \ref{lem:binom_concentration},  and $\tilde{C}$ is a positive constant that depends on $p_{\min}$, $p_{\max}$, $L_{\min}$, $L_{\max}$, $d$, and $C_{\mathcal{S}}$. Consequently, combining the above inequality with (\ref{eqn:relating}), (\ref{first_equation})  and (\ref{eqn:second_upper_bound})  we arrive at
	\[
	\def\arraystretch{2.2}
	\begin{array}{lll}
	\underset{(i,j) \in  E_K\,  }{\sum} \vert \theta_i^* \,-\, \theta_j^* \vert  &  = &  \underset{(i,j) \in  E_K,\, i,j \in J  }{\sum} \vert \theta_i^* \,-\, \theta_j^* \vert   \,\,\,+\,\,\,\underset{(i,j) \in  E_K,\, i \notin J   \,\text{or}\,\,j \notin J    }{\sum} \vert \theta_i^* \,-\, \theta_j^* \vert\\
	& \leq &\underset{(i,j) \in  E_K\, i,j \in J }{\sum} \vert g_{\epsilon}\{h(x_i)\} \,-\, g_{\epsilon}\{h(x_j)\} \vert  \,\,+\,\,\\
	& & K\,\tau_d\,   C_1\,\left\{(1+\delta)\,c_{2,d}\,b^{\prime}_1\,K\right\} \,\,S_1(g_0,\mathcal{P}_{N^{-1}})\,N^{d}\,\epsilon\\
	& &  \,+\,\,2\,\|g_0\|_{L_{\infty}(0,1)^d} K\,\tau_d\,\tilde{n}    \,\,\\
	\end{array}
	\]
	
	\[
	\def\arraystretch{2.2}
	\begin{array}{lll}
	\,\,\,\,\,\,\,\,\,	\,\,\,\,\,\,\,\,\,\,\,\,\,\,\,\,\,\,	\,\,\,\,\,\,\,\,\,\,\,\,\,\,\,\,\,\,\,\,\,\,\,\,\,\,\,\,	& <& \underset{(i,j) \in  E_K\, i,j \in J }{\sum} \vert g_{\epsilon}
	\{h(x_i)\} \,-\, g_{\epsilon}\{h(x_j)\} \vert  \,\,+\,\, C_6\,n^{1-1/d}\,K^{1+1/d}\,\\
	&  &\,\,+\,\,  2\,\|g_0\|_{L_{\infty}(0,1)^d} K\,\tau_d\,\tilde{n},    \,\,\\
	\end{array}
	\]
	for some positive constant $C_6 >0$, which happens with high probability, see  (\ref{eqn:counting}).
	Hence, from (\ref{eqn:counting}),
	\begin{equation}
	\label{eqn:ineq2}
	\begin{array}{lll}
	\underset{(i,j) \in  E_K\,  }{\sum} \vert \theta_i^* \,-\, \theta_j^* \vert   &\leq  & \underset{(i,j) \in  E_K\, i,j \in J }{\sum} \vert g_{\epsilon}\{h(x_i)\} \,-\, g_{\epsilon}\{h(x_j)\} \vert  \,\,+\,\, C_6\,n^{1-1/d}\,K^{1+1/d}\,\\
	&  &   \,\,+\,\, C_7\,n^{ 1 - 1/d}\,K^{1+1/d}, 
	\end{array}
	\end{equation}
	where $C_7 >0$ is a constant, and the last inequality holds 
	with probability approaching  one provided that $K /\log n    \rightarrow \infty$.
	
	Therefore, it remains to bound the first term in the right hand side of inequality (\ref{eqn:ineq2}). To that end, we notice that if $(i,j) \in E_K$, then 
	from the proof of Lemma \ref{lem:controlling_counts2}  we observe that
	\[
	\|  h(x_i)   \,-\,  h(x_j) \|_2  \,\,\leq \,\,  L_{\max}\, R_{K,\max}  \,\,\leq \,\,\epsilon,
	\]
	where the last inequality happens with high probability. Hence, with high probability, if $z$ is in the segment connecting $h(x_i)$ and $h(x_j)$, then  $z \notin  B_{2\epsilon}(\mathcal{S})  \cup  ((0,1)^d \backslash  \Omega_{2\epsilon}  ) $ provided that $i,j \in J$.  As a result, by the mean value theorem,  for $i , j  \in J$  there exists a $z_{i,j}  \in    \Omega_{2\epsilon}   \backslash B_{2\epsilon}(\mathcal{S}) $ such that
	
	\[
	g_{\epsilon}\{h(x_i)\} \,-\, g_{\epsilon}\{h(x_j)\}  \,\,=\,\,   \nabla g_{\epsilon}(z_{i,j})^T\{h(x_i)\,-\,h(x_j)\},
	\]
	and this holds uniformly with probability approaching one.
	
	Then,
	\begin{equation}
	\label{eqn:smooth}
	\def\arraystretch{2.3}
	\begin{array}{l}
	\underset{(i,j) \in  E_K\, i,j \in J }{\sum}\, \vert g_{\epsilon}\{h(x_i)\} \,-\, g_{\epsilon}\{h(x_j)\} \vert 
	\\
	= \,\,  \underset{(i,j) \in  E_K\, i,j \in J }{\sum}\,   \,\,\vert  \nabla g_{\epsilon}(z_{i,j})^T\{h(x_i)-h(x_j)\} \vert \\
	\leq \,\,  \left\{  \underset{(i,j) \in  E_K\, i,j \in J }{\sum}\,   \,\,\|  \nabla g_{\epsilon}(z_{i,j}) \|_1   \right\}  \frac{2}{N}\\
	=\,\,2\,\,\,\,\underset{A  \in \mathcal{P}_{\epsilon},\,\,A\cap \{ \Omega_{2\epsilon}\backslash B_{2\epsilon}(\mathcal{S} )\} \,\neq\, \emptyset    }{\sum} 
	\,\,\,\,\, \left[\underset{(i,j) \in  E_K\,  \,\text{s.t}\, i,j \in J,\,  \text{ and } \,z_{i,j} \in A }{\sum}\,\|  \nabla g_{\epsilon}(z_{i,j}) \|_1 \,\text{Vol}\{ B_{\epsilon}(z_{i,j})  \}\right] \frac{C_8\,N^d}{ N}\\
	\leq  \,\,2\,\,\,\,\underset{A  \in \mathcal{P}_{\epsilon},\,\,A\cap \{ \Omega_{2\epsilon}\backslash B_{2\epsilon}(\mathcal{S} )\} \,\neq\, \emptyset    }{\sum} 
	\,\,\,\,\, \Bigg\{\underset{(i,j) \in  E_K\,  \,\text{s.t}\, i,j \in J,\,  \text{ and } \,z_{i,j} \in A}{\sum}\,\,  \underset{B_{\epsilon}(z_{i,j}) }{\bigints}  \|  \nabla g_{\epsilon}(z) \|_1   \,dz \,\Bigg\} \frac{C_8\,N^d}{N}   \,\,+\,\,\\ 
	\,  \,\,2\,\,\,\,\underset{A  \in \mathcal{P}_{\epsilon},\,\,A\cap \{ \Omega_{2\epsilon}\backslash B_{2\epsilon}(\mathcal{S} )\} \,\neq\, \emptyset    }{\sum} 
	\,\,\,\,\, \Bigg\{\underset{(i,j) \in  E_K\,  \,\text{s.t}\, i,j \in J,\,  \text{ and } \,z_{i,j} \in A}{\sum}\,\,  \underset{B_{\epsilon}(z_{i,j}) }{\bigints}  \left\vert \|  \nabla g_{\epsilon}(z_{i,j}) \|_1  \,-\, \|  \nabla g_{\epsilon}(z) \|_1  \right\vert  \,dz \,\Bigg\} \frac{C_8\,N^d}{N}\\
	= :         T_1   \,\,+\,\, T_2.
	\end{array}
	\end{equation}
	
	Therefore  we proceed to bound $T_1$  and $T_2$. Let us assume that  $(i,j) \in  E_K\,  \,\text{with}\, i,j \in J,\,  \text{ and } \,z_{i,j} \in A$  with  $A  \in \mathcal{P}_{\epsilon}$. Then  by Lemma  \ref{lem:controlling_counts2}  we have two cases. Either  $h(x_i)$  and $h(x_j)$ are in the same cell (element of $\mathcal{P}_{\epsilon}$), or $h(x_i)$  and $h(x_j)$  are in adjacent  cells. Denoting by  $c(A^{\prime})$  the center of a cell $A^{\prime}  \in \mathcal{P}_{\epsilon}$,  then if   $z^{\prime}   \in  B_{\epsilon}(z_{i,j})  \cap  A^{\prime} $  it implies that
	\[
	\|  c(A^{\prime})  \,-\,  c(A)   \|_{\infty}  \, \,\leq\,\,  \|  c(A^{\prime})  \,-\, z^{\prime} \|_{\infty}   \,+\, \| z^{\prime}  \,-\,  z_{i,j} \|_{\infty}  \,+\,\| z_{i,j}  \,-\,  c(A) \|_{\infty}    \,\,\leq \,\,  2\epsilon.
	\]
	And if in addition $h(x_i)   \in A_i$  and  $h(x_j)   \in A_j$, then
	\begin{equation}
	\label{eqn:cA}
	\|  c(A_i)  \,-\,  c(A)    \|_{\infty}  \, \,\leq\,\,  \|  c(A_i)  \,-\,  h(x_i)\|_{\infty}  \,+\,   \|  h(x_i)  \,-\,  z_{i,j} \|_{\infty} \,+\,     \|  z_{i,j}  \,-\, c(A)\|_{\infty}   \,\,\leq \,\,  2\epsilon,
	\end{equation}	 
	and the same is true for $c(A_j)$.
	Hence,
	\[
	\def\arraystretch{2.3}
	\begin{array}{l}
	\underset{B_{\epsilon}(z_{i,j}) }{\bigints}  \|  \nabla g_{\epsilon}(z) \|_1   \,dz  \,\,\leq  \,\, \underset{ A^{\prime} \in  \mathcal{P}_{\epsilon}    \,:\,\, \|  c(A)  \,-\,  c(A^{\prime})   \|_{\infty}  \leq 2\epsilon}{\sum} \,\underset{A^{\prime} }{\bigints}  \|  \nabla g_{\epsilon}(z) \|_1   \,dz. 
	\end{array}
	\]
	Since the previous discussion was for an arbitrary $z_{i,j} $  with $i,j  \in J$,  we obtain that
	\[
	\def\arraystretch{2.3}
	\begin{array}{l}
	T_1\,\, \leq \,\,\underset{A  \in \mathcal{P}_{\epsilon},\,\,A\cap \{ \Omega_{2\epsilon}\backslash B_{2\epsilon}(\mathcal{S} )\} \,\neq\, \emptyset    }{\sum} 
	\,\,\,\,\, \Bigg\{\underset{(i,j) \in  E_K\,  \,\text{s.t}\, i,j \in J,\,  \text{ and } \,z_{i,j} \in A}{\sum}\,\,\,\,\,\,  \underset{ A^{\prime}  \in  \mathcal{P}_{\epsilon}    \,:\,\, \|  c(A)  \,-\,  c(A^{\prime})   \|_{\infty}  \leq 2\epsilon}{\sum} \,\,\underset{A^{\prime} }{\bigints}  \|  \nabla g_{\epsilon}(z) \|_1   \,dz  \,\Bigg\} \frac{2\,C_8\,N^d}{N}  \\
	\,\,\, \,\,\,\,\,\,\,\leq \,\, \frac{2\,C_8\,N^d}{N}\,\,  \underset{A  \in \mathcal{P}_{\epsilon} }{\sum} \, \left[\vert \{ A^{\prime}  \in  \mathcal{P}_{\epsilon}    \,:\,\, \|  c(A)  \,-\,  c(A^{\prime})   \|_{\infty}  \leq 2\epsilon \} \vert\right]^3  \,\,  \underset{ x  \in h^{-1}\{P_{\text{lat}}(N)\}  }{\max} \,\vert C(x)\vert^2\,\,\underset{A}{\bigints}  \|  \nabla g_{\epsilon}(z) \|_1   \,dz  \\
	\,\,\, \,\,\,\,\,\,\, \leq \,\,  C_9\, N^{d-1} \,\underset{ x  \in h^{-1}\{P_{\text{lat}}(N)\}  }{\max} \,\vert C(x)\vert^2 \underset{A  \in \mathcal{P}_{\epsilon} }{\sum} \underset{A}{\bigints}  \|  \nabla g_{\epsilon}(z) \|_1   \,dz,  \\
	\end{array}
	\]
	where  $C_9$  is positive constant depending on $d$ which can be obtained  with an entropy  argument similar to (\ref{eqn:entropy_argument}). As a result, from (\ref{eqn:first_bound})  we obtain 
	\begin{equation}
	\label{eqn:T1}
	T_1 \,\,\leq \,\,    C_2 C_9\, N^{d-1} \,\underset{ x  \in h^{-1}\{P_{\text{lat}}(N)\}  }{\max} \,\vert C(x)\vert^2.	   
	\end{equation}
	
	It remains to bound $T_2$ in (\ref{eqn:smooth}). Towards that goal, we observe that
	\[
	\begin{array}{lll}
	\underset{B_{\epsilon}(z_{i,j}) }{\int}  \left\vert \|  \nabla g_{\epsilon}(z_{i,j}) \|_1  \,-\, \|  \nabla g_{\epsilon}(z) \|_1  \right\vert  \,dz   &\leq & 	  \displaystyle \underset{B_{\epsilon}(z_{i,j}) }{\int}  \sum_{l=1}^d    \left\vert  \frac{\partial g_{\epsilon} }{\partial z_l}(z_{i,j}) \,-\, \frac{\partial g_{\epsilon}}{\partial z_l}(z)   \right\vert  \,dz    \\
	&=& \displaystyle \underset{B_{\epsilon}(z_{i,j}) }{\int}  \sum_{l=1}^d\Bigg\vert  \frac{1}{\epsilon^{d+1}}   \underset{B_{\epsilon}(0) }{\int}   \frac{\partial \psi}{\partial z_l}(z^{\prime}/\epsilon)  \left\{ g_{0}(z_{i,j}-z^{\prime})\,-\, g_{0}(z-z^{\prime})  \right\}   \,dz^{\prime} \Bigg\vert     dz \\
	& \leq  & \displaystyle \underset{B_{\epsilon}(z_{i,j}) }{\int}  \sum_{l=1}^d    \Bigg\vert  \frac{1}{\epsilon^{d}  \| z_{i,j} \,-\, z  \|_2 }   \underset{B_{\epsilon}(0) }{\int}   \frac{\partial \psi(z^{\prime}/\epsilon) }{\partial z_l} \{ g_{0}(z_{i,j}-z^{\prime})\,-\, \\
	& &\,\,\,\,\,\,\,\,\,\,\,\,\,\,\,\,\,\,\,\,\,\,\,\,\,\,\,g_{0}(z-z^{\prime})  \}   \,dz^{\prime} \Bigg\vert     dz, \\
	\end{array}
	\]
	which implies that for some $C_{10} >0$ 
	\[
	\underset{B_{\epsilon}(z_{i,j}) }{\int}  \left\vert \|  \nabla g_{\epsilon}(z_{i,j}) \|_1  \,-\, \|  \nabla g_{\epsilon}(z) \|_1  \right\vert  \,dz   \,\,\leq  \,\,    C_{10}  \,\epsilon^d  \,\underset{z   \in B_{\epsilon}(z_{i,j}) }{\sup}      \sum_{l=1}^d    \Bigg\vert     \underset{B_{\epsilon}(0) }{\int}   \frac{\partial \psi(z^{\prime}/\epsilon) }{\partial z_l} \frac{\{g_{0}(z_{i,j}-z^{\prime})\,-\, 
		g_{0}(z-z^{\prime})  \}   }{\epsilon^{d}  \| z_{i,j} \,-\, z  \|_2} \,dz^{\prime} \Bigg\vert \\
	\]
	and so
	\begin{equation}
	\label{eqn:t2}
	\def\arraystretch{2.3}
	\begin{array}{l}
	T_2    \\
	\leq \,\, \underset{A  \in \mathcal{P}_{\epsilon},\,\,A\cap ( \Omega_{2\epsilon}\backslash B_{2\epsilon}(\mathcal{S} )) \,\neq\, \emptyset    }{\sum} 
	\,\,\,\, \Bigg[\underset{(i,j) \in  E_K\, i,j \in J,\,\,z_{i,j} \in A }{\sum}\,  \,\,\,  \underset{z \in B_{\epsilon}(z_{i,j})  }{\sup} \, \displaystyle  \sum_{l=1}^d \,  \Bigg\vert  \underset{ \| z^{\prime}\|_2 \leq \epsilon }{\int}   \frac{\partial\psi(z^{\prime}/\epsilon)}{\partial z_{l}}  \left(\frac{ g_0(z_{i,j} - z^{\prime})\,-\,  g_0(z - z^{\prime})}{ \|z -z_{i,j}\|_2 \,\epsilon^{d} }\right) \,dz^{\prime}   \Bigg\vert        \,\Bigg]\\
	\,\,\,\,\,\,\cdot 2\, C_{10}\,C_8  \,\epsilon^d\,  N^{d-1}\\	
	=\,\,\underset{A  \in \mathcal{P}_{\epsilon},\,\,A\cap ( \Omega_{2\epsilon}\backslash B_{2\epsilon}(\mathcal{S} )) \,\neq\, \emptyset    }{\sum} 
	\,\,\,\, \Bigg[\underset{(i,j) \in  E_K\, i,j \in J,\,\,z_{i,j} \in A }{\sum}\,  \,\,\, T(g_0,z_{i,j})\epsilon^d     \,\Bigg]\cdot 2\, C_{10}\,C_8  \,\,  N^{d-1},\\
	\end{array}
	\end{equation}
	with $T(g_0, z)$ as in  Equation (\ref{summation3}) in the main paper. Now, if  $z_{i,j} \in A$, $h(x_i)   \in A_i$  and  $h(x_j)   \in A_j$, then just as in \eqref{eqn:cA} we obtain that
	\[
	\max\{  \|  c(A_i)  \,-\,  c(A)    \|_{\infty}, \|  c(A_j)  \,-\,  c(A)    \|_{\infty} \}  \,\leq \, 2 \epsilon.
	\]    	
	Hence, since by construction we also have $z_{i,j} \in  \Omega_{2\epsilon}\backslash B_{2\,\epsilon}(\mathcal{S} )$ then
	\[
	\begin{array}{lll}
	\underset{(i,j) \in  E_K\, i,j \in J,\,\,z_{i,j} \in A }{\sum}\,  \,\,\, T(g_0,z_{i,j}) &\leq & \left\vert  \{  \{ i,j\} \,:\,   \max\{  \|  c(A_i)  \,-\,  c(A)    \|_{\infty}, \|  c(A_j)  \,-\,  c(A)    \|_{\infty} \}  \,\leq \, 2 \epsilon    \}  \right\vert\\
	&& \, \underset{z_A \in A \cap ( \Omega_{2\epsilon}\backslash B_{2\,\epsilon}(\mathcal{S} ))  }{\sup} T(g_0,z_A),
	\end{array}
	\]
	which combined with \eqref{eqn:t2} implies that
	\begin{equation}
	\label{eqn:t2_2}
	\def\arraystretch{2.3}
	\begin{array}{lll}
	T_2  &	\leq& \,\, \underset{A  \in \mathcal{P}_{\epsilon},\,\,A\cap ( \Omega_{2\epsilon}\backslash B_{2\epsilon}(\mathcal{S} )) \,\neq\, \emptyset    }{\sum} 
	\,\,\,\, \Bigg[ \underset{z_A \in A \cap ( \Omega_{2\epsilon}\backslash B_{2\,\epsilon}(\mathcal{S} ))}{\sup}  \,  T(g_0,z_{i,j})\epsilon^d     \,\Bigg]\cdot 2\, C_{10}\,C_8  \,\,  N^{d-1}\, \cdot \\
	& &	\,\,\,\,\,\left\vert  \{  \{i,j\} \,:\,   \max\{  \|  c(A_i)  \,-\,  c(A)    \|_{\infty}, \|  c(A_j)  \,-\,  c(A)    \|_{\infty} \}  \,\leq \, 2 \epsilon    \}  \right\vert \\
	&	\leq& \,\,\,	  C_{11}\,\,  N^{d-1}  \underset{ x  \in h^{-1}(P_{\text{lat}}(N))  }{\max} \,\vert C(x)\vert^2,
	\end{array}
	\end{equation}
	for some positive constant $C_{11}$,  where the last inequality follows from Assumption \ref{as:sup}    and that fact that for every $A$  the set of pairs of cells  with centers within  distance $2\epsilon$ is constant.   
	Combining   (\ref{eqn:ineq2}), (\ref{eqn:T1}), (\ref{eqn:smooth}), (\ref{eqn:t2_2})  and Lemma \ref{lem:controlling_counts2},  we obtain  that   
	\begin{equation}
	\label{eqn:final_bound}
	\underset{(i,j) \in  E_K\,  }{\sum} \vert \theta_i^* \,-\, \theta_j^* \vert  \,=\, O_{\text{pr}}(\text{poly}(\log n)  \,n^{1-1/d}  ),
	\end{equation}
	when Assumptions \ref{as1}--\ref{as:sup}  hold.  

	To conclude the proof, we proceed to verify (\ref{eqn:final_bound})  when Assumptions \ref{as1}--\ref{as3}  hold  and $f_0$ satisfies Definition \ref{def:pie_lip}.
	Using the notation from before, we observe that  \eqref{eqn:prob}  still holds  and  for  $J$ in \eqref{eqn:J}  we have that
	\begin{equation}
	\label{eqn:pl}
	\begin{array}{lll}
	\underset{(i,j) \in  E_K\,  }{\sum} \vert \theta_i^* \,-\, \theta_j^* \vert  &  = &  \underset{(i,j) \in  E_K,\, i,j \in J  }{\sum} \vert \theta_i^* \,-\, \theta_j^* \vert   \,\,\,+\,\,\,\underset{(i,j) \in  E_K,\, i \notin J   \,\text{or}\,\,j \notin J    }{\sum} \vert \theta_i^* \,-\, \theta_j^* \vert\\
	& \leq&  \underset{(i,j) \in  E_K\, i,j \in J }{\sum} \vert f_0(x_i) \,-\, f_0(x_j) \vert   \,+\,\,2\,\|g_0\|_{L_{\infty}(0,1)^d} K\,\tau_d\,\tilde{n}.    \,\,\\
	\end{array}
	\end{equation}	
	So  the claim  follows from  combining Lemma \ref{lem:controlling_dist},  (\ref{eqn:counting}),  and the piecewise Lipschitz condition.
	
\end{proof}

\section{Proof of Theorem  \ref{thm:adaptivity_dimension}  }

\begin{proof}
	
	Recall
	\begin{equation}
	\label{mixture_model2}
	\begin{array}{lll}
	y_i   & =  &  \theta^*_i  \,\,+\,\,\varepsilon_i,\,\,\,\,\,\,i=1,\ldots,n,\\
	\theta^{*}_i   &  = & f_{0,z_i}(x_i),  \,\,\,   \\
	x_i & \sim &   p_{z_i}(x),  \\
	\text{pr}(z_i =l)& \sim& \pi_l^*,  \,\,\,\text{for}  \,\,l = 1,\ldots,\ell. 
	\end{array}
	\end{equation} 
	
	Next we let  $A_l \,=\,\{ i \in [n]\,:\, z_i = l \}$, and $n_l \,=\,\vert A_l \vert$ for $l = 1,\ldots, \ell$. Then by   Proposition 27 in \cite{von2010hitting} we have that the event
	\begin{equation}
	\label{eqn:event_n_l}
	\frac{n\pi_l^* }{2} \,\leq \, n_l  \,\leq \, \frac{3n\pi_l^*}{2},\,\,\,\,\,\,\text{for}  \,\,\,\,l = 1,\ldots, \ell,
	\end{equation} 
	happens with probability at least $1 \,-\,2l \exp(-n\min\{\pi_l^*\,:\,  l \in [\ell] \} /12)$. Therefore,  we will assume that  the event defined in (\ref{eqn:event_n_l}) holds.

	We proceed by conditioning on $\{z_i\}_{i=1}^{n}$, and for simplicity we  omit to make this conditioning explicit. 	For $i \in [n]$ we write $l(i)  := j \in [\ell]$ if $i \in A_j$. 
	We also introduce the following notation:
	\[
	\begin{array}{lll}
	N_K(x_i) &=& \left\{ \text{ set of nearest neighbors of} \, x_i\,\text{in the $K$-NN  graph constructed from  points } \,\{x_{i^\prime} \}_{i^\prime = 1,\ldots,n}  \right\},\\
	\tilde{N}_K(x_i) & = & \left\{    \text{ set of nearest neighbors of} \, x_i\,\text{in the $K$-NN  graph constructed from  points } \,\{x_{i^\prime} \}_{i^\prime \in A_{l(i)} }           \right\},\\
	\tilde{R}_{K,\max}^l &=&  \underset{ i  \in A_l}{\max}\,\underset{x \in \tilde{N}_K(x_i) }{\max} \,d_{\mathcal{X}}(x,x_i),
	\end{array}
	\]
	and we denote by $\nabla_{G_K^l}$  the incidence matrix  of the $K$-NN  graph corresponding to the points $\{x_i\}_{i \in A_l}$. 
	
	Next we observe that just as in the proof of Lemma \ref{lem:controlling_counts},
	\begin{equation}
	\label{eqn:high_prob}
	\textup{pr}\left\{\tilde{R}_{K,\max}^l\,>\,  \tilde{a}_l(K/n_l)^{1/d_l}  \right\} \,\,\leq  \,\, n_l\,\exp(-K/12),  		
	\end{equation}
	for some positive constant $\tilde{a}_l$. We write $\epsilon_l \,=\, \tilde{a}_l(K/n_l)^{1/d_l}$, and consider the sets
	\[
	\Lambda_l \,\,=\,\, \left\{ i \in A_l\,\,:\,\, \text{such that}  \,\, N_K(x_i) \,\,=\,\, \tilde{N}_K(x_i)  \right\}.
	\]
	Our goal is to use these sets in order to split the basic inequality for the $K$-NN-FL  into different processes corresponding to the different sets $\mathcal{X}_l$. To that end let us pick  $l \in [\ell]$.  We notice that  if $\partial  \mathcal{X}_l \,=\,  \emptyset $,  then  by Assumption \ref{as:partition} we have that  for all $x \in \mathcal{X}_l$  it holds that $B_{\epsilon}(x) \subset \mathcal{X}_l$ for small enough $\epsilon$. Hence, with high probability, $ N_K(x_i) \,\,=\,\, \tilde{N}_K(x_i)  $ for all $i \in A_l$.     
	
	Let  us now assume that 	$\partial \mathcal{X}_l   \,\neq \, \emptyset$. Let
	$i \in A_l$.  Then 
	\begin{equation}
	\label{eqn:lower_bound_step}
	\begin{array}{lll}
	\textup{pr}\left\{ x_i   \in  B_{\epsilon_l}(\partial \mathcal{X}_l )  \right\}    \,\,\geq \,\,  p_{l,\min} \, \mu_l\left\{   B_{\epsilon_l} \left(  \partial \mathcal{X}_l    \right)   \bigcap  \mathcal{X}_l   \right\}   \,\,\geq \,\,c^{\prime}_l\,\epsilon_l^{d_l},
	\end{array}
	\end{equation}
	where $p_{l,\min}   =  \min_{x \in \mathcal{X}_l}   p_l(x) $, and $c^{\prime}_l$ is a positive constant that exists   because 
	$\mathcal{X}_l$  satisfies  Assumption \ref{as2}. 
	On the other hand,   (\ref{eqn:measure_condition}) implies that for $i \in A_l$  we have
	\begin{equation}
	\label{eqn:upper_bound_step}
	\textup{pr}\left\{x_i   \in  B_{\epsilon_l}(\partial \mathcal{X}_l)  \right\}    \,\,\leq \,\,  p_{l,\max} \, \mu_l\left\{   B_{\epsilon_l} \left(  \partial \mathcal{X}_l     \right)   \bigcap  \mathcal{X}_l   \right\}   \,\,\leq \,\,p_{l,\max} \,\tilde{c}_l\,\epsilon_l, 
	\end{equation}
	where $p_{l,\max}   =  \max_{x \in \mathcal{X}_l}   p_l(x) $, and $\tilde{c}_l$ is a positive constant.
	
	Therefore,  combining  (\ref{eqn:connected}), (\ref{eqn:high_prob}), (\ref{eqn:lower_bound_step}) ,  and (\ref{eqn:upper_bound_step})    with  Lemma \ref{lem:binom_concentration}, we obtain that
	\begin{equation}
	\label{eqn:counts_outisde}
	\begin{array}{lll}
	\textup{pr}\left( n_l  -  \vert  \Lambda_l\vert   \,\leq \,   \frac{3}{2}p_{l,\max} \,\tilde{c}_l\,n_l\,\epsilon_l   \right)     & \geq&
	1\,\,-\,\,  p_{\max} \exp\left( -\frac{1}{12}c^{\prime}_l\,\tilde{a}_l^{d_l}\,K  \right)  \,\,-\,\,  n_l\,\exp(-K/12)\\
	& \geq &  
	1\,\,-\,\,  p_{\max} \exp\left( -\frac{1}{12}c^{\prime}_l\,\tilde{a}_l^{d_l}\,K  \right)  \,\,-\,\, \,\exp(-K/12  +  \log n ).\\
	\end{array}
	\end{equation}
	
	Next we see how the previous inequality can be used to  put an upper bound on the penalty term of the $K$-NN-FL. For any $\theta \in \mathbb{R}^n$,  we have
	\[
	\begin{array}{lll}
	2\| \nabla_{G_K}\theta\|_1  \,\,  = \,\,   \displaystyle   \sum_{l=1}^{\ell} \sum_{i=1}^{n_l} \sum_{j \in N_K(x_i)}  \,\vert \theta_i \,-\, \theta_j\vert \,\,=\,\,
	\sum_{l=1}^{\ell} \sum_{i=1}^{n_l} \sum_{j \in \tilde{N}_K(x_i)}  \,\vert \theta_i \,-\, \theta_j\vert  \,\,+\,\,  R(\theta), 
	\end{array}
	\]
	where 
	\begin{equation}
	\label{eqn:R_theta}
	\def\arraystretch{2.3} 
	\begin{array}{lll}
	\vert  R(\theta) \vert    & = &  \Bigg\vert      \displaystyle   \sum_{l=1}^{\ell} \sum_{i=1}^{n_l}\, \sum_{j \in N_K(x_i)}  \,\vert \theta_i \,-\, \theta_j\vert      \,\,-\,\,     \sum_{l=1}^{\ell} \sum_{i=1}^{n_l}\, \sum_{j \in \tilde{N}_K(x_i)}  \,\vert \theta_i \,-\, \theta_j\vert              \Bigg \vert \\
	&=&    \Bigg\vert      \displaystyle   \sum_{l=1}^{\ell} \sum_{i \in  [n_l] \backslash  \Lambda_l }\,\, \sum_{j \in N_K(x_i)}  \,\vert \theta_i \,-\, \theta_j\vert      \,\,-\,\,     \sum_{l=1}^{\ell} \sum_{i \in  [n_l] \backslash  \Lambda_l } \,\,\sum_{j \in \tilde{N}_K(x_i)}  \,\vert \theta_i \,-\, \theta_j\vert          \Bigg \vert    \\
	& \leq &   4\,\tau_d\, \|\theta\|_{\infty}\,K\, \displaystyle   \sum_{l=1}^{\ell}  \vert [n_l] \backslash  \Lambda_l\vert, 
	\end{array}
	\end{equation}
	where $\tau_d$ is a positive constant  depending only on $d$, and the second inequality follows from the well-known bound on the maximum degree of $K$-NN graphs, see Corollary 3.23 in \cite{miller1997separators}. Combining with 
	(\ref{eqn:counts_outisde}), we obtain that
	\begin{equation}
	\label{eqn:R_upper_bound}
	R(\theta  )  \,\,-\,\, R(\hat{\theta} )   \,\,\,=\,\,\, O_{\textup{pr}} \left[\left\{ (\log n)^{ \frac{1}{2}  }   \,+\,  K\,\lambda \right\}\,K^{1+1/d_l}\, \displaystyle   \sum_{l=1}^{\ell}   (  n \pi_l^* )^{1-1/d_l}     \right].
	\end{equation}
	
	Note that if $ \partial \mathcal{X}_l  \,=\,\emptyset $, then (\ref{eqn:R_upper_bound}) will still hold  as $R(\theta)   \,=\,0$ for  all $\theta \in \mathbb{R}^n$ with high probability.   
	
	To conclude the proof, we notice by the basic inequality that
	\[
	\begin{array}{lll}
	\| \theta^*  \,-\, \hat{\theta} \|_n^2   &   \leq&      \frac{1}{n} \varepsilon^T\left( \hat{\theta}  \,-\,\theta^*  \right)   \,\,+\,\, \frac{\lambda}{n} \left(  \| \nabla_{G_K}\theta^* \|_1   \,-\,   \| \nabla_{G_K}  \hat{\theta} \|_1    \right)\\
	&  = &   \displaystyle   \sum_{l=1}^{\ell}   \varepsilon_{ A_l }^T  \left( \hat{\theta}_{A_l}   \,-\,   \theta^*_{A_l} \right)    \,\,+\,\,  \sum_{l=1}^{\ell} \frac{\lambda}{2n} \left(  \| \nabla_{G_K^l}\theta^*_{A_l} \|_1   \,-\,   \| \nabla_{G_K^l}  \hat{\theta}_{A_l} \|_1    \right)   \,\,+\,\,  \frac{\lambda}{2n} \left\{ R(\theta^* )   \,-\, R(\hat{\theta})  \right\} ,      \\
	\end{array} 	      
	\]
	where the notation  $x_A$ denotes the vector $x$ with the coordinates with indices not in $A$ removed. The proof follows from  (\ref{eqn:R_upper_bound})   and from bounding each term 
	\[
	\frac{1}{n} \varepsilon_{ A_l }^T  \left( \hat{\theta}_{A_l}   \,-\,   \theta^*_{A_l} \right)    \,\,+\,\, \frac{\lambda}{n } \left( \| \nabla_{G_K^l}\theta^*_{A_l} \|_1   \,-\,   \| \nabla_{G_K^l}  \hat{\theta}_{A_l} \|_1\right),
	\]
	as in the proof of Theorem  \ref{thm:knn}.

\end{proof}

\section{Manifold Adaptivity Example  }
\label{appendix_example}

The following example suggests that the $\epsilon$-NN-FL estimator may not be  manifold adaptive.

\begin{example}
	\label{ex:non_adaptivity}
	For $\mathcal{X}=\mathcal{X}_1 \cup \mathcal{X}_2  \subset \mathbb{R}^3$, suppose that $\mathcal{X}_1   \,=\, [0,1]^2 \times \{c\}$  for some 
	$c < 0$,  and  $\mathcal{X}_2  \,=\, [0,1]^3$.  
	Note that  the sets $\mathcal{X}_1$  and $\mathcal{X}_2$  satisfy Assumption~\ref{as:partition}. 
	Let us assume that  all of the conditions of Theorem \ref{thm:adaptivity_dimension}  are met
	with $n_1  \,:=\, n \pi^*_1    \, \asymp\, n   $, and $n_2 \,:=\, n \pi^*_2  \,\asymp\, n^{3/4}$. Motivated by the scaling of $\epsilon$ in Theorem~\ref{thm:penalty}, 
	we let $\epsilon     \,\asymp\,     \text{poly}(\log n) / n^{1/t}$. We now consider two possibilities for $t$: $t \in (0,2]$, and $t>2$. 
	\begin{itemize}
		\item  For any $t \in  (0,2]$, and  for any positive constant $a \in (0,3/4)$,
		there exists a positive constant $c(a)$  such that $E(\|\hat{\theta}_\epsilon   \,-\,\theta^*   \|_ {n}^2) \geq c(a)\,n^{-1/4 - a} $
		for large enough $n$, where $\hat{\theta}_\epsilon$ is the $\epsilon$-NN-FL estimator, see proof below. 
		In other words, if $t<2$, then $\epsilon$-NN-FL does not achieve a desirable rate. 
		By contrast, with an appropriate  choice of  $\lambda$ and $K$,  the $K$-NN-FL estimator attains the rate 
		$n^{-1/2}  \,\asymp \, n_1^{1-1/2}/n + n_2^{1-1/3}/n  $  (ignoring logarithmic terms)  by Theorem \ref{thm:adaptivity_dimension}.  
		
		\item 		 If   $t >  2$,   then the minimum degree in  the $\epsilon$-graph of the observations in $\mathcal{X}_1$  
		will be at least $c(t)  n^{1-2/t}$,  with high probability, for some positive constant $c(t)$.  
		This has two consequences:
		\begin{enumerate}
			\item 
			
			Recall that the  algorithm from \cite{chambolle2009total} 
			has worst-case complexity $O(mn^2)$, where $m$ is the
			number of edges in the graph  \cite[although empirically the algorithm is typically much faster, ][]{wang2016trend}. Therefore, 
			if $t$ is too large, then the computations involved in applying the fused lasso to the $\epsilon$-NN graph
			may be too demanding. 
			\item
			In  a different  context,  \cite{el2016asymptotic}  argues that it is desirable for 
			geometric graphs (which generalize  $\epsilon$-NN  graphs)   to have degree  ${\rm poly}(\log n)$. This suggests that 
			using $t>2$ leads to an $\epsilon$-NN graph that is too dense. 
		\end{enumerate}
	\end{itemize}
\end{example}

Next, we proceed to justify the lower bound on the MSE  of $\hat{\theta}_\epsilon$  when $t \in (0,2]$ in Example \ref{ex:non_adaptivity}. 	We begin by noticing that if $i , j \in A_2$  with $i \neq j$,   then for the choice of $\epsilon  >0$ in this example,  we have that 
\[
\begin{array}{lll}
\textup{pr}\{  d_{\mathcal{X}} (x_i,x_j)  \,\leq  \,  \,\epsilon   \}  
\,\,    \leq \,\, p_{2,\max} \displaystyle \underset{ \mathcal{X}   }{\int } \, \textup{pr}\{ d_{\mathcal{X}} (x,x_j)  \,\leq  \,  \,\epsilon   \}  \mu_2(dx)     \,\,\leq \,\,  \tilde{k}_l\,\frac{ (\text{poly}(\log n))^3  }{n^{3/t}    }.    \\
\end{array}
\]
Let $  m  \asymp  n^{3/4   - a  } $,  and $j_1 <  j_2 <\ldots <  j_m$ elements of $A_2$.    Then   the event 
\[
\Lambda  \,\,=\,\,\cap_{s =1}^m     \{   d_{\mathcal{X}}(x_{j_s}, x_l )   > \epsilon,\,\,\,\forall  l  \in A_2 \backslash   \{j_s \}   \},
\]
satisfies
\[
\displaystyle   \textup{pr}\left(  \Lambda\right)   \,\,\geq \,\, 1 \,-\, 
c_2  \,n^{3/2  - 3/t  - a}\, ( \text{poly}(\log n))^{3},\,  
\]
for some positive constant $c_2$. Therefore,
\[
\displaystyle    \mathbb{E}\left\{ \sum_{i=1}^{n} (\theta_i^* \,-\, \hat{\theta}_{\epsilon,i} )^2   \right\} \,\,\geq \,\,  \mathbb{E}\left\{  \sum_{i=1}^{n} (\theta_i^* \,-\, \hat{\theta}_{\epsilon,i} )^2   \mid   \Lambda \right\}\,\textup{pr}(\Lambda)    \,\,\geq \,\,  m\,\sigma^2 [1 \,-\, 
c_s   \,n^{3/2  - 3/t  - a}\, \{ \text{poly}(\log n)\}^{3}]  \,\,\geq \,\,  C_1  \,  n^{3/4 -a}
\]
for some positive constant $C_1$  if $n$  is large enough.

\section{Proving Theorems \ref{thm:upper_bound}--\ref{thm:penalty} for $\epsilon$-NN   graphs}

We start by giving an overview  of how  the conclusion of Theorem 1 in the main paper can be obtained for $\epsilon$-NN  graphs. The general idea is described next. The first step  is to  obtain a lemma  that controls the maximum and minimum degrees of the $\epsilon$-NN  graph. 
\begin{lemma}
	\label{lem:degree_epsilon}
	\textbf{(See Proposition 29 in \cite{von2010hitting} ). } Suppose  that Assumptions \ref{as1}--\ref{as3} hold. Let  $d_{\min}$  and $d_{\max}$ denote  the minimum  and maximum degrees of an $\epsilon$-NN  graph, respectively. Then for all $\delta \in (0,1)$ we have that
	\[
	\begin{array}{lll}
	\text{pr}\left\{ d_{\max} \geq (1+\delta)n \epsilon^d c_{\max}    \right\}   & \leq&   \exp\left(   -\frac{  \delta^2 n\epsilon^d  c_{\max}    }{3}  \right),\\
	\text{pr}\left\{ d_{\max} \leq (1-\delta)n \epsilon^d c_{\min}    \right\}   & \leq&   \exp\left(   -\frac{  \delta^2 n\epsilon^d  c_{\min}    }{3}  \right),\\
	\end{array}
	\]	
	for positive constants  $c_{\min}$  and $c_{\max}$.
\end{lemma}

Next we present a lemma  controlling the maximum and minimum counts of the mesh. This is similar to Lemma \ref{lem:controlling_counts}. 

\begin{lemma}
	\label{lem:controlling_counts_epsilon}
	Let  $\epsilon \,\asymp\,  \log^{(1+2r)/d} n/ n^{1/d}   $, then  there exists $N$ such  that if
	\[
	N\,\,\asymp\,\,  \ceil[\Bigg]{\frac{1}{\epsilon} }  
	\]
	in the construction of $P_{\text{lat}}(N)$ defined in \eqref{eqn:lattice}, then the following properties hold:
	\begin{itemize}
		\item There exist positive constants $b_2^{\prime}$ and $b_2^{\prime}$ such that
		\begin{equation*}
		\begin{array}{lll}
		\textup{pr}\bigg\{\underset{x  \in  I(N) }{\max}\vert C(x)\vert   \,\,\geq \,\, (1+\delta)\,b_{2}^{\prime}\,\, \log^{1+2r} n  \bigg\}    &  \leq & N^d \exp\left(  -\frac{1}{3}\delta^2\,b_{2}^{\prime}\,\log^{1+2r} n   \right),\\
		\vspace{0.3in}
		\textup{pr}\bigg\{ \underset{x  \in  I(N) }{\min}\vert C(x)\vert   \,\,\leq \,\, (1-\delta)\,b_{1}^{\prime}\,\log^{1+2r} n  \bigg\}  &\leq  & N^d \exp\left(  -\frac{1}{3}\delta^2\,b_{1}^{\prime}\,\log^{1+2r} n   \right),
		\end{array}
		\end{equation*}
		for all $\delta \in (0,1)$.
		\item 	
		Define the event $\Omega$  as:  ``If  $x_i  \in C(x_i^{\prime})$ and $x_j \in C(x_j^{\prime})$  for $x_i^{\prime},x_j^{\prime} \in I(N)$ with $\|h(x_i^{\prime}) \,-\,h(x_j^{\prime}) \|_2 \,\leq\,  \,N^{-1} $, then $x_i$ and $x_j$  are connected in the $\epsilon$-NN graph". Then for some constant $c>0$,  
		$$ \textup{pr}\left(\Omega\right)  \,\, \geq \,\,  1\,\,-\,\,  n\,\exp(-c \log^{1+2r} n /3).$$
	\end{itemize}
\end{lemma}
The proof  of Theorem \ref{thm:upper_bound} results from  setting  $\epsilon  $ as in Lemma \ref{lem:controlling_counts_epsilon}, and then  following  the steps in the proof of Theorem 1 in the main document. Specifically,  we can follow   the proofs of Lemmas \ref{embedding}--\ref{lem:empirical_process_3}, and Theorem  \ref{thm:knn}, by replacing  $\nabla_{G_K}$ with $\nabla_{G_{\epsilon}}$, and  $K$ with  $c \log^{1+2r} n $  for a constant $c$.

To prove the conclusion of Theorem \ref{thm:penalty} for the $\epsilon$-NN  graph, we  need  lemmas  similar to those in Section \ref{sec:useful_lemma}. The first of these lemmas is given next.

\begin{lemma}
	\label{lem:controlling_counts2.2}
	Let  $\epsilon \,\asymp\,  \log^{(1+2r)/d} n/ n^{1/d}   $. Then  there exists $N$  in the construction of $G_{\text{lat}}$ with
	\[
	N\,\,\asymp\,\, \ceil{ \epsilon^{-1} },
	\]
	and positive constants $b^{\prime}_1$ and $b^{\prime}_2$ depending on $L_{\min}$, $L_{\max}$, $d$, $p_{\min}$, $c_{1,d}$, and $c_{2,d} $, such that
	\begin{equation}
	\begin{array}{lll}
	\textup{pr}\bigg\{ \underset{x  \in   h^{-1}(P_{\text{lat}}) }{\max}\vert C(x)\vert   \,\,\geq \,\, (1+\delta)\,c_{2,d}\,b^{\prime}_1\,\log^{(1+2r)} n \bigg\}   &  \leq &  N^d \exp\left(  -\frac{1}{3}\delta^2\,b^{\prime}_2\,c_{1,d}\,\log^{(1+2r)} n   \right),\\
	\textup{pr}\bigg\{ \underset{x  \in   h^{-1}(P_{\text{lat}}) }{\min}\vert C(x)\vert   \,\,\leq \,\, (1-\delta)\,c_{1,d}\,b^{\prime}_2\,\log^{(1+2r)} n  \bigg\} &\leq  &N^d \exp\left(  -\frac{1}{3}\delta^2\,b^{\prime}_2\,c_{1,d}\,\log^{(1+2r)} n   \right),
	\end{array}
	\end{equation}
	for all $\delta \in (0,1]$. Moreover, let  $\tilde{\Omega}$ denote the event:  ``For all  $i, j \in [n]$, if $x_i$ and $x_j$  are connected in the $\epsilon$-NN graph, then  $\|h(x_i^{\prime}) \,-\,h(x_j^{\prime}) \|_2 \,<\,  2\,N^{-1} $  where  $x_i  \in C(x_i^{\prime})$ and $x_j \in C(x_j^{\prime})$  with $x_i^{\prime},x_j^{\prime} \in I(N)$". Then 
	$$\textup{pr}\left(\tilde{\Omega} \right) \,\,  \geq  \,\,  1\,\,-\,\,  n\,\exp(-\log^{(1+2r)} n/3). $$
\end{lemma}

Using  Lemma \ref{lem:controlling_counts2.2},  we can obtain   the conclusions of Lemmas \ref{thm:pred_rate}--\ref{thm:aerr}
and  Theorem \ref{thm:penalty} for the $\epsilon$-NN  graph. This can be done with minor modifications to the proofs in Section \ref{sec:useful_lemma}.

\subsection*{Acknowledgements} JS is partially supported by NSF Grant DMS-1712996. DW is partially supported by NIH Grant DP5OD009145, NSF CAREER Award DMS-1252624, and a Simons Investigator Award in Mathematical Modeling of Living Systems.

\bibliographystyle{abbrvnat} 
\bibliography{graphfused}

\end{document}